\newcommand{\blind}{1}
\newcommand{\R}{\mathbb{R}}
\newcommand{\E}{\mathbb{E}}
\newcommand{\I}{\mathbb{I}}
\newcommand{\var}{\hbox{var}}
\newcommand{\balpha}{{\boldsymbol{\alpha}}}
\newcommand{\sign}{\operatorname{sgn}}
\newtheorem{theorem}{Theorem}[subsection]
\newtheorem{proposition}[theorem]{Proposition}
\newtheorem{lemma}[theorem]{Lemma}
\newtheorem{definition}{Definition}
\newtheorem{remark}{Remark}
\begin{document}

\def\spacingset#1{\renewcommand{\baselinestretch}%
{#1}\small\normalsize} \spacingset{1}


\if1\blind
{
  \title{\bf Robust leave-one-out cross-validation for high-dimensional Bayesian models}
  \author{Luca Alessandro Silva\hspace{.2cm}\\
    Department of Decision Sciences, Bocconi University
\hspace{.2cm}\\    and Giacomo Zanella\thanks{GZ acknowledges support from the European Research Council (ERC), through \textit{StG “PrSc-HDBayLe” grant ID 101076564.}}\hspace{.2cm}\\
    Department of Decision Sciences
    and BIDSA, Bocconi University
    }
\date{}
  \maketitle
} \fi

\if0\blind
{\ 

  \bigskip
  \bigskip
  \bigskip
  \begin{center}
    {\LARGE\bf Robust leave-one-out cross-validation for\\\vspace{5mm} high-dimensional Bayesian models}
\end{center}
  \medskip
} \fi

\bigskip
\begin{abstract}
Leave-one-out cross-validation (LOO-CV) is a popular method for estimating out-of-sample predictive accuracy. 
However, computing LOO-CV criteria can be computationally expensive due to the need to fit the model multiple times. 
In the Bayesian context, importance sampling provides a possible solution but classical approaches can easily produce estimators whose asymptotic variance is infinite, making them potentially unreliable. 
Here we propose and analyze a novel mixture estimator to compute Bayesian LOO-CV criteria.
Our method retains the simplicity and computational convenience of classical approaches, while guaranteeing finite asymptotic variance of the resulting estimators. 
Both theoretical and numerical results are provided to illustrate the improved robustness and efficiency. The computational benefits are particularly significant in high-dimensional problems, allowing to perform Bayesian LOO-CV for a broader range of models, and datasets with highly influential observations. 
The proposed methodology is easily implementable in standard probabilistic programming software and has a computational cost roughly equivalent to fitting the original model once.
\end{abstract}

\noindent%
{\it Keywords:}  Leave-One-Out Cross-Validation, Importance Sampling, Markov Chain Monte Carlo, Bayesian model comparison, Predictive distributions.

\spacingset{1.2} 

\section{Introduction}
\label{sec:intro}
Consider a Bayesian model with conditionally independent observations $y=(y_1,\dots,y_n)$ given a set of parameters $\theta$, and denote the resulting joint distribution of $\theta$ and $y$ as
\begin{align}\label{eq:conditionally_ind}
p(\theta,y)=p(\theta)\prod_{i=1}^np(y_i|\theta)\,.
\end{align}
Given some observed data $y$, the model yields a posterior distribution over the unknown parameters, $p(\theta|y)$, and a posterior predictive distribution at a new point $y_{new}$ given by
\begin{align}\label{eq:pred_def}
p(y_{new}|y)&=\int p(y_{new}|\theta)p(\theta|y)d\theta\,.
\end{align}
In various contexts, such as model comparison and selection, one is interested in quantifying the out-of-sample performances of such predictive distributions.
Assuming the existence of a true data-generating process $p^*$, a common measure of predictive performance is the expected log predictive density (ELPD) defined as
\begin{equation}\label{eq:elpd}
\hbox{ELPD}=
\int \log p(y_{new}|y) p^*(y_{new}) dy_{new}.
\end{equation}
Equivalently, one can think at the negative ELPD as the generalization error under a logarithmic loss or (up to additive constants) as the Kullback-Leibler divergence between $p^*$ and the predictive distribution in \eqref{eq:pred_def}. See e.g.\ \citet{WAIC1} and references therein.


The true data generating distribution $p^*$ is unknown in practice and predictive measures such as \eqref{eq:elpd} need to be estimated from observed data. A naive sample average estimate, however, would lead to double use of data. A better way to approximate \eqref{eq:elpd} is cross-validation (CV). In particular, leave-one-out (LOO) CV leads to an estimator 
of ($n$ times) the ELPD defined as
\begin{equation}
\label{eq:lppdloo}
\psi:=
\sum_{i=1}^{n}\log{p(y_i|y_{-i})}
=
\sum_{i=1}^{n}\log\left(\int p(y_i|\theta)p(\theta|y_{-i})d\theta\right)\,,
\end{equation}
where $y_{-i}=(y_{j})_{j\neq i}$, which constitutes a useful criteria to evaluate Bayesian predictive performances \citep{gel1994,vehtari2012survey}. 
See Section \ref{sec:model_eval} for a review on the topic and a discussion of alternative Bayesian model selection criteria.

The main focus of the paper is developing efficient computational methods to approximate $\psi$.
Naive approaches 
require to fit the original model $n$ times, one for each LOO dataset $y_{-i}$, thus being computationally infeasible.
When using Monte Carlo methods to perform computations, a classical solution is to draw samples from $p(\theta|y)$ only once and then resort to importance sampling to approximate each LOO posterior $p(\theta|y_{-i})$ \citep{gel1992}. 
However, as previously noted in the literature, the resulting estimators of $\{p(y_i|y_{-i})\}_{i=1}^n$ are often unreliable and can easily have infinite asymptotic variance \citep{per1997,epi2008}. 
Here we propose novel estimators of $\{p(y_i|y_{-i})\}_{i=1}^n$, based on a mixture representation of leave-one-out posteriors. 
Our method provides guarantees on the finiteness of the estimator's 
asymptotic variance and performs dramatically better than standard competitors in high-dimensional problems, where 
most alternative methodologies break down (see e.g.\ results in Sections \ref{sec:theory} and \ref{sec:sim}). 
Crucially, our methodology requires only a single additional sampling procedure and it can be trivially implemented in probabilistic programming languages, thus preserving the practicality and limited computational cost of previously proposed and widely used solutions \citep{gel1992,vehtari2016}, while offering drastically improved robustness to high-dimensional scenarios. 

More generally, our work supports recent evidence, both in the Bayesian and frequentist literature \citep{beirami2017optimal,rad2020scalable,giordano2019swiss,
paananen2021implicitly}, that LOO-CV criteria can be reliably approximated with a computational cost comparable to the one of a single model fit. 
In this sense LOO-CV can be computationally cheaper than $k$-fold CV by a factor of $k$, since the latter requires fitting $k$ separate models and is not easily amenable to the same importance sampling procedures as LOO-CV.
Such $k$-times speed-up can be crucial in the context of Bayesian computation with Monte Carlo methods where each model fitting can be expensive.
In this context, our work contributes to prevent one of the main factor limiting the applicability of Bayesian LOO-CV, i.e.\ the potential instability of classical estimators of $\{p(y_i|y_{-i})\}_{i=1}^n$.

The article is organized as follows: after briefly reviewing Bayesian model selection criteria in Section \ref{sec:model_eval}, we describe our proposed computational methodology and compare it to classical ones in Section \ref{sec:methodology}. Section \ref{sec:theory} provides some theoretical analysis of the resulting estimators, including a proof of finite asymptotic variance and a comparison to classical methods in high-dimensional regression contexts.
 Section \ref{sec:sim} provides numerical results that support the theoretical findings and illustrate the improved robustness both to the presence of model misspecification and to high-dimensionality of the parameter space.
Finally, Section \ref{sec:ext} discusses potential extensions of our methodology (e.g.\ different scoring rules or non conditionally-independent models). 
For notational brevity, throughout the paper we use the same letter $p$ to denote appropriate joint, marginal and conditional distributions of the model for $\theta$, $y$ and $y_{new}$, as done in \eqref{eq:conditionally_ind}-\eqref{eq:lppdloo}.
Similarly, we leave the dependence of $p(y_i|\theta)$ and $p(y_{new}|\theta)$ on additional covariates or other variables implicit in the notation.

\subsection{Predictive criteria for Bayesian model comparison}\label{sec:model_eval}
Bayesian model selection criteria are often divided into ones based on posterior model probabilities, such as Bayes factors and classical Bayesian model averaging \citep{hoeting1999bayesian}, and ones based more directly on predictive distributions \citep{box1980,gel1994,WAIC1,vehtari2012survey}.
In line with the classical tension between identification and estimation in model selection \citep{yang2005}, the two approaches have complementary roles, with the first class of methods being more naturally suited to model identification and the second to maximise predictive accuracy. 
Appealing features of predictive-based criteria include being more directly comparable across different models (including non-nested ones), and being typically less sensitive to prior specifications compared to Bayes factors, including vague priors as in e.g.\ Bartlett's paradox in Bayesian model selection
 \citealp{bartlett1957comment,lindley1957statistical,liang2008mixtures}. 
 The literature on the topic is vast and we refer to \citet{gel1994,vehtari2012survey} and references therein for an overview and some arguments in favour of Bayesian predictive measures and cross-validation criteria. 
 
While there exist various scoring functions to evaluate predictive distributions \citep{gne2011}, in this paper we focus on the logarithmic one as in \eqref{eq:elpd}, which is the unique 
 local and proper scoring rule \citep{ber1979} and the most widely used in practice. 
 See for example \cite{gelman2014understanding} for arguments in favour of using the ELPD metric in \eqref{eq:elpd} and its LOO-CV estimator in \eqref{eq:lppdloo}. 
 Beyond computing ELPD estimates as in \eqref{eq:lppdloo}, LOO predictive probabilities $\{p(y_i|y_{-i})\}_{i=1}^n$ are also of interest in themselves, as they allow to implement methodologies aimed at optimizing predictive performances such as Bayesian stacking \citep{yao2018using} 
or at identifying discording observations \citep{pettit1990conditional,weiss1998bayesian} to guide model improvements and refinements.

A direct alternative to CV is the use of information criteria, which can also be thought at as approximations to generalization losses or out-of-sample prediction measures \citep{Stone1977}. 
 In particular, the Widely Applicable Information Criteria (WAIC) has been shown to be asymptotically equivalent (as $n\to\infty$) 
to $\psi$ under weak assumptions \citep{WAIC1}. 
See also \citet{vehtari2016} for a comparison of WAIC and $\psi$. 
Note that, while classical information criteria based on point estimates \citep{AIC,schwarz1978} tend to be computationally much cheaper than CV, more elaborate Bayesian criteria such as the Deviance Information Criteria (DIC) \citep{DIC} and WAIC also require Monte Carlo samples from the posterior, thus being closer to Bayesian LOO-CV in terms of computational cost.

Other types of CV schemes, such as $k$-fold, are also often used instead of LOO and the best choice in terms of statistical properties is in general case-dependent. 
LOO tends to have smaller bias compared to $k$-fold with $k\ll n$, while the ordering among their variances is less obvious and more case-dependent, see \citet[Sec.5]{arlot2010} and references therein. 
While most results on statistical properties of CV estimators are in non-Bayesian settings, \citet{watanabe2009,WAIC1} provide bounds on the difference in expectation between $\psi/n$ and ELPD, as $n\to\infty$ with fixed dimensionality.  
More recently \citet[Coroll.4.3]{patil2021} prove consistency (uniformly w.r.t.\ hyper-parameters) of LOO-CV estimators of prediction error for high-dimensional Ridge regression. 
Their results suggest good statistical properties of $\psi$ for high-dimensional Bayesian linear regression models, being thus closer to the high-dimensional settings of Sections \ref{sec:theory} and \ref{sec:sim}.
Finally note that, while this work focuses on LOO-CV, Section \ref{sec:ext} discusses extensions to leave-$p$-out CV with $p>1$.

\section{Computing Bayesian leave-one-out cross validation}
\label{sec:methodology}
In this paper we focus on Monte Carlo methodologies to compute the LOO predictive probabilities $\{p(y_i|y_{-i})\}_{i=1}^n$. 
Depending on the context, these may be themselves the quantities of interest, or an intermediate step to compute LOO-CV criteria such as 
$
\psi
$
 defined in \eqref{eq:lppdloo}.
 In the latter case an estimate of $\psi$ is simply obtained by plugging-in the estimates of $p(y_i|y_{-i})$ in \eqref{eq:lppdloo}.

The first, somehow brute-force, approach to this computation would be to fit $n$ times the model separately.
Recalling that $p(y_i|y_{-i})=\int p(y_i|\theta)p(\theta|y_{-i})d\theta $, one could draw $S$ Monte Carlo samples from each LOO posterior $p(\theta|y_{-i})$, using e.g.\ $n$ separate MCMC runs, and then estimate $p(y_i|y_{-i})$ with the resulting sample average of $p(y_i|\theta)$. 
We denote the resulting estimators of $\mu_i:=p(y_i|y_{-i})$ as
\begin{equation}\label{eq:loo_est}
\hat{\mu}_i^{(loo)}
=
S^{-1}\sum_{s=1}^{S}p(y_i|\theta_s)\,,
\end{equation}
where $\theta_1, \theta_2,...,\theta_S$ are samples from $p(\theta | y_{-i})$.
Assuming the computational cost of each Monte Carlo sample to grow linearly with $n$, 
this would require $\Theta(Sn)$ samples and  $\Theta(Sn^2)$ computational cost in total, which is typically unfeasible.

A potential solution proposed in \citep{gel1992} is to instead draw only one set of samples from the full-data posterior, and then use importance sampling to approximate expectations with respect to the $n$ different LOO posteriors.
 This leads to unnormalized importance weights between the $i$-th LOO posterior and the full posterior equal to
\begin{equation*}
w_i^{(post)}(\theta)=p(y_i|\theta)^{-1}\propto \frac{p(\theta|y_{-i})}{p(\theta|y)}\,.
\end{equation*}
The corresponding self-normalized importance sampling estimator of $p(y_i|y_{-i})$ is
\begin{equation}\label{eq:post_est}
\hat{\mu}_i^{(post)}
=
\frac{
\sum_{s=1}^{S}p(y_i|\theta_s)w_i^{(post)}(\theta_s)
}{\sum_{s=1}^{S}w_i^{(post)}(\theta_s)
}
=
\left(S^{-1}\sum_{s=1}^{S}p(y_i|\theta_s)^{-1}\right)^{-1},
\end{equation}
where $\theta_1, \theta_2,...,\theta_S$ are samples from $p(\theta | y)$.
This procedure is practically appealing because it only requires one sampling routine and has $\Theta(Sn)$ total cost, including the computation of the $n$ estimators $\{\hat{\mu}_i^{(post)}\}_{i=1}^n$, each of which can be obtained at $\Theta(S)$ cost given the samples $\{\theta_s\}_{s=1}^S$ using definition \eqref{eq:post_est}.
The drawback is that the resulting importance sampling estimators can be unstable and even have infinite variance.
In such cases the estimators are still consistent, i.e.\ $\lim_{S\to\infty}\hat{\mu}_i^{(post)}=p(y_i|y_{-i})$ almost surely, but the 
asymptotic normality and the $S^{-1/2}$ rate of convergence may not hold \citep{epi2008}.
These issues are not surprising if one realizes that \eqref{eq:post_est} is a variation of the classical harmonic-mean estimator \citep{newton1994approximate}, which has well-known stability issues. 
This has motivated proposals in the literature to improve the stability of LOO-CV estimators as well as to diagnose their potential failure. A notable example that we compare with in simulations later on is the Pareto-smoothed importance sampling (PSIS) methodology of \citep{vehtari2016} implemented in the popular \emph{loo} R package \citep{loo_package}. 
See also \citet{alqallaf2001cross,bornn2010efficient,rischard2018unbiased,paananen2021implicitly} for other work in the area, and Section \ref{sec:alternatives} for comparison with some of those. 

\subsection{Mixture estimators}\label{sec:mix_prop}
Here we propose a different set of estimators for $\{p(y_i|y_{-i})\}_{i=1}^n$ with drastically improved robustness to high-dimensionality, which we achieve by expressing the problem in terms of mixtures rather than harmonic mean identities.
We introduce a component indicator $I$, formally a random variable on $\{1,\dots,n\}$, and define a joint distribution for $\theta$ and $I$ as 
\begin{align}\label{eq:joint_q_mix}
q_{mix}(\theta,I)=&\frac{p(\theta)p(y_{-I}|\theta)}{
\sum_{j=1}^np(y_{-j})}
&(\theta,I)\in\Theta\times \{1,\dots,n\}\,.
\end{align}
Here $q_{mix}$ is defined so that $q_{mix}(\theta|I=i)=p(\theta|y_{-i})$ and thus $p(y_{i}|y_{-i})$ can be written as the following conditional expectation
\begin{align}
 p(y_{i}|y_{-i})=\E_{(\theta,I)\sim q_{mix}}[p(y_i|\theta)|I=i]\,.
\end{align}
This representation 
 leads to our proposed set of estimators, which are obtained through the following steps:
\begin{itemize}
\item[(i)] draw $S$ samples $\theta_1,\theta_2,...,\theta_S$ from $ q_{mix}(\theta)$, where 
 \begin{align}
q_{mix}(\theta)=
Z^{-1}
\sum_{j=1}^n p(\theta)p(y_{-j}|\theta)  \propto p(\theta|y) \left(\sum_{j=1}^n p(y_{j}|\theta)^{-1}\right)\,, \label{eq:mixture}
\end{align}
is the marginal distribution of $\theta$ under the joint $q_{mix}(\theta,I)$ and $Z=\sum_{j=1}^n p(y_{-j})$. Sampling from \eqref{eq:mixture} can be done using standard MCMC algorithms, as discussed below;
\item[(ii)] for each $i\in\{1,\dots,n\}$, obtain weighted samples from $p(\theta|y_{-i})$ assigning to each sample in $\{\theta_s\}_{s=1,\dots,S}$  the weight
\begin{equation*}
w_i^{(mix)}(\theta)=q_{mix}(I=i|\theta)=\frac{p(y_i|\theta)^{-1}}{\sum_{j=1}^np(y_j|\theta)^{-1}}\,,
\end{equation*}
which is  the conditional probability of $I=i$ given $\theta$ under the joint distribution $q_{mix}(\theta,I)$;
\item[(iii)] for each $i\in\{1,\dots,n\}$, estimate $p(y_i|y_{-i})$ with
\begin{equation}\label{eq:mix_est}
\hat{\mu}_i^{(mix)}
=
\frac{
\sum_{s=1}^{S}p(y_i|\theta_s)w_i^{(mix)}(\theta_s)
}{\sum_{s=1}^{S}w_i^{(mix)}(\theta_s)
}\,.
\end{equation}
\end{itemize}

The estimator in \eqref{eq:mix_est} can also be interpreted as a self-normalized importance sampling estimator with importance distribution $q_{mix}(\theta)$ and target distribution $p(\theta|y_{-i})$, so that $w_i^{(mix)}(\theta)$ are unnormalized importance weights between target and importance distribution.
We often use this formulation when proving theoretical results in Section \ref{sec:theory}.

The proposed estimators $\{\hat{\mu}_i^{(mix)}\}_{i=1}^n$ retain the simplicity and computational practicality of the classical ones in \eqref{eq:post_est}. In fact a single sampling routine is required, this time from $q_{mix}(\theta)$, and the total computational cost to obtain the $n$ estimators $\{\hat{\mu}_i^{(mix)}\}_{i=1}^n$ is still $\Theta(Sn)$.
The latter follows from two crucial remarks. First, evaluating $q_{mix}(\theta)$ up to normalizing constant requires $\Theta(n)$ cost using the last expression in \eqref{eq:mixture}, see also \eqref{eq:log_mixture} in the Supplement. Note that a naive use of the first expression in \eqref{eq:mixture} would instead incur in a $\Theta(n^2)$ cost. Second, computing $\{\hat{\mu}_i^{(mix)}\}_{i=1}^n$ in \eqref{eq:mix_est} requires first an $\Theta(Sn)$ computation common to all $i$'s, namely the computation of 
$\{\sum_{j=1}^np(y_j|\theta_s)^{-1}\}_{s=1}^S$ and, given the latter, each $\hat{\mu}_i^{(mix)}$ can be computed at $\Theta(S)$ cost. 
See Section \ref{sec:eff_comp_estim} in the Supplement for more details.

Also, evaluating gradients of the log of the mixture distribution, $\nabla \log q_{mix}(\theta)$, involves an $\Theta(n)$ cost, analogously to gradients of the standard log-posterior $\nabla \log p(\theta|y)$, and the whole routine is trivial to implement in probabilistic programming languages that rely on gradient-based MCMC, such as \textsc{stan} \citep{stan}.
In our numerical experiments, sampling from $p(\theta|y)$ and $q_{mix}$ with \textsc{stan} required a comparable amount of time, with only a slight overhead for $q_{mix}$. 
See Section \ref{sec:sampling_mix_implem} in the Supplement for more details on efficient and numerically stable implementation of the sampling procedure.
A practical advantage of the estimators $\{\hat{\mu}_i^{(post)}\}_{i=1}^n$, however, is that they can re-use samples from $p(\theta|y)$ that might have already been drawn to perform posterior inferences.  
On the contrary, computing $\{\hat{\mu}_i^{(mix)}\}_{i=1}^n$ requires a separate sampling routine, specific for ELPD estimation purposes. 


\begin{remark}[Mixture interpretation]\label{rmk:mix}
The distribution $q_{mix}$ can be interpreted as a mixture of LOO posteriors writing $q_{mix}(\theta)=\sum_{i=1}^n\pi_i p(\theta|y_{-i})$,
with mixture probabilities 
$
\pi_i
=
Z^{-1}p(y_{-i})
$ satisfying $\sum_i \pi_i=1$ and $\pi_i\geq 0$.
Rewriting $\pi_i=\tilde{Z}^{-1}p(y_i|y_{-i})^{-1}$, with $\tilde{Z}=\sum_jp(y_j|y_{-j})^{-1}$, we can express the quantity of interest as $p(y_i|y_{-i})=\tilde{Z}^{-1}/\pi_i$.
Indeed, the denominator in \eqref{eq:mix_est} times $S^{-1}$ is a consistent estimator of $\pi_i$ while the numerator times $S^{-1}$ is a consistent estimator of $\tilde{Z}^{-1}$. 
 Thus, the algorithm is effectively estimating the probability $\pi_i$ of each component in the mixture representation and using that to estimate $p(y_i|y_{-i})$.
This is arguably where the improvement in performances of $\hat{\mu}_i^{(mix)}$ compared to $\hat{\mu}_i^{(post)}$ comes from, since mixture probabilities are typically easier to estimate than harmonic means. 
For example the weights $w_i^{(mix)}(\theta)$ are by construction upper bounded by $1$, being conditional probabilities, which is a desirable feature to improve robustness of importance sampling estimators.
\end{remark}

The idea of using mixtures to derive estimators with improved stability underlies various methodologies in the Monte Carlo literature, such as Bridge Sampling and variations \citep{bennett1976efficient,geyer1991estimating,meng1996simulating,shirts2008statistically}. In this sense, one can think at our proposed methodology as an effective and practical way to extend these techniques to LOO-CV computation contexts while preserving a $\Theta(Sn)$ total computational cost.

\begin{remark}[Choice of mixture probabilities]\label{rmk:weights}
Note that the mixture probabilities $\pi_i$ involve the intractable terms $p(y_{-i})$ that are typically not available in closed form.
However, these terms cancel with the denominator of $p(\theta|y_{-i})=p(\theta)p(y_{-i}|\theta)/p(y_{-i})$, making $q_{mix}(\theta)$ computable up to a single intractable normalizing constant $Z$ as in \eqref{eq:mixture} and thus amenable to standard sampling algorithms. 
Also, since $\pi_i\propto p(y_i|y_{-i})^{-1}$, $q_{mix}$ naturally puts more weight on mixture components with smaller $p(y_i|y_{-i})$. 
This is desirable since small values of $p(y_i|y_{-i})$ are typically harder to estimate and contribute more to $\psi=\sum_{i=1}^{n}\log{p(y_i|y_{-i})}$. 
In Sections \ref{sec:fin_var} and \ref{sec:ext} we discuss extensions to mixture constructions with general choices of mixture probabilities.
\end{remark}

\begin{remark}[Potential multimodality of $q_{mix}$]\label{rmk:multimod}
As mentioned above, sampling from $q_{mix}(\theta)$ or from $p(\theta|y)$ usually requires comparable computational effort.
The main circumstance where this is not true is when $q_{mix}$ features significant multimodality while $p(\theta|y)$ does not. This could happen because $q_{mix}$ is a mixture. 
However, unlike usual mixtures considered in statistical contexts, $q_{mix}$ is an average of a large number of similar distributions (i.e.\ the $n$ LOO posteriors) rather than a small number of well-separated ones.
In most situations, this averaging makes the marginal distribution $q_{mix}$ smooth and well behaved.
\end{remark}


\section{Analysis of the proposed estimator}\label{sec:theory}
 In this section we provide a theoretical analysis of the proposed estimators $\{\hat{\mu}_i^{(mix)}\}_{i=1}^n$ with particular focus on comparing them with the classical ones $\{\hat{\mu}_i^{(post)}\}_{i=1}^n$.
We measure the efficiency of estimators by their relative asymptotic variances, defined as 
\begin{equation}
\label{eq:as_var_def}
 AV^{(post)}_i=\lim_{S\to\infty}S\,\var (\hat{\mu}_i^{(post)}/\mu_i)
\quad\hbox{and}\quad
 AV^{(mix)}_i=\lim_{S\to\infty}S\,\var (\hat{\mu}_i^{(mix)}/\mu_i)\,,
\end{equation}
where $\mu_i=p(y_i|y_{-i})$ as before.
By the delta method we also have 
$$AV^{(post)}_i=\lim_{S\to\infty}S\,\var\big(\log (\hat{\mu}_i^{(post)})\big) \quad\hbox{and}\quad AV^{(mix)}_i=\lim_{S\to\infty}S\,\var\big(\log (\hat{\mu}_i^{(mix)})\big),$$ meaning that the above terms can also be interpreted as the asymptotic variances of the plug-in estimators on the log-scale, $\log (\hat{\mu}_i^{(post)})$ and $\log (\hat{\mu}_i^{(mix)})$.
Thus $\{AV^{(post)}_i\}_{i=1}^n$ and $\{AV^{(mix)}_i\}_{i=1}^n$ are a natural measure of performance when the quantities of interest are $\{\log(p(y_i|y_{-i}))\}_{i=1}^n$ or $\psi$ in \eqref{eq:lppdloo}. 

Note that the asymptotic variances in \eqref{eq:as_var_def} refer to the case when $(\theta_s)_{s=1}^{S}$ in \eqref{eq:post_est} and \eqref{eq:mix_est} are i.i.d.\ samples from, respectively, $p(\theta|y)$ and $q_{mix}(\theta)$.
In practice, one is rarely able to draw i.i.d.\ samples from such distributions and instead typically relies on MCMC schemes, leading to correlated samples. 
In such cases the asymptotic variances of the actual estimators used in practice can be decomposed as the product of an importance sampling contribution times an MCMC contribution, namely as the product of the asymptotic variances in \eqref{eq:as_var_def} times an MCMC integrated autocorrelation time, see e.g.\ Lemma 1 of \citet{zanella2019scalable}.  
Thus, while formally referring to the i.i.d.\ case, the asymptotic variances in \eqref{eq:as_var_def} are 
 relevant also to the case of MCMC sampling.

\subsection{Finiteness of asymptotic variances}\label{sec:fin_var}
As mentioned above, a serious drawback of the classical estimator is that its
asymptotic variance $AV^{(post)}_i$ can be very large, even infinite.
Indeed \citet{per1997,epi2008} provide various examples, even simple ones, where $AV^{(post)}_i$ is infinite. Our first key theoretical result states that, on the contrary, the proposed mixture estimators lead to finite asymptotic variances under minimal technical assumptions. 
In particular, we will only require that
\begin{equation}\label{eq:reg_assumptions}
\tag{A1}
p(y_i|y_{-i})>0\hbox{ and }\int_{\Theta}p(y_i|\theta)p(\theta|y)d\theta<\infty\hbox{ for all }i=1,\dots,n\,.
\end{equation}
The above assumptions require the quantity of interest $p(y_i|y_{-i})$ to be non-zero, otherwise $\log p(y_i|y_{-i})$ would not be well defined, and the predictive distribution $p(y_{new}|y)$ based on the full data to be finite at  $y_{new}=y_i$ for each $i$.
These are minimal assumptions that are typically satisfied for any model where LOO-CV quantities are of interest.
Given these, we can state the following result.
\begin{theorem}
\label{thm:mix_finite}
Under \eqref{eq:reg_assumptions} we have that $AV_i^{(mix)}<\infty$ for all $i=1,\dots,n$, and $MSE_i^{(mix)}:=\E[(\log(\hat{\mu}_i^{(mix)})-\log(\mu_i))^2]$ satisfies
\begin{align}\label{eq:MSE_log_scale}
MSE_i^{(mix)}
&=
AV_i^{(mix)}\times S^{-1}+\mathcal{O}(S^{-2})&\hbox{as }S\to\infty\,.
\end{align}
\end{theorem}
\color{black}
Theorem \ref{thm:mix_finite} holds also in the more general case where $q_{mix}$ in \eqref{eq:mixture} is replaced by a weighted version 
$q_{mix}^{(\balpha)}(\theta)=Z_{\balpha}^{-1}\sum_{i=1}^{n}\alpha_ip(y_{-i}|\theta)p(\theta)$ where $Z_{\balpha}=\sum_{i=1}^{n}\alpha_ip(y_{-i})$ and $\balpha=(\alpha_i)_{i=1}^n$ are arbitrary weights satisfying $\alpha_i\in(0,\infty)$ for all $i$. 
In the supplement we prove the result in such more general version.
See also Remark \ref{rmk:weights} and Section \ref{sec:ext} for more details on the practical relevance of the more general weighted mixture $q_{mix}^{(\balpha)}$.

Theorem \ref{thm:mix_finite} highlights a first sharp distinction between the classical and mixture estimators.
It shows that, under minimal assumptions,  $\hat{\mu}_i^{(mix)}$ enjoys finite asymptotic variance and thus its mean squared error (MSE) decays at the usual $\mathcal{O}(S^{-1})$ rate as $S\to\infty$. 
The latter follows from the fact that 
the squared bias of $\hat{\mu}_i^{(mix)}$ is of order $\mathcal{O}(S^{-2})$ and thus variance dominates the MSE for large $S$. 
On the contrary the MSE of classical estimators $\hat{\mu}_i^{(post)}$ may decay at a slower than $\mathcal{O}(S^{-1})$ rate, which is indeed observed in practice even for simple models, see e.g.\ Figure \ref{evolution:mse} in Section \ref{sec:dep_on_S}. In such situations, the improvement in efficiency between $\hat{\mu}_i^{(mix)}$ and $\hat{\mu}_i^{(post)}$ increases to infinity as $S\to\infty$.

Note that $AV^{(mix)}_i<\infty$ is a stronger requirement than $\var (\hat{\mu}_i^{(mix)})<\infty$ for fixed $S$, since the latter does not give guarantees on rate of decay with $S$. 
For example, methods based on truncation or smoothing of the importance weights, such as PSIS estimators mentioned above, 
lead to estimators with finite variance for fixed $S$ but whose MSE can decay at a slower than $\mathcal{O}(S^{-1})$ rate, see e.g.\ \citet{vehtari2022pareto} for more details.

\subsection{High-dimensional regression models}\label{sec:linear}
In this section we provide a more refined analysis of the behavior of $AV^{(post)}_i$  and $AV^{(mix)}_i$, focusing on high-dimensional regression models, first considering the linear case and then a more general regression context. 
Our results suggest that classical estimators are highly sensitive to high-dimensionality and their performances quickly deteriorate as the ratio $p/n$ increases, while the mixture estimator exhibits substantially improved robustness.

\subsubsection{Connection to Bayesian leverage and the impact of high-dimensionality}
Consider the regression model 
\begin{equation}
\label{modelcov}
\begin{gathered}
y_i|\theta \sim N(x_i^T\theta,\sigma^2)\qquad i=1,\dots,n\\
\theta\sim N(\theta_0,\Sigma)\,,\qquad\qquad\qquad
\end{gathered}
\end{equation}
where $x_i$ and $\theta$ indicate $p\times 1$ matrices of, respectively, covariates and parameters.
We assume the noise level $\sigma^2$ and the prior mean and covariance, $\theta_0$ and $\Sigma$, to be fixed and known.
For the linear model in \eqref{modelcov}, the finiteness of  $AV^{(post)}_i$ is elegantly related to the notion of Bayesian leverage.
Denoting by $X$ the $n\times p$ matrix of covariates, define the \emph{Bayesian hat matrix}, or \emph{Ridge hat matrix}, as
\begin{equation}\label{eq:hat_mat}
H=X(X^TX+\sigma^2\Sigma^{-1})^{-1}X^T\,,
\end{equation}
which collapses to the standard (frequentist) hat matrix in the flat prior case, i.e.\ when $\Sigma^{-1}=0$.
The diagonal element $H_{ii}$ represents the Bayesian leverage of the $i$-th observation.
Thus, a higher value of $H_{ii}$ indicates higher discrepancy between the full and LOO posteriors, $p(\theta|y)$ and $p(\theta|y_{-i})$, which in turn implies that the importance sampling estimator in \eqref{eq:post_est} can have poor behavior. 
The theorem below makes the connection precise. 
The connection between leverages and the finiteness of $AV^{(post)}_i$ was previously studied in \citet{per1997}.
The following result extends results therein, allowing for $p>n$ and using the notion of Bayesian leverage, rather than the frequentist one (which corresponds to $\Sigma^{-1}=0$).
\begin{theorem}
\label{thm:leverage}
Under \eqref{modelcov}, for each $i\in\{1,\dots,n\}$, we have $AV^{(post)}_i<\infty$ if and only if $H_{ii}<0.5$.
\end{theorem}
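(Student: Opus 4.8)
The plan is to convert the finiteness of $AV_i^{(post)}$ into the finiteness of a single one-dimensional Gaussian integral. First I would use that $\hat\mu_i^{(post)}$ in \eqref{eq:post_est} is a self-normalized importance sampling estimator of $\mu_i=\E_{p(\theta|y_{-i})}[p(y_i|\theta)]$ with proposal $p(\theta|y)$ and unnormalized weights $p(y_i|\theta)^{-1}\propto p(\theta|y_{-i})/p(\theta|y)$. The standard asymptotic-variance formula for such estimators, together with the relative scaling in \eqref{eq:as_var_def}, gives
\begin{equation*}
AV_i^{(post)}=\frac{\E_{p(\theta|y)}\!\big[p(y_i|\theta)^{-2}\big(p(y_i|\theta)-\mu_i\big)^2\big]}{\mu_i^2\,\big(\E_{p(\theta|y)}[p(y_i|\theta)^{-1}]\big)^2}\,,
\end{equation*}
with the convention that this equals $+\infty$ when the numerator does. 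Expanding the square and using the harmonic-mean identity $\E_{p(\theta|y)}[p(y_i|\theta)^{-1}]=\mu_i^{-1}$ (valid since $\mu_i>0$, which holds automatically under \eqref{modelcov}), a one-line computation reduces this to $AV_i^{(post)}=\mu_i^2\,\E_{p(\theta|y)}[p(y_i|\theta)^{-2}]-1$. Thus $AV_i^{(post)}<\infty$ if and only if the inverse second moment $\E_{p(\theta|y)}[p(y_i|\theta)^{-2}]$ is finite, and the rest of the proof is devoted to characterizing this.

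Next I would use Gaussian conjugacy: under \eqref{modelcov} the full posterior is $\theta\mid y\sim N(\hat\theta,V)$ with $V=(\sigma^{-2}X^TX+\Sigma^{-1})^{-1}$, so $x_i^T\theta\mid y\sim N(m_i,v_i)$ with $m_i=x_i^T\hat\theta$ and $v_i=x_i^TVx_i$. The key algebraic identity is
\begin{equation*}
v_i=x_i^TVx_i=\sigma^2\,x_i^T\big(X^TX+\sigma^2\Sigma^{-1}\big)^{-1}x_i=\sigma^2H_{ii}\,,
\end{equation*}
i.e.\ the posterior variance of the $i$-th linear predictor equals $\sigma^2$ times the Bayesian leverage $H_{ii}$ of \eqref{eq:hat_mat}. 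Writing $p(y_i|\theta)^{-2}=2\pi\sigma^2\exp\!\big(\sigma^{-2}(y_i-x_i^T\theta)^2\big)$ and setting $Z:=y_i-x_i^T\theta$, which under $p(\theta|y)$ is $N(y_i-m_i,\sigma^2H_{ii})$, we obtain $\E_{p(\theta|y)}[p(y_i|\theta)^{-2}]=2\pi\sigma^2\,\E\big[\exp(\sigma^{-2}Z^2)\big]$.

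Finally I would invoke the elementary fact that for $Z\sim N(a,\tau^2)$ one has $\E[\exp(sZ^2)]<\infty$ if and only if $s<1/(2\tau^2)$, with value $+\infty$ both when $s>1/(2\tau^2)$ and at the boundary (there the Gaussian integrand no longer decays), which follows by completing the square and reading off the sign of the coefficient of $Z^2$. Applying it with $s=\sigma^{-2}$ and $\tau^2=\sigma^2H_{ii}$ gives $\E_{p(\theta|y)}[p(y_i|\theta)^{-2}]<\infty$ iff $\sigma^{-2}<1/(2\sigma^2H_{ii})$ iff $H_{ii}<1/2$, the degenerate case $H_{ii}=0$ (where $Z$ is deterministic) being trivially consistent. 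Combined with the first step this yields $AV_i^{(post)}<\infty \iff H_{ii}<0.5$.

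The argument is mostly bookkeeping rather than deep analysis, and I expect the delicate points to be: (i) pinning down the normalization in the self-normalized importance sampling variance formula and justifying that $AV_i^{(post)}$ is genuinely $+\infty$ — not merely that the $S^{-1/2}$-CLT fails — when the relevant second moment diverges, for which one can lean on the analysis in \citet{epi2008}; and (ii) checking the matrix identity $x_i^TVx_i=\sigma^2H_{ii}$ and that $V$ is indeed the posterior covariance under \eqref{modelcov}. The only genuinely analytic ingredient, the Gaussian moment-generating threshold, is standard, and specializing to $\Sigma^{-1}=0$ recovers the classical leverage condition of \citet{per1997}.
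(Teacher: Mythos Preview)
Your proof is correct and reaches the same conclusion, but by a genuinely different route than the paper's. The paper stays in the full $p$-dimensional $\theta$-space: starting from
\[
AV_i^{(post)}+1=\int\Big(\frac{p(\theta|y_{-i})}{p(\theta|y)}\Big)^2 p(\theta|y)\,d\theta,
\]
it substitutes the Gaussian model to obtain an integral of the form $c\int\exp\{-\theta^TM\theta+\theta^Tv\}\,d\theta$ with $M=(2\sigma^2)^{-1}[X^TX-2x_ix_i^T+\sigma^2\Sigma^{-1}]$, so that finiteness is equivalent to $M$ being positive definite. Establishing the latter iff $H_{ii}<1/2$ is done via two matrix lemmas (a Sherman--Morrison formula for $[X^TX+\sigma^2\Sigma^{-1}-h^{-1}x_ix_i^T]^{-1}$ and a positive-definiteness characterization).

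You instead exploit that $p(y_i|\theta)$ depends on $\theta$ only through the scalar $x_i^T\theta$, so the relevant second moment reduces to a one-dimensional Gaussian MGF $\E[\exp(\sigma^{-2}Z^2)]$ with $Z\sim N(\cdot,\sigma^2H_{ii})$, using the clean identity $x_i^TVx_i=\sigma^2H_{ii}$ for the posterior covariance $V=(\sigma^{-2}X^TX+\Sigma^{-1})^{-1}$. The threshold $\sigma^{-2}<1/(2\sigma^2H_{ii})$ then drops out immediately. This is more elementary and makes the role of $H_{ii}$ transparent as (up to $\sigma^2$) the posterior variance of the $i$-th linear predictor; the paper's matrix lemma, by contrast, is stated for general $h\in(0,1)$ and could be reused for related thresholds, but for this theorem your argument is shorter. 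Your caveat~(i) about the asymptotic-variance formula being $+\infty$ when the second moment diverges is no more of a gap than in the paper, which uses the same delta-method expression \eqref{varpost} as its starting point.
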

The connection to Bayesian leverages provides useful insight in the behavior of the classical estimator in \eqref{eq:post_est} and in particular on its dependence with respect to the dimensionality of $\theta$ and the amount of prior shrinkage. 
Consider first the case of flat improper prior for $\theta$, corresponding to $p<n$ and $\Sigma^{-1}= 0$. In such case $H$ is the standard (frequentist) hat matrix and its trace satisfies $\sum_{i=1}^nH_{ii}=rk(X)$, where $rk(X)$ denotes the rank of $X$.
For linearly independent predictors we have $rk(X)=p$, which implies that $H_{ii}\geq p/n$ for at least one $i$. Thus, by Theorem \ref{thm:leverage}, as soon as $p\geq n/2$ some $AV^{(post)}_i$ must be infinite. 
When the entries of $X$ are random variables (r.v.s) with complex 
Gaussian distributions, it holds $H_{ii}\sim Beta(p,n-p)$, see Appendix A of \citet{chave2003bounded}. 
This provides a more refined description of leverages distribution under a random design assumption and further highlights the key role of the ratio $p/n$, since there $E[H_{ii}]=p/n$.
The same holds by symmetry for any random design that is exchangeable over rows of $X$ and gives $rk(X)=p$ almost surely.
This is consistent with our numerical experiments, where the performances of classical estimators quickly degrade as $p$ increases and degenerate when $p$ is of the same order as $n$.

More generally, when $\Sigma=\nu^2\I_p$, with $\I_p$ being the $p\times p$ identity matrix, 
each $H_{ii}$ is a strictly decreasing function of the so-called ridge regularization parameter $\lambda=\sigma^2\nu^{-2}$ and the trace of $H$ satisfies $\sum_{i=1}^nH_{ii}=\sum_{j=1}^{rk(X)}\frac{d_j^2}{d_j^2+\lambda}$, where $(d_j)_{j=1}^{rk(X)}$ are the singular values of $X$ \citep{walker1988influence}.
Thus, increasing the amount of prior regularization lowers the values of the Bayesian leverages, increasing the chances of having $AV^{(post)}_i<\infty$ for all $i$.
This is consistent with the intuition that stronger shrinkage and regularization decreases the sensitivity of the posterior to each single observation, making LOO-CV calculations potentially easier. 
Nonetheless, as illustrated in Figure \ref{fig:leverages} we see below, even under strong prior shrinkage the leverages $H_{ii}$ can be large when $p/n$ is large, leading to instability of classical estimators. 
 \begin{figure}[hbt]
\centering
\begin{tikzpicture}
 \node (img1) {\includegraphics[width=0.32\textwidth]{./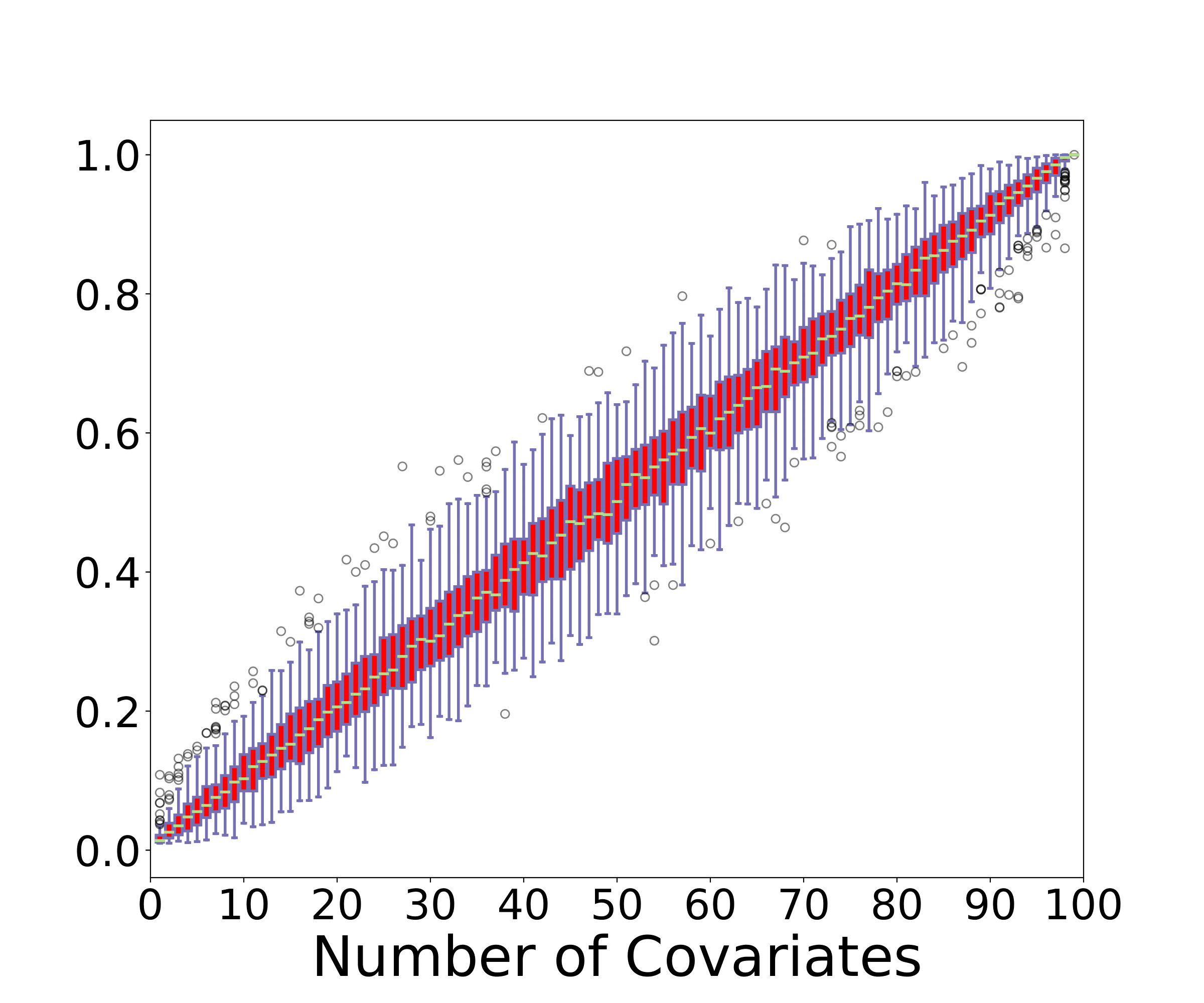}};
 \node[left=of img1, node distance=0cm, rotate=90, anchor=center,yshift=-1.1cm,font=\color{black}] {\small{Leverages}};
 \node[right=of img1, yshift=-0.0cm, xshift=-1.0cm](img2) {\includegraphics[width=0.32\textwidth]{./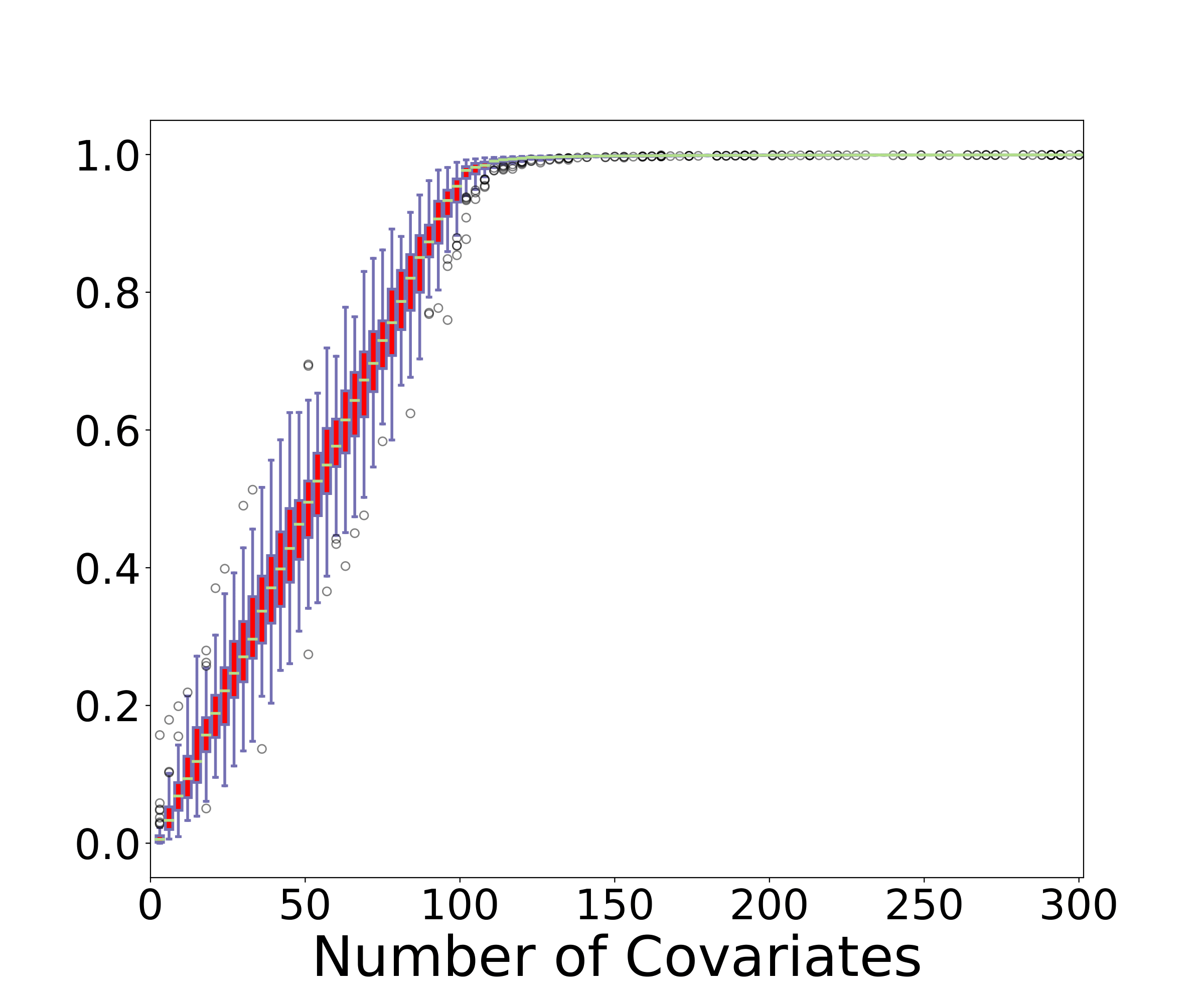}};
  \node[left=of img2, node distance=0cm, rotate=90, anchor=center,yshift=-1.1cm,font=\color{black}] {\small{Leverages}};
 \node[right=of img2, yshift=0.0cm, xshift=-1.0cm](img3) {\includegraphics[width=0.32\textwidth]{./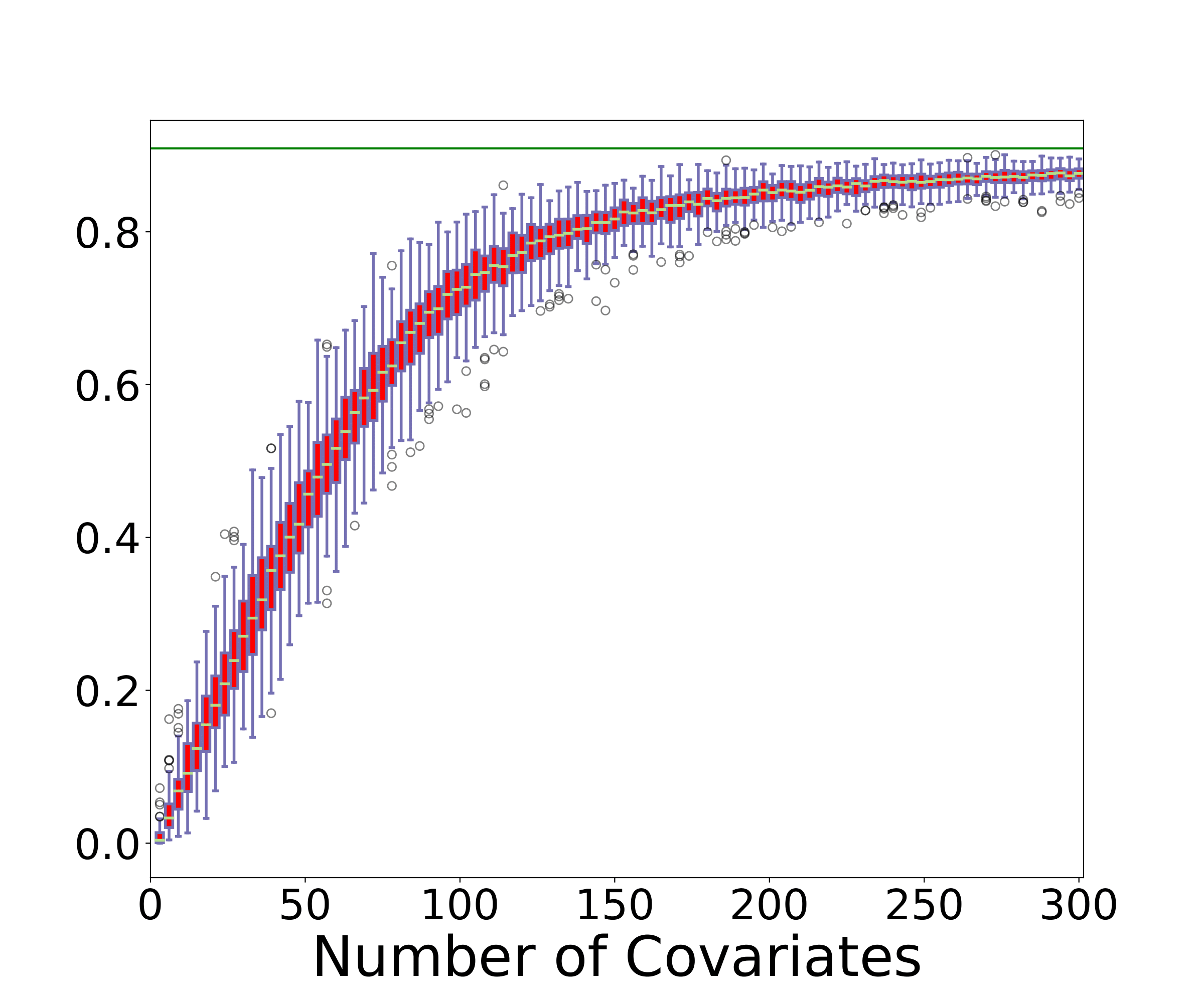}};
  \node[left=of img3, node distance=0cm, rotate=90, anchor=center,yshift=-1.1cm,font=\color{black}] {\small{Leverages}};
\end{tikzpicture}
 \caption{Distribution of the leverages $\{H_{ii}\}_{i=1,\dots,n}$ as a function of $p$ for $n=100$ and $(X_{ij})_{i,j}\stackrel{iid}\sim N(0,1)$. Left: $\Sigma^{-1}=0$, center: $\Sigma=10\cdot \I_p$, right: $\Sigma=10/p\cdot \I_p$.}
 \label{fig:leverages}
 \end{figure}

\subsubsection{Behavior of the classical and mixture estimators in large $p$ regimes}\label{sec:high_d_linear}
We now provide a high-dimensional asymptotic analysis of $AV^{(post)}_i$ and $AV^{(mix)}_i$ under random design assumptions. Specifically, we assume that
\begin{equation}
(X_{ij})_{i,j\geq 1}\hbox{ are independent 
 r.v.s with }E[X_{ij}]=0,\;Var(X_{ij})=\tau^2<\infty\hbox{ and }E[X_{ij}^4]\leq c_x
\label{eq:rand_des}
\tag{A2}
\end{equation}
for some $c_x<\infty$.
The assumption of zero mean and constant variance is realistic in settings where the regressors are standardized. Finiteness of fourth moments is used to derive appropriate strong law of large numbers for $XX^T$ without assuming identically distributed covariates. While the assumption of independence is potentially restrictive, it allows to derive more intuitive and explicit results. 
We expect our conclusions to hold well beyond such assumption but we leave such extensions (e.g.\ to cases of weak dependence among predictors, such as Assumption 3 in \citealp{fasano2019scalable}) to future work.

We consider settings where $p$ can be large. 
In such cases, it may be appropriate to assume the prior covariance of $\theta$ to vary with $p$.
An interesting and natural setting is to take $\Sigma=\nu_p^2\I_p$ with $\nu_p^2=c/p$ for some fixed $c>0$, which induces a prior variance of the linear predictors $\var(x_i^T\theta)=c(p^{-1}\sum_{j=1}^pX_{ij}^2)$ that is approximately constant w.r.t.\ $p$ and converges to the non-degenerate value $c\tau^2\in(0,\infty)$ as $p\to\infty$ under \eqref{eq:rand_des}. 
Other regimes considered in the literature are ones where $\nu_p^2$ is constant or where it scales as $\Theta(n/p)$. 
The following proposition characterizes the behaviour of $H_{ii}$ when $p\to\infty$ for all such cases, which can be obtained with different choices of $c$.
\begin{proposition}
\label{prop:high_p}
Assume \eqref{modelcov} and \eqref{eq:rand_des}, with $\Sigma=\nu_p^2\I_p$ and $\lim_{p\to\infty} p\nu_p^2=c\in[0,\infty]$. For each $i\in\{1,\dots,n\}$, we have
\begin{align}\label{eq:high_p}
H_{ii}&\;\to\;
\frac{c\tau^2}{\sigma^2+c\tau^2}\qquad \hbox{almost surely as }p\to\infty.
\end{align}
In the above convergence $n$ is fixed while $p\to\infty$, and $\frac{c\tau^2}{\sigma^2+c\tau^2}=1$ when $c=\infty$.
It follows that $AV^{(post)}_i=\infty$ almost surely for large enough $p$ if $c\tau^2>\sigma^2$, while $\limsup_{p\to\infty}AV^{(post)}_i<\infty$ almost surely if $c\tau^2<\sigma^2$.
\end{proposition}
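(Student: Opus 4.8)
The plan is to reduce the $p\times p$ inverse in the Bayesian hat matrix to a fixed-size computation and then apply a strong law of large numbers. Writing $\lambda_p=\sigma^2\nu_p^{-2}$, so that $H=X(X^TX+\lambda_p\I_p)^{-1}X^T$, I would first use the push-through identity $H=XX^T(XX^T+\lambda_p\I_n)^{-1}$, valid since $X(X^TX+\lambda_p\I_p)^{-1}X^T(XX^T+\lambda_p\I_n)=XX^T$ and $XX^T+\lambda_p\I_n$ is invertible ($\lambda_p>0$); this reduces everything to the asymptotics of the $n\times n$ matrix $XX^T$. Next I would show that, under \eqref{eq:rand_des}, $p^{-1}XX^T\to\tau^2\I_n$ almost surely (entrywise, hence in any matrix norm since $n$ is fixed): the diagonal entries are averages of the independent variables $X_{ij}^2$, with mean $\tau^2$ and variance $E[X_{ij}^4]-\tau^4\le c_x$, and the off-diagonal entries are averages of independent variables $X_{ij}X_{kj}$ with mean $0$ and variance $\tau^4\le E[X_{ij}^4]\le c_x$, so Kolmogorov's criterion $\sum_j j^{-2}\var(\cdot)<\infty$ applies in each case.

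With these two ingredients, I would take the limit using a normalization adapted to $c$. If $c\in(0,\infty]$, dividing numerator and denominator by $p$ gives $H=(p^{-1}XX^T)\big(p^{-1}XX^T+(p^{-1}\lambda_p)\I_n\big)^{-1}$; since $p^{-1}\lambda_p=\sigma^2/(p\nu_p^2)\to\sigma^2/c\in[0,\infty)$ and matrix inversion is continuous at the invertible limit $(\tau^2+\sigma^2/c)\I_n$, we get $H\to\frac{\tau^2}{\tau^2+\sigma^2/c}\I_n=\frac{c\tau^2}{c\tau^2+\sigma^2}\I_n$ a.s. If $c\in[0,\infty)$, normalizing by $\lambda_p$ instead gives $H=(\lambda_p^{-1}XX^T)(\lambda_p^{-1}XX^T+\I_n)^{-1}$ with $\lambda_p^{-1}XX^T=(p\nu_p^2/\sigma^2)\,p^{-1}XX^T\to(c\tau^2/\sigma^2)\I_n$, and continuity of inversion again yields $H\to\frac{c\tau^2}{c\tau^2+\sigma^2}\I_n$. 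The two regimes overlap on $(0,\infty)$ and together cover $[0,\infty]$, so in every case $H_{ii}\to\frac{c\tau^2}{\sigma^2+c\tau^2}$ a.s., with the stated conventions at the endpoints, proving \eqref{eq:high_p}.

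For the consequence on $AV^{(post)}_i$: the map $t\mapsto\frac{t\tau^2}{\sigma^2+t\tau^2}$ is increasing and equals $1/2$ exactly when $t\tau^2=\sigma^2$, so the limit of $H_{ii}$ is above $1/2$ iff $c\tau^2>\sigma^2$ and below $1/2$ iff $c\tau^2<\sigma^2$. If $c\tau^2>\sigma^2$, then $H_{ii}>1/2$ for all large $p$ a.s., and Theorem \ref{thm:leverage} gives $AV^{(post)}_i=\infty$ eventually. If $c\tau^2<\sigma^2$, then $H_{ii}<1/2$ eventually; to upgrade this to $\limsup_p AV^{(post)}_i<\infty$ I would invoke the explicit expression for $AV^{(post)}_i$ produced inside the proof of Theorem \ref{thm:leverage} — of the form $AV^{(post)}_i=\mu_i^2\,\var_{p(\theta|y)}\!\big(p(y_i|\theta)^{-1}\big)$, a continuous function of $H_{ii}$ on $[0,1/2)$ and of posterior quantities (e.g.\ the posterior mean of $x_i^T\theta$) that themselves converge a.s.\ by the same SLLN — so that convergence of $H_{ii}$ to a point in $[0,1/2)$ forces $AV^{(post)}_i$ to a finite limit. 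I expect the main (mild) obstacles to be exactly these last two points: cleanly covering the degenerate endpoints $c\in\{0,\infty\}$, which the adaptive normalization handles; and the $\limsup$ claim, which needs the functional form of $AV^{(post)}_i$ in $H_{ii}$ rather than just the finiteness dichotomy of Theorem \ref{thm:leverage}.
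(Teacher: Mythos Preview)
Your proof is correct and follows essentially the same strategy as the paper: reduce $H$ to a function of the fixed-size matrix $XX^T$ (you use the push-through identity $H=XX^T(XX^T+\lambda_p\I_n)^{-1}$, the paper uses the equivalent Woodbury form $H=\lambda_p^{-1}XX^T-\lambda_p^{-1}XX^T(\I_n+\lambda_p^{-1}XX^T)^{-1}\lambda_p^{-1}XX^T$), apply Kolmogorov's SLLN to obtain $p^{-1}XX^T\to\tau^2\I_n$ a.s., and then invoke Theorem~\ref{thm:leverage}. Your explicit case split for $c\in\{0,\infty\}$ via adaptive normalization, and your recognition that the $\limsup$ claim requires the explicit Gaussian-integral value of $AV^{(post)}_i$ rather than just the finiteness dichotomy of Theorem~\ref{thm:leverage}, are in fact more careful than the paper's own proof, which treats both points implicitly.
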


The statement about $AV^{(post)}_i$ being eventually infinite for a large enough $p$ when $c\tau^2> \sigma^2$ is a direct consequence of \eqref{eq:high_p} and Theorem \ref{thm:leverage}. 
This is coherent with the numerical simulations of Section \ref{sec:sim}, where the classical estimator eventually breaks down as $p/n$ increases.
The condition $c\tau^2> \sigma^2$ is satisfied for most common prior specifications.
It is obviously satisfied when $\nu$ is constant since $c=\infty$ there.
Under stronger prior shrinkage where $\nu_p^2=c/p$ with $c<\infty$, one typically sets $c$ to some value that is significantly larger that the noise variance $\sigma^2$, to avoid overly informative priors for the linear predictors $x_i^T\theta$, and thus $c\tau^2> \sigma^2$ will typically hold also there. 
Finally, the condition $c\tau^2> \sigma^2$ can be directly interpreted as a comparison between prior and likelihood information, in particular as requiring the latter to be stronger than the former. 

Taking the limit for $p\to\infty$ when $n$ is fixed mimics a regime where $p$ is large compared to $n$. 
As shown in the simulations of Section \ref{sec:sim}, such regime is highly challenging for Monte Carlo methods performing LOO-CV computations, the intuition being that the discrepancy among LOO posteriors is maximal in such regime. 

We now study the behaviour of $AV^{(mix)}_i$ in settings similar to Proposition \ref{prop:high_p}.
We first consider the case where $c<\infty$.
\begin{theorem}
\label{thm:limsup_fin}
Assume \eqref{modelcov} and \eqref{eq:rand_des}, with $\Sigma=\nu_p^2\I_p$ and $\lim_{p\to\infty} p\nu_p^2=c\in[0,\infty)$. Then we have $\limsup_{p\to\infty}AV^{(mix)}_i<\infty$ almost surely for every $i\in\{1,\dots,n\}$.
\end{theorem}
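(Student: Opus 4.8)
The plan is to bound $AV^{(mix)}_i$ by a quantity that is deterministic given $(X,y)$ and then show this quantity stays bounded as $p\to\infty$ almost surely. Recall the self‑normalized importance sampling expression for the asymptotic variance used in the proof of Theorem~\ref{thm:mix_finite}: with $\rho_i(\theta)=p(\theta|y_{-i})/q_{mix}(\theta)$,
\[
AV^{(mix)}_i=\E_{q_{mix}}\!\left[\rho_i(\theta)^2\Big(\frac{p(y_i|\theta)}{\mu_i}-1\Big)^2\right].
\]
As computed in Remark~\ref{rmk:mix}, $\rho_i(\theta)=\mu_i\tilde Z\, w_i^{(mix)}(\theta)$ with $\tilde Z=\sum_{j=1}^n p(y_j|y_{-j})^{-1}$, and $w_i^{(mix)}(\theta)\le 1$ since it is a conditional probability. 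Pulling out one factor of $\rho_i$ and changing measure to $p(\theta|y_{-i})$ gives
\[
AV^{(mix)}_i\;\le\;\mu_i\tilde Z\,\E_{p(\theta|y_{-i})}\!\left[\Big(\frac{p(y_i|\theta)}{\mu_i}-1\Big)^2\right]
\;=\;\tilde Z\left(\frac{\E_{p(\theta|y_{-i})}\!\big[p(y_i|\theta)^2\big]}{\mu_i}-\mu_i\right)
\;\le\;\tilde Z\,\frac{\E_{p(\theta|y_{-i})}\!\big[p(y_i|\theta)^2\big]}{\mu_i},
\]
using $\E_{p(\theta|y_{-i})}[p(y_i|\theta)]=\mu_i$. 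So it suffices to bound the ratio $\E_{p(\theta|y_{-i})}[p(y_i|\theta)^2]/\mu_i$ uniformly in $p$ and to control $\tilde Z$.

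\textbf{The ratio is bounded uniformly in $p$.} Under \eqref{modelcov} everything is Gaussian: writing $m_i=x_i^T\hat\theta_{-i}$, $s_i^2=x_i^T V_{-i}x_i\ge 0$ for the LOO posterior mean/variance of the linear predictor and $r_i=y_i-m_i$ for the LOO residual, a one‑line Gaussian computation gives $\mu_i=N(r_i;0,\sigma^2+s_i^2)$ and $\E_{p(\theta|y_{-i})}[p(y_i|\theta)^2]=(2\sqrt{\pi\sigma^2})^{-1}N(r_i;0,\sigma^2/2+s_i^2)$. Their ratio equals $(2\sqrt{\pi\sigma^2})^{-1}\sqrt{(\sigma^2+s_i^2)/(\sigma^2/2+s_i^2)}\,\exp(-\kappa r_i^2)$ for some $\kappa>0$; the exponential factor is $\le 1$ and the variance ratio is $\le 2$ because $s_i^2\ge 0$, so the ratio is $\le(2\pi\sigma^2)^{-1/2}$, a constant independent of $p$. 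Hence $AV^{(mix)}_i\le \tilde Z/\sqrt{2\pi\sigma^2}$ deterministically, for every $i$ simultaneously.

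\textbf{Controlling $\tilde Z$.} Since $n$ is fixed it suffices to show $\limsup_{p\to\infty}\mu_j^{-1}<\infty$ a.s. for each $j$, where $\mu_j^{-1}=\sqrt{2\pi(\sigma^2+s_j^2)}\exp\!\big(r_j^2/(2(\sigma^2+s_j^2))\big)$; that is, I need $\limsup_p s_j^2<\infty$ and $\limsup_p|r_j|<\infty$ a.s. For $s_j^2$: the LOO posterior precision dominates $\nu_p^{-2}I_p$, so $V_{-j}\preceq\nu_p^2 I_p$ and $s_j^2\le\nu_p^2\|x_j\|^2=(p\nu_p^2)\,p^{-1}\sum_{k=1}^p X_{jk}^2$; under \eqref{eq:rand_des} Kolmogorov's SLLN (using $\var(X_{jk}^2)\le c_x$) gives $p^{-1}\sum_{k=1}^p X_{jk}^2\to\tau^2$ a.s., so $s_j^2\to c\tau^2<\infty$. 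For $r_j$: use the standard leave‑one‑out residual identity for ridge/Bayesian linear regression, $r_j=[(I-H)(y-X\theta_0)]_j/(1-H_{jj})$ with $H$ the Bayesian hat matrix \eqref{eq:hat_mat}; the numerator is bounded because $0\preceq H\preceq I$ forces $|H_{j\ell}|\le 1$ with $n$ finite and (in the standardized setting underlying \eqref{eq:rand_des}, e.g.\ $\theta_0=0$, or more generally with $\max_\ell|y_\ell-x_\ell^T\theta_0|$ bounded in $p$) the vector $y-X\theta_0$ has bounded entries, while the denominator is bounded away from $0$ since Proposition~\ref{prop:high_p} gives $H_{jj}\to c\tau^2/(\sigma^2+c\tau^2)<1$ a.s. Combining, $\limsup_p\mu_j^{-1}<\infty$ a.s., hence $\limsup_p\tilde Z<\infty$ a.s., and therefore $\limsup_{p\to\infty}AV^{(mix)}_i\le(2\pi\sigma^2)^{-1/2}\limsup_p\tilde Z<\infty$ a.s.\ for every $i$.

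\textbf{Main obstacle.} The algebraic reduction and the Gaussian‑integral bound on $\E_{p(\theta|y_{-i})}[p(y_i|\theta)^2]/\mu_i$ are routine; the real content is the uniform‑in‑$p$ control of the leave‑one‑out residuals in the last step. This is exactly where Proposition~\ref{prop:high_p} is indispensable: the naive bound $H_{jj}\le\sigma^{-2}\nu_p^2\|x_j\|^2\to c\tau^2/\sigma^2$ only keeps $1-H_{jj}$ away from $0$ when $c\tau^2<\sigma^2$, so the sharp limit $c\tau^2/(\sigma^2+c\tau^2)<1$ is needed to handle all $c\in[0,\infty)$. A secondary point requiring care is the regime of $\theta_0$ (and of the fixed data $y$) as $p\to\infty$, which the centered/standardized design assumption renders harmless; making this rigorous in full generality is the only genuinely delicate part of the argument.
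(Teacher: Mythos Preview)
Your argument is correct and yields the result, but it differs from the paper's route. The paper does not give a direct Gaussian proof: it observes that \eqref{modelcov} is a special case of \eqref{eq:glm_model} and then invokes part~(a) of Theorem~\ref{thm:limsup_fin_general}. That proof starts from essentially the same inequality you derive (their bound \eqref{eq:av_mix_bound_2} rewritten as \eqref{eq:av_mix_bound_3} is your $AV_i^{(mix)}\le\tilde Z\bigl(\mu_i+\mu_i^{-1}\E_{p(\theta|y_{-i})}[p(y_i|\theta)^2]\bigr)$ up to the harmless extra $+\mu_i$), but then controls each factor by passing to the linear predictors $\eta=X\theta$, using $A_p=\nu_p^2XX^T\to c\tau^2\I_n$ and dominated convergence to show that $p(y)$, $p(y_{-j})$ and $\int p(y_i|\theta)p(\theta|y)d\theta$ converge to finite positive limits. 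This buys generality: the same argument covers arbitrary bounded likelihoods $g$.

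Your route is instead fully explicit and Gaussian-specific: you bound the ratio $\E_{p(\theta|y_{-i})}[p(y_i|\theta)^2]/\mu_i$ by the universal constant $(2\pi\sigma^2)^{-1/2}$ via direct Gaussian integration, and then control $\tilde Z$ through the ridge leave-one-out residual identity combined with Proposition~\ref{prop:high_p}. This is more elementary and gives a cleaner deterministic bound $AV_i^{(mix)}\le(2\pi\sigma^2)^{-1/2}\tilde Z$, at the price of not extending to \eqref{eq:glm_model}. Note that the paper's dominated convergence argument only uses $p^{-1}XX^T\to\tau^2\I_n$ and boundedness of $g$, so it avoids the LOO residual formula entirely; your reliance on the sharp limit $H_{jj}\to c\tau^2/(\sigma^2+c\tau^2)$ is a genuine additional ingredient compared to their proof. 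The caveat you flag about $\theta_0$ (needing $\|y-X\theta_0\|$ bounded in $p$, e.g.\ $\theta_0=0$) is real, but the paper's proof of Theorem~\ref{thm:limsup_fin_general} makes the same implicit assumption when it writes $p(\eta)=N(\eta;0,A_p)$, so you are not losing anything relative to the paper here.
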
 
Compared to Theorem \ref{thm:mix_finite}, which guarantees that $AV^{(mix)}_i<\infty$ for every fixed dataset  and thus for every $p$, Theorem \ref{thm:limsup_fin} proves the stronger statement that each $AV^{(mix)}_i$ is also uniformly bounded with respect to $p$, suggesting that mixture estimators are remarkably robust to high-dimensionality of the parameter space.

\subsubsection{More general regression models}\label{sec:high_d_glm}
We now extend some of the results derived above for the Gaussian model \eqref{modelcov} to more general regression contexts. The results suggest that the improved robustness of $\hat{\mu}_i^{(mix)}$ compared to $\hat{\mu}_i^{(post)}$, especially in high-dimensions, is not specific to Gaussian likelihoods but rather it holds more generally. We consider regression models with general likelihood and Gaussian prior, where
\begin{equation}
\begin{gathered}
\theta\sim N(\theta_0,\Sigma)\,,\qquad\qquad\qquad
\\
p(y|\theta)
=
\prod_{i=1}^n
g(y_i|\eta_i)\,,
\qquad
\hbox{ where }\eta_i=x_i^T\theta 
\hbox{ for }i=1,\dots,n\,,
\end{gathered}\label{eq:glm_model}
\end{equation}
and $g(\cdot|\cdot):\R\times\R\to[0,\infty)$ is a generic likelihood function. The above formulation includes generalized linear models (GLM's) with Gaussian prior.
Throughout, we assume the likelihood to be upper bounded, i.e.\ $\sup_{\eta_i}g(y_i|\eta_i)<\infty$ for any fixed $y_i\in\R$. The latter is arguably a mild assumption that is typically satisfied in practice.
\begin{theorem}
\label{thm:limsup_fin_general}
Assume \eqref{eq:glm_model} and \eqref{eq:rand_des}, with $\Sigma=\nu_p^2\I_p$ and $\lim_{p\to\infty} p\nu_p^2=c\in[0,\infty)$. Then we have that almost surely, for each $i\in\{1,\dots,n\}$:\\
(a) $\limsup_{p\to\infty}AV^{(mix)}_i<\infty$\\
(b) $\limsup_{p\to\infty}AV^{(post)}_i<\infty$ if 
\begin{equation}\label{eq:light_lik}
\int\exp\left(-\delta \eta_i^2
\right)g(y_i|\eta_i)^{-1}d\eta_i<\infty\,,
\end{equation}
for some $\delta<(2c\tau^2)^{-1}$, while $AV^{(post)}_i=\infty$ for large enough $p$ if the integral in \eqref{eq:light_lik} is equal to infinity for some $\delta>(2c\tau^2)^{-1}$.
\end{theorem}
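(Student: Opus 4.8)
The plan is to work throughout in the linear-predictor space: since the likelihood in \eqref{eq:glm_model} depends on $\theta$ only through $\eta=X\theta\in\R^n$, pushing the Gaussian prior forward turns the full posterior, the LOO posteriors and $q_{mix}$ into functionals of the vector $X\theta_0$ and the $n\times n$ matrix $V_p:=\nu_p^2XX^T$, with $p(y_i|y_{-i})=\E_{p(\eta|y_{-i})}[g(y_i|\eta_i)]$. First I would record the self-normalized importance-sampling variance formulas for the quantities in \eqref{eq:as_var_def}. Because the density ratio of the $i$-th LOO posterior to the full posterior equals $\mu_i/g(y_i|\eta_i)$, one gets that $AV_i^{(post)}$ is finite if and only if $\E_{p(\theta|y_{-i})}[g(y_i|\eta_i)^{-1}]<\infty$, in which case $AV_i^{(post)}=\mu_i\,\E_{p(\theta|y_{-i})}[g(y_i|\eta_i)^{-1}]-1$ (a general identity which, in the Gaussian case \eqref{modelcov}, reduces to the $H_{ii}<1/2$ criterion of Theorem \ref{thm:leverage}). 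Because $w_i^{(mix)}(\theta)=q_{mix}(I=i|\theta)\le 1$ is a conditional probability, with $\pi_i$ as in Remark \ref{rmk:mix},
\[
AV_i^{(mix)}\ \le\ \pi_i^{-1}\big(\mu_i^{-2}\,\E_{p(\theta|y_{-i})}[g(y_i|\eta_i)^2]-1\big)\ \le\ \Big(\sup_{\eta\in\R}g(y_i|\eta)\Big)\sum_{j=1}^n\mu_j^{-1}\,,
\]
using $\E_{p(\theta|y_{-i})}[g(y_i|\eta_i)^2]=\mu_i\,\E_{p(\theta|y)}[g(y_i|\eta_i)]\le\mu_i\sup_\eta g(y_i|\eta)$, the identity $\pi_i^{-1}\mu_i^{-1}=\sum_j\mu_j^{-1}$, and the standing boundedness of $g$.

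Part (a) then reduces to showing $\liminf_{p\to\infty}\mu_j>0$ almost surely for every $j$. The strong law of large numbers applied entry-wise to $V_p$ (using $\E[X_{ij}^2]=\tau^2<\infty$ on the diagonal, and $\E[X_{ij}X_{kj}]=0$ with $\var(X_{ij}X_{kj})=\tau^4<\infty$ for $i\ne k$) gives $V_p\to c\tau^2\I_n$ on an almost sure event; assuming $X\theta_0\to\eta_0$ for some $\eta_0\in\R^n$ (e.g.\ $\theta_0=0$), the pushforward prior $N(X\theta_0,V_p)$ converges to the Gaussian with independent $N(\eta_{0,k},c\tau^2)$ coordinates. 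Since $g$ is bounded, dominated convergence yields $p(y)\to\prod_k\mu_k^{\infty}$ and $p(y_{-j})\to\prod_{k\ne j}\mu_k^{\infty}$ with $\mu_k^{\infty}:=\int N(\eta;\eta_{0,k},c\tau^2)\,g(y_k|\eta)\,d\eta$, hence $\mu_j=p(y)/p(y_{-j})\to\mu_j^{\infty}>0$, the positivity following from \eqref{eq:reg_assumptions} (which forces $g(y_j|\cdot)>0$ on a set of positive Lebesgue measure). Plugging into the display above gives $\limsup_{p\to\infty}AV_i^{(mix)}\le\big(\sup_\eta g(y_i|\eta)\big)\sum_j(\mu_j^{\infty})^{-1}<\infty$.

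For part (b) I would refine the description of $\rho_p(\eta_i):=p(\eta_i|y_{-i})$, the LOO-posterior marginal of $\eta_i$. An upper Gaussian tail bound is immediate from boundedness of $g$: $\rho_p(\eta_i)\le\big(\prod_{k\ne i}\sup_\eta g(y_k|\eta)\big)p(y_{-i})^{-1}\,N\big(\eta_i;(X\theta_0)_i,(V_p)_{ii}\big)$, with $(V_p)_{ii}\to c\tau^2$ and $p(y_{-i})\to\prod_{k\ne i}\mu_k^{\infty}>0$. For a matching lower bound I would write $\rho_p(\eta_i)=\int N(\eta_i;m_p(\eta_{-i}),v_p)\,p(\eta_{-i}|y_{-i})\,d\eta_{-i}$, where Gaussian conditioning makes $v_p$ (a Schur complement of $V_p$) tend to $c\tau^2$ and $m_p$ affine in $\eta_{-i}$ with slope tending to $0$ and intercept to $\eta_{0,i}$, and restrict the integral to a compact $K$ carrying at least half the mass of $p(\eta_{-i}|y_{-i})$ for large $p$, which exists because $p(\eta_{-i}|y_{-i})$ converges in total variation (Scheff\'e's lemma) to the tight product limit $\prod_{k\ne i}p^{\infty}(\eta_k|y_k)$. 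Together this gives, for any fixed $\epsilon>0$ and all large $p$, $\rho_p(\eta_i)\le C\exp(-\beta_p\eta_i^2)$ with $\beta_p\ge(2c\tau^2)^{-1}-\epsilon$ and $\rho_p(\eta_i)\ge c'\exp(-\beta_p'\eta_i^2)$ with $\beta_p'\le(2c\tau^2)^{-1}+\epsilon$ on $\{|\eta_i|\ge R\}$, while $\rho_p$ is bounded below by a positive constant on $\{|\eta_i|\le R\}$. One then compares $\E_{p(\theta|y_{-i})}[g(y_i|\eta_i)^{-1}]=\int\rho_p(\eta_i)g(y_i|\eta_i)^{-1}d\eta_i$ with the hypothesis: if \eqref{eq:light_lik} is finite for some $\delta<(2c\tau^2)^{-1}$, pick $\epsilon$ with $(2c\tau^2)^{-1}-\epsilon>\delta$, so the $\{|\eta_i|\ge R\}$ part is dominated by $\int\exp(-\delta\eta_i^2)g(y_i|\eta_i)^{-1}d\eta_i<\infty$ and the $\{|\eta_i|<R\}$ part is finite since $\int_{|\eta_i|<R}g(y_i|\eta_i)^{-1}d\eta_i\le e^{\delta R^2}\int\exp(-\delta\eta_i^2)g(y_i|\eta_i)^{-1}d\eta_i$, giving a bound uniform in large $p$ and hence $\limsup_pAV_i^{(post)}<\infty$; if instead \eqref{eq:light_lik} is infinite for some $\delta>(2c\tau^2)^{-1}$, pick $\epsilon$ with $(2c\tau^2)^{-1}+\epsilon<\delta$, so either the divergence is local (then $\int_{|\eta_i|<R}g(y_i|\eta_i)^{-1}d\eta_i=\infty$ and the lower bound on $\rho_p$ on compacts already forces $\E_{p(\theta|y_{-i})}[g(y_i|\eta_i)^{-1}]=\infty$) or it is in the tails (then $\int_{|\eta_i|\ge R}\rho_p g^{-1}\ge c'\int_{|\eta_i|\ge R}\exp(-\delta\eta_i^2)g(y_i|\eta_i)^{-1}d\eta_i=\infty$), so $AV_i^{(post)}=\infty$ for $p$ large. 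Here $c\in(0,\infty)$; the collapsed regime $c=0$, where the predictor prior concentrates at $\eta_0$, is handled by a separate, simpler argument.

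The step I expect to be the main obstacle is the lower Gaussian sandwich on $\rho_p(\eta_i)$ uniformly for large $p$: whereas the upper bound is essentially free once $g$ is bounded, the lower bound requires ensuring that the LOO posterior on $\eta_{-i}$ stays tight, so that the conditional mean $m_p(\eta_{-i})$ is bounded on a set of probability bounded away from $0$, which is precisely where the total-variation convergence $p(\eta_{-i}|y_{-i})\to\prod_{k\ne i}p^{\infty}(\eta_k|y_k)$, itself resting on Scheff\'e's lemma and on $V_p\to c\tau^2\I_n$, is used. The remaining work is bookkeeping: carrying all the almost sure convergences onto a single almost sure event, making the rates $\beta_p,\beta_p'\to(2c\tau^2)^{-1}$ explicit so that the ``for large enough $p$'' quantifier in part (b) is justified, and disposing of the degenerate case $c=0$.
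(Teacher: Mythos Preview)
Your argument is correct, and for part (a) as well as the finiteness direction of part (b) it is essentially the paper's proof: the bound $AV_i^{(mix)}\le\pi_i^{-1}\big(\mu_i^{-1}\E_{p(\theta|y)}[g(y_i|\eta_i)]\big)$ is a repackaging of the paper's inequality \eqref{eq:av_mix_bound_2}, and both proofs then reduce to $V_p=\nu_p^2XX^T\to c\tau^2\I_n$ plus dominated convergence to obtain $\mu_j\to\mu_j^\infty\in(0,\infty)$; likewise, your upper Gaussian bound on $\rho_p(\eta_i)$ via $\sup_\eta g(y_k|\eta)$ is exactly the paper's bound $\int\frac{\prod_{j\ne i}g(y_j|\eta_j)}{g(y_i|\eta_i)}p(\eta)d\eta\le\big(\prod_{j\ne i}\sup g\big)\int g(y_i|\eta_i)^{-1}p(\eta_i)d\eta_i$ after marginalizing.

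The genuine difference is in the divergence direction of part (b). You work with the LOO-posterior marginal $\rho_p(\eta_i)$ and build a lower Gaussian sandwich on it, which forces you through the Schur-complement conditional representation, the convergence $p(\eta_{-i}|y_{-i})\to\prod_{k\ne i}p^\infty(\eta_k|y_k)$ in total variation via Scheff\'e, and a tightness/compact-set argument to control $m_p(\eta_{-i})$; you correctly identify this as the main obstacle. The paper sidesteps it entirely by lower-bounding the \emph{prior} density $p(\eta)=N(\eta;0,V_p)$ rather than the posterior marginal: since $V_p^{-1}/2\to(2c\tau^2)^{-1}\I_n$ and $\delta>(2c\tau^2)^{-1}$, eventually $p(\eta)\ge(2\pi c\tau^2)^{-n/2}\exp(-\delta\|\eta\|^2)$ pointwise, whence the $n$-fold integral $\int\frac{\prod_{j\ne i}g(y_j|\eta_j)}{g(y_i|\eta_i)}p(\eta)d\eta$ is bounded below by the product $\big(\prod_{j\ne i}\int g(y_j|\eta_j)e^{-\delta\eta_j^2}d\eta_j\big)\int g(y_i|\eta_i)^{-1}e^{-\delta\eta_i^2}d\eta_i=\infty$. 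This is shorter and removes the need for your local/tail case split and the posterior-tightness machinery; on the other hand, your route is more modular (it isolates the one-dimensional object $\rho_p$ whose tails govern $AV_i^{(post)}$) and would adapt more readily if one wanted sharper two-sided control of $\rho_p$ itself.
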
 

Theorem \ref{thm:limsup_fin_general} extends the results of Section \ref{sec:high_d_linear} to generic likelihoods. 
Namely $AV^{(mix)}_i$ is shown to remain bounded away from infinity as $p$ grows, while $AV^{(post)}_i$ is shown to become eventually equal to $\infty$ when \eqref{eq:light_lik} does not hold, i.e.\ provided the likelihood function has light enough tails.
 In the Gaussian likelihood case, \eqref{eq:light_lik} coincides with requiring $\sigma^2>c\tau^2$, which directly relates to Proposition \ref{prop:high_p} and the discussion thereafter. 
Condition \eqref{eq:light_lik} also relates to the study of $AV^{(post)}_i$ under thick-tail or light-tail priors in \citet{epi2008}, although there the opposite scenario is considered where the likelihood is Gaussian and the prior is general and no asymptotic regime is considered.

Finally, we consider the case where the prior variance of the linear predictors diverges with $p$, i.e.\ $\lim_{p\to\infty} p\nu_p^2=\infty$.
This happens for example when $\nu_p^2$ remains constant as $p\to\infty$.
In this case $AV_{i}^{(mix)}$ can also diverge as $p\to\infty$, depending on the tail behavior of the likelihood function. 
The underlying reason is that in such cases the LOO predictive probabilities $p(y_i|y_{-i})$ go to $0$ as $p\to\infty$ and even the asymptotic variance of the LOO estimators $\hat{\mu}_i^{(loo)}$, which we regard as the gold-standard but computationally expensive approach, diverge.
We denote $AV_{i}^{(loo)}=\lim_{S\to\infty}S\,\var (\hat{\mu}_i^{(loo)}/\mu_i)$ in the next theorem.
\begin{theorem}
\label{thm:av_loo_general}
Assume \eqref{eq:glm_model} and \eqref{eq:rand_des}, with $\Sigma=\nu_p^2\I_p$ and $\lim_{p\to\infty} p\nu_p^2=\infty$. Then: 
\\
(a) 
if 
$\int g(y_i|\eta_i)d\eta_i<\infty$ for $i=1,\dots,n$
then 
 $\lim_{p\to\infty}AV_{i}^{(loo)}=\lim_{p\to\infty}AV_{i}^{(mix)}=\infty$ almost surely for $i=1,\dots,n$;\\
(b) if
\begin{equation}\label{eq:pos_lim}
\lim_{\eta_i\to\infty}g(y_i|\eta_i)+\lim_{\eta_i\to -\infty}g(y_i|\eta_i)\in(0,\infty)\qquad \hbox{for }i=1,\dots,n 
\end{equation}
then $\limsup_{p\to\infty}AV^{(mix)}_i<\infty$ and  $\limsup_{p\to\infty}AV^{(loo)}_i<\infty\;\;$ almost surely as $p\to\infty$ for $i=1,\dots,n$. If \eqref{eq:pos_lim} holds and $\lim_{\eta_i\to\infty}g(y_i|\eta_i)=0$ or $\lim_{\eta_i\to -\infty}g(y_i|\eta_i)=0$ for $i=1,\dots,n$,
then $\lim_{p\to\infty}AV_i^{(post)}=\infty$.
\end{theorem}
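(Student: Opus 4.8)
The plan is to reduce each of the three asymptotic variances to a one–dimensional integral involving the likelihood slice $g(y_i\,|\,\cdot)$ and the marginal density $\rho_i$ of the linear predictor $\eta_i=x_i^T\theta$ under the leave-one-out posterior $p(\theta\,|\,y_{-i})$, exploiting that the likelihood depends on $\theta$ only through $\eta_i$. Writing $\mu_i=\int g(y_i|\eta)\rho_i(\eta)\,d\eta$, one gets
\[
AV_i^{(loo)}=\mu_i^{-2}\!\int g(y_i|\eta)^2\rho_i(\eta)\,d\eta-1 ,\qquad
AV_i^{(post)}=\mu_i\!\int g(y_i|\eta)^{-1}\rho_i(\eta)\,d\eta-1 .
\]
For $AV_i^{(mix)}$ I would use the self-normalized importance-sampling form of \eqref{eq:mix_est} together with the bound $p(\theta|y_{-i})/q_{mix}(\theta)=w_i^{(mix)}(\theta)/\pi_i\le 1/\pi_i=\mu_i\sum_k\mu_k^{-1}$ from Remark \ref{rmk:mix}, which gives $AV_i^{(mix)}\le\big(\mu_i\sum_k\mu_k^{-1}\big)AV_i^{(loo)}$, and for lower bounds the numerator/denominator decomposition of Remark \ref{rmk:mix}, whose numerator (times $S^{-1}$) consistently estimates $\tilde Z^{-1}=(\sum_k\mu_k^{-1})^{-1}$ and has uniformly bounded summands $[\sum_j g(y_j|\eta_j)^{-1}]^{-1}\le\sup_\eta g(y_i|\eta)$. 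After these reductions everything is driven by the behaviour of $\rho_i$ as $p\to\infty$.

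The core step is a lemma describing $\rho_i$ in the regime $p\nu_p^2\to\infty$: it flattens and escapes symmetrically, i.e.\ $\|\rho_i\|_\infty\to0$ and $P_{p(\theta|y_{-i})}(\eta_i>t)\to\tfrac12$, $P_{p(\theta|y_{-i})}(\eta_i<-t)\to\tfrac12$ for every fixed $t$, almost surely. The argument splits $\theta$ into its projection onto $V_i=\mathrm{span}\{x_j:j\neq i\}$ (dimension at most $n-1$) and the orthogonal complement; since the leave-one-out likelihood depends only on the former and the isotropic Gaussian prior factorizes, the orthogonal part keeps its exact prior law under $p(\theta|y_{-i})$, so $\eta_i$ equals an exactly $N\!\big((x_i^{\perp})^T\theta_0,\,v_p\big)$ term, with $v_p=\nu_p^2\|x_i^{\perp}\|^2$, plus an independent lower-order term. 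Under \eqref{eq:rand_des} one has $\|x_i^{\perp}\|^2\sim p\tau^2$ almost surely, hence $v_p\to\infty$; $\|\rho_i\|_\infty\to0$ is then immediate since convolution with a probability law cannot increase a density's supremum, and the symmetric escape follows from a Gaussian convolution/centering estimate. This is essentially the complementary regime to Theorems \ref{thm:limsup_fin}--\ref{thm:limsup_fin_general}, where $v_p$ stays bounded, and several ingredients can be borrowed from their proofs.

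Given the lemma, part (b) is short. When $g(y_i|\eta)\to a_i^{\pm}$ with $a_i^{+}+a_i^{-}\in(0,\infty)$ and $g$ bounded, dominated convergence and the symmetric escape give $\mu_i\to\tfrac12(a_i^{+}+a_i^{-})\in(0,\infty)$; then $\int g^2\rho_i\le(\sup g)\,\mu_i$ yields $\limsup_p AV_i^{(loo)}<\infty$, and since every $\mu_k$ converges to a positive limit the factor $\mu_i\sum_k\mu_k^{-1}$ stays bounded, giving $\limsup_p AV_i^{(mix)}<\infty$. If, in addition, one tail limit vanishes, say $a_i^{-}=0$, then for every $\varepsilon>0$ there is $T_\varepsilon$ with $g(y_i|\eta)<\varepsilon$ on $\{\eta<-T_\varepsilon\}$, so $\int g(y_i|\eta)^{-1}\rho_i\ge\varepsilon^{-1}P_{p(\theta|y_{-i})}(\eta_i<-T_\varepsilon)\to\varepsilon^{-1}/2$; letting $\varepsilon\downarrow0$ gives $\int g^{-1}\rho_i\to\infty$, which together with $\mu_i\to\tfrac12 a_i^{+}>0$ yields $AV_i^{(post)}\to\infty$.

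For part (a), since $g(y_i|\cdot)$ is bounded and integrable it lies in $L^1\cap L^2$, so $\mu_i\le\|\rho_i\|_\infty\,\|g(y_i|\cdot)\|_1\to0$; and because $\rho_i$ is asymptotically Gaussian on the diverging scale $\sqrt{v_p}$ it is also bounded below by a constant multiple of $\|\rho_i\|_\infty$ on any fixed compact set carrying most of the mass of $g$ and of $g^2$, whence $\int g\rho_i\asymp v_p^{-1/2}$ and $\int g^2\rho_i\asymp v_p^{-1/2}$ and thus $AV_i^{(loo)}+1\asymp v_p^{1/2}\to\infty$. For $AV_i^{(mix)}$, the numerator of \eqref{eq:mix_est} (times $S^{-1}$) estimates $\tilde Z^{-1}=(\sum_k\mu_k^{-1})^{-1}\to0$ with uniformly bounded summands, so its relative asymptotic variance — and hence $AV_i^{(mix)}$ — is of order at least $1/\tilde Z^{-1}\to\infty$ once a matching lower bound on its second moment is established. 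That last point is the main obstacle: it requires an anti-concentration (``spikiness'') estimate showing the integrand $[\sum_j g(y_j|\eta_j)^{-1}]^{-1}$ is, with probability of order $v_p^{-1/2}$, of order one (because $\eta_i$ occasionally lands near a maximizer of $g(y_i|\cdot)$). The remaining technical burden is the almost-sure control of $\|x_i^{\perp}\|^2$ and of the lower-order component of $\eta_i$, which is delicate when $\nu_p^2$ itself diverges.
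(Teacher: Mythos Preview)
Your reduction to the one-dimensional marginal $\rho_i$ of $\eta_i=x_i^T\theta$ under $p(\theta|y_{-i})$, via the orthogonal splitting of $\theta$ along $V_i=\mathrm{span}\{x_j:j\neq i\}$, is a genuinely different route from the paper's. The paper works instead with the full $n$-dimensional vector $\eta=X\theta$ under its exact Gaussian \emph{prior} $N(0,\nu_p^2XX^T)$: each of $p(y)$, $p(y_{-i})$ and the relevant second-moment integrals is written as an integral against this prior, so divergence or boundedness is read directly from the explicit normalizing factor $(2\pi p\nu_p^2)^{\pm1/2}|p^{-1}XX^T|^{\pm1/2}$, combined with $p^{-1}XX^T\to\tau^2\I_n$ and Fatou/dominated convergence (after the rescaling $\gamma=(p\nu_p^2)^{-1/2}\eta$ in part (b)). Your part (b) is essentially complete and matches the paper; in particular the bound $AV_i^{(mix)}\le\pi_i^{-1}AV_i^{(loo)}$ and the symmetric-escape limit $\mu_i\to\tfrac12(a_i^++a_i^-)$ are exactly what the paper uses.

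Part (a), however, has two real gaps. First, the step ``the numerator's relative variance diverges, hence $AV_i^{(mix)}$ diverges'' is unjustified: numerator and denominator of a self-normalized estimator can be strongly correlated. The paper obtains the lower bound by expanding the delta-method formula
\[
AV_i^{(mix)}=\int\frac{p(\theta|y)^2}{q_{mix}}\,d\theta-2\int\frac{p(\theta|y)p(\theta|y_{-i})}{q_{mix}}\,d\theta+\int\frac{p(\theta|y_{-i})^2}{q_{mix}}\,d\theta,
\]
dropping the non-negative last term and bounding the cross term by $2\pi_i^{-1}$ via $p(\theta|y_{-i})/q_{mix}\le\pi_i^{-1}$, which gives $AV_i^{(mix)}\ge\int p(\theta|y)^2/q_{mix}-2\pi_i^{-1}$. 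In fact $\int p(\theta|y)^2/q_{mix}$ is exactly the relative second moment under $q_{mix}$ of your numerator summand $N_1=[\sum_j g(y_j|\eta_j)^{-1}]^{-1}$, so your intuition is right once this algebraic link is made and $\liminf_p\pi_i>0$ is shown separately. Second, your two-sided estimate $\int g\rho_i\asymp\int g^2\rho_i\asymp v_p^{-1/2}$ needs $\rho_i$ bounded below by a constant multiple of $v_p^{-1/2}$ on compacts, which in turn requires tightness (in $p$) of the $V_i$-component $(x_i^\parallel)^T\theta$ under the leave-one-out posterior. When $\nu_p^2\to\infty$ the prior on the $(n-1)$-dimensional $V_i$-component becomes diffuse, and for a merely bounded likelihood $g$ its posterior need not remain tight; you flag this but do not resolve it. The paper sidesteps the issue entirely because it works with the prior on $\eta$, whose Gaussian normalization is explicit, so no posterior-tightness argument is ever needed.
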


Theorem \ref{thm:av_loo_general} shows that, when $\lim_{p\to\infty} p\nu_p^2=\infty$, the asymptotic behaviour of $AV_{i}^{(mix)}$, as well as $AV_{i}^{(loo)}$, depends on the type of likelihood in the model.
For integrable likelihoods, i.e.\ ones satisfying $\int g(y_i|\eta_i)d\eta_i<\infty$ such as for Gaussian, Poisson, etc.,  the performances of all estimators under consideration (including the mixture and the gold-standard but expensive LOO ones) deteriorate as $p\to\infty$, see case (a) of Theorem \ref{thm:av_loo_general}.
As mentioned above, the deterioration of performances of the mixture and LOO estimators in this case is related to the target probabilities $p(y_i|y_{-i})$ going to $0$ as $p\to\infty$.
Instead, for non-integrable likelihoods such as the logistic one, which falls into case (b) of Theorem \ref{thm:av_loo_general}, we have that $\lim_{p\to\infty}AV_i^{(post)}=\infty$ while $\limsup_{p\to\infty}AV^{(mix)}_i<\infty$.

\section{Numerical simulations and real data examples}\label{sec:sim}
In this section we provide extensive numerical simulations, both on synthetic and real data, to compare the efficiency of the classical and mixture estimators. 
We also include 
PSIS estimators \citep{vehtari2016} in the comparison, which is the default methodology implemented in the Bayesian LOO-CV  R package \emph{loo} \citep{loo_package}. 

We test the above estimators in challenging cases where difficulty in computing LOO predictive probabilities arises mostly from two sources: (a) high-dimensionality of the parameter space and (b) model misspecification and presence of observations that are not well fit by the model.
We test (a) by considering large $p$ scenarios and (b) by considering real datasets with either known or potential observations not well fit by the model. 
The results suggest that mixture estimators dominates classical and PSIS ones and, in line with the theoretical results of Section \ref{sec:theory}, that the magnitude of the improvement increases with the dimensionality of problem, while also being potentially large for low dimensional problems with highly influential observations (see e.g.\ the examples in Section \ref{sec:leuk_stack} of the Supplement). 

In addition, Section \ref{sec:alternatives} provides comparisons to the methodologies in \citet{alqallaf2001cross} and \citet{bornn2010efficient}, while Sections \ref{sec:bias_var_dec} and \ref{sec:leuk_stack} in the Supplement provide, respectively, numerical illustrations of the bias-variance decomposition for the MSE of the estimators under consideration and tests on 
the \emph{Leukaemia} and \emph{Stack Loss} datasets, which are standard examples in the literature on Bayesian LOO-CV computation \citep{per1997,epi2008,vehtari2016,rischard2018unbiased}.

\subsection{High-dimensional linear regression}
We start by considering high-dimensional linear regression models, where the quantities of interest $\{p(y_i|y_{-i})\}_{i=1}^n$ can be computed in closed form and the different estimators can be compared in terms of the induced mean squared errors (MSE) for a variety of setting. 

\subsubsection{Dependence of the estimators efficiency on $n$ and $p$}\label{sec:high_d_linregr_sims}

First we explore how the performances of the different estimators depend on the number of data points $n$ and parameters $p$. 
We consider the model in \eqref{modelcov}, with $\sigma^2=1$ and two prior specifications, one where $\Sigma=\I_p$ and one where $\Sigma=100/p\cdot\I_p$. We take $n\in\{50, 100, 150\}$ and for every such value we vary $p/n$ ranging from $0.1$ to $3$. For every resulting $(n,p)$ pair we generate $10^4$ synthetic datasets, simulating the design matrix $X$ with i.i.d.\ standard normal entries (plus an intercept) and the data $y$ from the corresponding model likelihood in \eqref{modelcov}.
For each generated dataset, we compute the exact values of $\{p(y_i|y_{-i})\}_{i=1}^n$, as well as the corresponding classical, mixture and PSIS estimators based on $S=2\times 10^3$ i.i.d.\ samples from either $p(\theta|y)$ or $q_{mix}(\theta)$.
We compute the PSIS estimator using the \textsc{python} code available at \url{https://github.com/avehtari/PSIS}. 
We then compute the MSE of the estimators on the log scale, e.g.\ 
$\E[(\log(\hat{\mu}_i^{(post)})-\log(\mu_i))^2]$
 for the classical estimator. For each $(n,p)$ pair we report the MSE averaging both over datasets and over $i=1,\dots,n$. 

The results are reported in Figure \ref{fig:mse}.
PSIS estimators mildly improve over the classical ones for small-to-moderate ratios $p/n$ but overall the two perform similarly. 
\begin{figure}[t!]
  \includegraphics[width=.5\linewidth]{./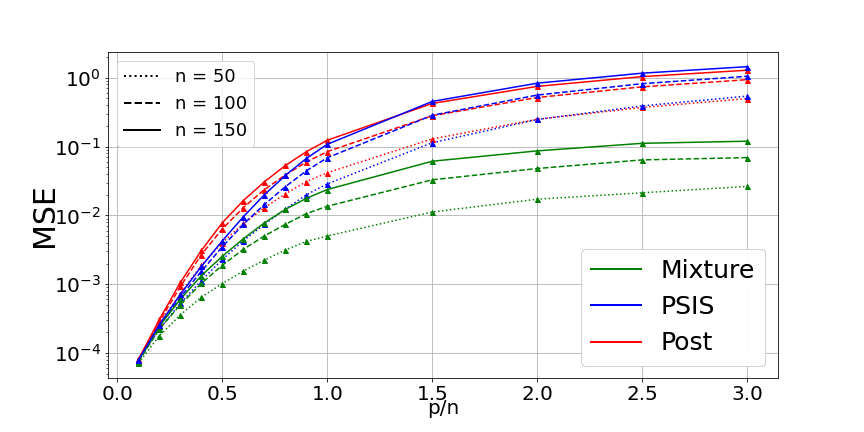}\hfill
  \includegraphics[width=.5\linewidth]{./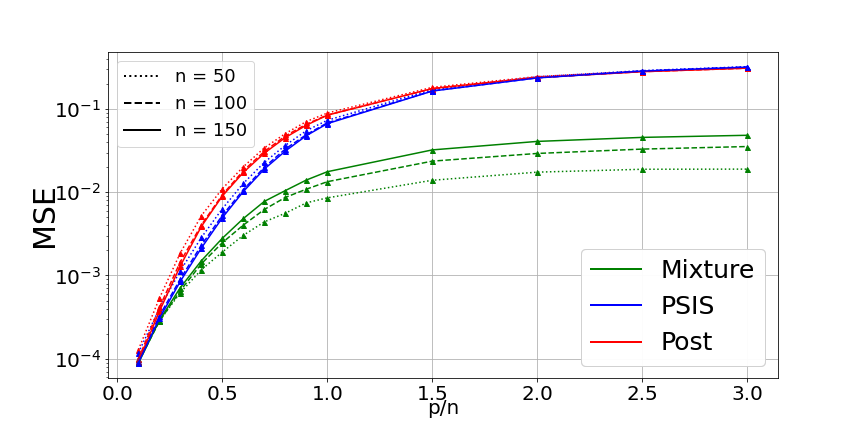}
\caption{MSE for posterior, PSIS and mixture estimators of $\{\log p(y_i|y_{-i})\}_{i=1}^n$ for high-dimensional linear regression models with different values of $n$, $p$ and prior variance (left: $\Sigma=\I_p$; right: $\Sigma=100/p\cdot\I_p$). 
 See Section \ref{sec:high_d_linregr_sims} for more details.}
   \label{fig:mse}
 \end{figure}
For example, the MSE of PSIS is never smaller than the one of posterior by more than a factor of $2$, with largest reduction in MSE being roughly of $40\%$ for values of $p/n\approx 0.35$.
Mixture estimators outperform both posterior and PSIS ones, with improvements that increase with the ratio $p/n$. 
In such high-dimensional regimes classical and PSIS estimators break down (note the log-scale) while mixture estimators remains reliable with moderate MSE. This is in agreement with the theory in Section \ref{sec:theory}, which shows that $AV_i^{(post)}$ becomes infinite for $p$ sufficiently large, while $AV_i^{(mix)}$ is finite and uniformly bounded with respect to $p$ when $\Sigma=c/p\I_p$ with $c>0$. 
All methods perform better when the prior is more informative, i.e.\ when $\Sigma=100/p\cdot\I_p$ compared to $\Sigma=\I_p$, which is again in accordance with 
Section \ref{sec:theory}.

\subsubsection{Infinite asymptotic variance and failure of standard rate of convergence}\label{sec:dep_on_S}
Next we explore more directly the impact of having a finite versus infinite asymptotic variance.
Since the latter corresponds to a slower than $\mathcal{O}(S^{-1})$ decay for the MSE (see Theorem 
\ref{thm:mix_finite} and discussion thereafter), the difference is better illustrated by fixing $n$ and $p$ and varying $S$. 
\begin{figure}[h!]
\center
\includegraphics[width=.7\linewidth]{./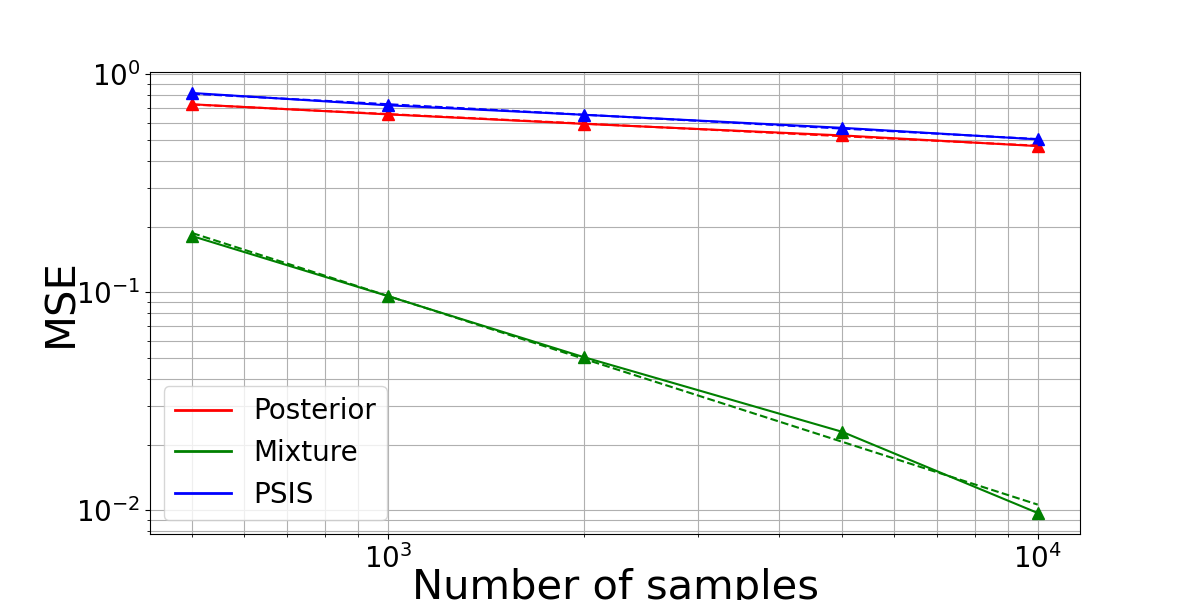}
\caption{MSE decay (solid lines) w.r.t.\ number of samples $S$. 
Dashed lines represent linear fits and have slopes of -0.957 for mixture, -0.145 for posterior and -0.160 for PSIS. 
See Section \ref{sec:dep_on_S} for more details.
}
\label{evolution:mse}
\end{figure}
We thus consider the same set-up and MSE computation of Section \ref{sec:high_d_linregr_sims}, but now we vary $S$ while fixing $p=n=100$ and $\Sigma=\I_p$. Figure \ref{evolution:mse} reports the results. 
Section \ref{sec:theory} implies that in this setting $AV_i^{(post)}=\infty$ with high probability while $AV_i^{(mix)}<\infty$.
In accordance with this, we observe an MSE of the classical and PSIS estimators decaying approximately at a rate  $\mathcal{O}(S^{-0.1})$ and an MSE of the mixture estimators following the theoretical $\mathcal{O}(S^{-1})$ rate.
In practice, this means that in such scenarios, despite being consistent as $S\to\infty$, classical and PSIS estimators will require an extremely large number of samples to make the MSE small. 

\subsubsection{Real data, misspecification and non-conjugate priors}\label{sec:bladder}
We now move to study how our estimator performs in a regression setting on a real dataset. We consider the $\textit{Bladder cancer}$ data available in the Gene Expression Omnibus (GEO) repository at $\textit{https://www.ncbi.nlm.nih.gov/gds}$, with accession number $GSE31684$. The full dataset has $93$ observations, and for every observation, we have $54680$ covariates, most of which are gene expressions of the patients. 
We derive different sub-datasets with varying $p/n$ ratios by taking the first $p$ covariates of the original dataset for $p\in\{\frac{n}{2}, n, 2n, 3n, 4n, 5n\}$.
For each of the resulting six datasets, we standardize covariates and response variable to have zero mean and unit variance before fitting the model. 
First, we employ the usual Bayesian linear regression model with conjugate prior, 
$y| X, \sigma^2 \sim N(X\theta, \sigma^2\I_n)$ and $\theta|\sigma^2\sim N(\theta_0, \sigma^2\Sigma)$,
with $\theta_0=0$, $\Sigma=100/p\I_p$ and set $\sigma^2 = \arg\!\max_\sigma p(y|\sigma^2)$ in an empirical Bayes fashion. 
The latter operation was not needed for synthetic data, where $\sigma$ was set to the true data-generating value.
Note that the value of $\sigma^2$ influences the prior variance for $\theta$ as indicated in the above model specification.

We compute estimators based on $S=2\times 10^4$ i.i.d.\ samples from either $p(\theta|y)$ or $q_{mix}$. 
Table \ref{table:bladder} reports the resulting MSE, both average and maximum w.r.t.\ 
$i=1,\dots,n$, averaged over $100$ independent repetitions. 
Here the MSE values are significantly larger than the ones for simulated data with similar dimensionality and data size (compare e.g.\ Figure \ref{fig:mse} with Figure \ref{fig:bladder} in the Supplement), suggesting that real data and model misspecification make LOO-CV computations harder.
\begingroup
\setlength{\tabcolsep}{10pt} 
\renewcommand{\arraystretch}{0.6}
\begin{table}[t!]
\begin{center}
\begin{tabular}{| p{2.5cm} |p{1.7cm} | p{2.5cm} | p{2.5cm} | p{1.7cm} |}
\hline
num. of cov. & Estimator & $n^{-1}\sum\limits_{i=1}^nMSE_i$ & $\max\limits_{i\in\{1,..,n\}}MSE_i$ & $\%$ $k$$>.7$\\
\hline
\multirow{3}{3cm}{$p=n/2$} & Mixture & 1.1e-03 & 7.0e-03 & -\\
\cline{2-2}
& Posterior & 1.5e-01 & 5.4e-01 & 24$\%$\\
\cline{2-2}
& PSIS & 1.7e-01 & 2.4e-01 & -\\ 
\hline
\multirow{3}{3cm}{$p=n$} & Mixture & 2.9e-01 & 1.5e+00& -\\
\cline{2-2}
& Posterior & 2.8e+00 & 6.1e+00 & 86$\%$\\
\cline{2-2}
& PSIS & 3.1e+00 & 4.1e+00 & \\ 
\hline
\multirow{3}{3cm}{$p=2n$} & Mixture & 7.9e-02 & 4.0e-01 & -\\
\cline{2-2}
& Posterior & 2.6e+00 & 5.9e+01 & 99$\%$\\
\cline{2-2}
& PSIS & 2.9e+00 & 3.7e+00 & -\\ 
\hline
\multirow{3}{3cm}{$p=5n$} & Mixture & 2.9e-02 & 1.2e-01 & -\\
\cline{2-2}
& Posterior & 2.1e+00 & 4.9e+00 & 99$\%$\\
\cline{2-2}
& PSIS & 2.4e+00 & 3.0e+00 & -\\
\hline
\end{tabular}
\end{center}
\caption{MSE for subsets of the Bladder dataset with increasing dimensionality under a conjugate linear regression model. 
MSE$_i$ refers to $\E[(\log (\hat\mu_i)-\log (\mu_i))^2]$ where $\hat\mu_i$ is the estimator of $\mu_i$ under consideration, while $k>.7$ refers to the diagnostic produced by the \emph{loo} R package \citep{loo_package}.
See Section \ref{sec:bladder} for more details.
}\label{table:bladder}
\end{table}
\endgroup
Table \ref{table:bladder} reports also the 
percentages of data points with large Pareto shape parameter $k$ computed with the \emph{loo} R package \citep{loo_package} which is commonly used to diagnose instability of the estimators $\hat\mu_i^{(post)}$.

Finally, we consider non-conjugate priors, namely independent Laplace, or double-Exponential, priors for $\theta_1,\dots,\theta_p$ with mean parameter equal to $0 $ and scale parameter equal to $b=\sqrt{50/p}$, so to have prior variance for each coefficient equal to $100/p$. We keep a Gaussian likelihood, $ y| X, \sigma^2 \sim N(X\theta, \sigma^2\I_n)$, treating the noise parameter $\sigma$ as unknown and assigning a $InvGamma(4,6)$ prior to it. 
We consider the subset of the Bladder data with $p=2n$. 
Non-conjugate high-dimensional problems are challenging for Bayesian LOO-CV computations based on importance sampling and indeed most examples considered in the literature are of low or moderate dimensionality, with exceptions including \citep{lamnisos2012cross,paananen2021implicitly}.
Since the model is not conjugate the true values are not available and thus we computed an accurate approximation to those that we use as benchmark, using leave-one-out estimators based on long MCMC runs (namely using $10$ chains with $8\times 10^3$ samples each, resulting in $4\times 10^4$ total samples after discarding the first half as `burn-in' or `warm-up').
 To ensure high quality of the samples both from the posterior and the mixture we set the \textsc{stan} control values to $adapt\_delta=0.99$ and $max\_treedepth=15$ respectively.
\begingroup
\setlength{\tabcolsep}{10pt} 
\renewcommand{\arraystretch}{0.6}
\begin{table}[t!]
\begin{center}
\begin{tabular}{| p{2.5cm} |p{1.7cm} | p{2.5cm} | p{2.5cm} | p{1.7cm} |}
\hline
num. of cov. & Estimator & $n^{-1}\sum\limits_{i=1}^nMSE_i$ & $\max\limits_{i\in\{1,..,n\}}MSE_i$ & $\%$ $k$$>.7$\\
\hline
\multirow{3}{3cm}{$p=2n$\\(Laplace prior)} & Mixture & 3.0e-0.2 & 2.7e-0.1 & -\\
\cline{2-2}
& Posterior & 5.6e-01 & 3.6e+00 & 86$\%$\\
\cline{2-2}
& PSIS & 6.1e-01 & 2.9e+00 & -\\ 
\hline
\end{tabular}
\end{center}
\caption{
MSE in estimating $\{\log (p(y_i|y_{-i}))\}_{i=1}^n$ for a linear regression model with non-conjugate Laplace prior on the Bladder dataset. See Section \ref{sec:bladder} for more details.
}\label{table:bladder_laplace}
\end{table}
\endgroup
We then compute 25 independent replications of the posterior and mixture estimators based on the default \textsc{stan} value of $S=4\times 10^3$ and report the resulting MSE in Table \ref{table:bladder_laplace}. 
In this example mixture estimators provide roughly a 20 times reduction in MSE compared to the posterior ones. 

\subsection{High-dimensional binary regression}\label{sec:high_dim_bin}
We now consider three high-dimensional binary regression examples. 
We consider three real datasets from the UCI machine learning repository at \url{https://archive.ics.uci.edu/}, namely the \emph{Arrhythmia}, \emph{Voice} and \emph{Parkinson} ones, 
which cover different $n/p$ 
 ratios. 
Preprocessing of the data included 
 removal of covariates that were almost equal for all individuals, which created stability problems to the HMC algorithm implemented in \textsc{stan} especially for the \emph{Arrhythmia} dataset, and normalisation of all remaining covariates to have zero mean and unit variance.
 The values of $(n,p)$ for the three datasets in their final format, which can be found at \url{https://github.com/luchinoprince/Mixture_IS}, are $(452,208)$ for Arrythmia, $(756,755)$ for Parkinson and $(126,312)$ for Voice.

For each dataset we ran four MCMC chains for $2\times 10^3$ iterations each, removing the first half as burn-in, leaving us with $S=4\times 10^3$ samples from both the posterior and the mixture distributions, which were used to compute the classical, mixture and PSIS estimators. \textsc{stan} with defaults setting was used and no convergence or mixing issues were detected with standard diagnostics. 
For the Arrythmia and Voice datasets we obtained accurate estimates (which we treat as ground truth values) for $\{\log (p(y_i|y_{-i}))\}_{i=1}^n$ by drawing $5\times 10^4$ samples from each of the $n$ LOO posteriors separately as done in Section \ref{sec:leuk_stack} of the Supplement.
For the Parkinson dataset, the above procedure would have been computationally unfeasible and we instead obtained ground truth values for $\{\log (p(y_i|y_{-i}))\}_{i=1}^n$ 
running a long chain sampling from $q_{mix}$ and then computing the mixture estimators based on $10^6$ samples. Standard diagnostics suggested that the Monte Carlo error for these estimates was at least one order of magnitude smaller than the one of the other estimates under consideration.

Table \ref{table:high_d} summarizes the resulting MSE of the estimators relative to the ground truth values, averaging over 10 independent repetitions for each combination of dataset and method.
\begingroup
\setlength{\tabcolsep}{10pt} 
\renewcommand{\arraystretch}{0.6}
\begin{table}
\begin{center}
\begin{tabular}{| p{2.4cm} |p{1.6cm} | p{2.3cm} | p{2.3cm} | p{1.8cm} |}
\hline
Dataset  & Estimator & $n^{-1}\sum\limits_{i=1}^nMSE_i$ & $\max\limits_{i\in\{1,..,n\}}MSE_i$  & $\%$ $k$$>.7$\\
\hline
\multirow{3}{3.0cm}{Arrythmia \\ $n$=452, $p$=208}
 & Mixture & 4.4e-03 & 3.9e-01 & -\\
\cline{2-2}
& Bronze & 8.0e-03 & 1.2e+00 & 23$\%$ \\
\cline{2-2}
& Posterior & 9.3e-03 & 1.1e+00 & 25$\%$\\
\cline{2-2}
& PSIS & 6.4e-03 & 8.7e-01 & -\\
\hline
\multirow{3}{3.0cm}{Parkinson\\$n$=756, $p$=755} & Mixture & 3.6e-03 & 3.3e-01 & -\\
\cline{2-2}
& Bronze & 8.7e-03 & 1.2e+00 & $49\%$ \\
\cline{2-2}
& Posterior & 1.0e-02 & 2.0e+00 & 53$\%$\\
\cline{2-2}
& PSIS & 6.0e-03 & 5.0e-01 & -\\
\hline
\multirow{3}{3.0cm}{Voice\\$n$=126, $p$=312} & Mixture & 2.3e-03 & 6.6e-02 & -\\
\cline{2-2}
& Bronze & 2.4e-02 & 1.1e+00 & 54$\%$ \\
\cline{2-2}
& Posterior & 1.8e-02 & 9.7e-01 & 42$\%$\\
\cline{2-2}
& PSIS & 2.0e-02 & 1.0e+00 & -\\
\hline
\end{tabular}
\end{center}
\caption{
{MSE in estimating $\{\log (p(y_i|y_{-i}))\}_{i=1}^n$ for a high-dimensional binary regression model with Laplace prior on three real datasets. 
See Section \ref{sec:high_dim_bin} for more details.}
}\label{table:high_d}
\end{table}
\endgroup
The mixture estimator performs significantly better than both the classical and PSIS estimators in these examples, see also Figure \ref{fig:traces} in the supplement for traceplots of the classical and mixture estimators.
See below for discussion on the bronze estimator also reported in Table \ref{table:high_d}.

\subsubsection{Comparison to additional alternative computational methodologies}\label{sec:alternatives}
In this section we provide a brief comparison with other alternative methodologies from the Bayesian LOO-CV computation literature, using the three datasets of Table \ref{table:high_d}.
We consider the gold, silver and bronze estimators proposed in \citep{alqallaf2001cross} and the
Sequential Monte Carlo (SMC) approach of \citep{bornn2010efficient}.

The $\textit{bronze}$ estimator of \citep{alqallaf2001cross} is the easiest to compare with.
In our framework, such method estimates $\{p(y_i|y_{-i})\}_{i=1}^n$ performing self-normalized importance sampling with importance distribution given by the following tempered posterior
\begin{equation}
\label{bronze}
q_{br}(\theta) \propto \left(\prod_{i=1}^n p(y_i|\theta)\right)^{\frac{n-1}{n}} p(\theta).
\end{equation}
This procedure has a computational cost comparable to the posterior, mixture and PSIS ones for the same number of samples.
We thus test it on the examples in Table \ref{table:high_d} using the same number of samples as well as \textsc{stan} settings.
The resulting MSE, which are reported in Table \ref{table:high_d}, are closer to the ones of the posterior and PSIS estimators rather than the mixture ones. See also Section \ref{sec:temp_est} in the Supplement for more discussion of the bronze estimator and more generally estimators based on geometric tempering.

The SMC methodology of \citep{bornn2010efficient}, when applied to our context, coincides with running $n$ SMC routines, one for each target value $p(y_i|y_{-i})$, initialized from the same samples drawn from the posterior $p(\theta|y)$. 
When an adaptive SMC approach is employed, this procedure ends up performing pure importance sampling (with the posterior as importance distribution) for data points inducing well behaved importance weights (e.g.\ ones with ESS above a given threshold) while performing a genuine SMC routine involving resampling and MCMC moves for the other values.
While the resulting estimators are often guaranteed to have finite variance (see \citealp{bornn2010efficient}), the total computational cost can be quadratic in $n$ if a considerable proportion of data points requires non-trivial SMC routines.
We thus test how many data points require non-trivial SMC routines for the high-dimensional binary regression examples above.
The results suggest that approximately $40\%$ for the \emph{Voice} Dataset, $22\%$ for the \emph{Parkinson} dataset and $64\%$ for the \emph{Arrhythmia} dataset. Such percentages where calculated by looking at the effective sample size (ESS) of the weights of the posterior, and assessing how many where under the threshold of $1/2$, which is a default value commonly used in the literature \citep{chopin2020introduction}. 
Such high percentages suggest that SMC, at least in the above version, is not particularly suited to such a $\textit{cross-sectional}$ estimation procedure, since running $\Theta(n)$ separate SMC routines makes it computationally too demanding, while it can be very appealing in $\textit{longitudinal}$ scenarios, such as hyper-parameter tuning, see e.g.\ \citet{bornn2010efficient}.

Finally, we consider the $\textit{gold}$ and $\textit{silver}$ of \citep{alqallaf2001cross}. 
These allow to obtain only an estimation of the whole LOO-CV sum $\psi=\sum_{i=1}^{n}\log(p(y_i|y_{-i}))$ in \eqref{eq:lppdloo}, as opposed to the $n$ terms $\{p(y_i|y_{-i})\}_{i=1}^n$.
In particular, the $\textit{gold}$ estimator of $\psi$ is defined as
\begin{equation}
\label{gold:est}
\hat{\psi}_{gold} = \frac{n}{K}\displaystyle\sum_{i\in I} \log (p(y_i|y_{-i})), 
\end{equation}
where $K$ is a fixed  integer in $\{1,\dots,n\}$ and $I$ is a collection of $K$ indices uniformly sampled without replacement from $\{1,2,....n\}$. 
The gold estimator is not computable in practice since we do not know the exact values of $p(y_i|y_{-i})$.
A practical approach is given by the so-called silver estimator, which is defined as
\begin{equation}
\label{silv:est}
\hat{\psi}_{silv} = \frac{n}{K}\displaystyle\sum_{i\in I} \log(\hat{\mu}_i^{(loo)}),
\end{equation}
with $K$ and $I$ defined as for the gold estimator and $\hat{\mu}_i^{(loo)}$ as in \eqref{eq:loo_est}.
We compare the silver estimator with the estimator of $\psi$ obtained from the mixture estimators by plug-in, i.e.\ 
$\hat{\psi}_{mix} =\sum_{i=1}^n \log(\hat{\mu}_i^{(mix)})$.
To ensure comparability, we we fix the total computational resources to $2\times 10^4$ samples (including burn-in ones) both for the silver and mixture estimators. Thus, for a given value of $K$, each chain used to compute a single $\hat{\mu}_i^{(loo)}$ has a total of $2\times 10^4/K$ samples.
Figures \ref{silver:all} shows the errors in estimating $\psi$ obtained with $\hat{\psi}_{silv} $ for different values of $K$ and with $\hat{\psi}_{mix}$. We can see that, for small values of $K$, $\hat{\psi}_{silv} $ has a large variance due to the variability in the choice of the subset $I$. On the contrary, as $K$ increases the bias of each estimator $\hat{\mu}_i^{(loo)}$ increases, since these are self-normalized importance sampling estimators based on $2\times 10^4/K$ samples, which became too few samples as $K$ increases (in the extreme case of $K= 721$ for the Parkinson data one has $2\times 10^4/K\approx 28$ samples for every estimator).
As a result, regardless of the value of $K$, $\hat{\psi}_{silv}$ has a much larger estimation error (note the log-scale on the $y$ axis) than $\hat{\psi}_{mix}$ with the same number of total samples.
Note that for the $\textit{Voice}$ dataset, given the small values of $n$ and the large number of total samples, the performances of the silver estimator are monotonically increasing with $K$ and the optimal value is $K=n$, which makes the silver estimator coincide with the brute force approach discussed in Section \ref{sec:methodology}.
\begin{figure}
\includegraphics[scale=0.27]{./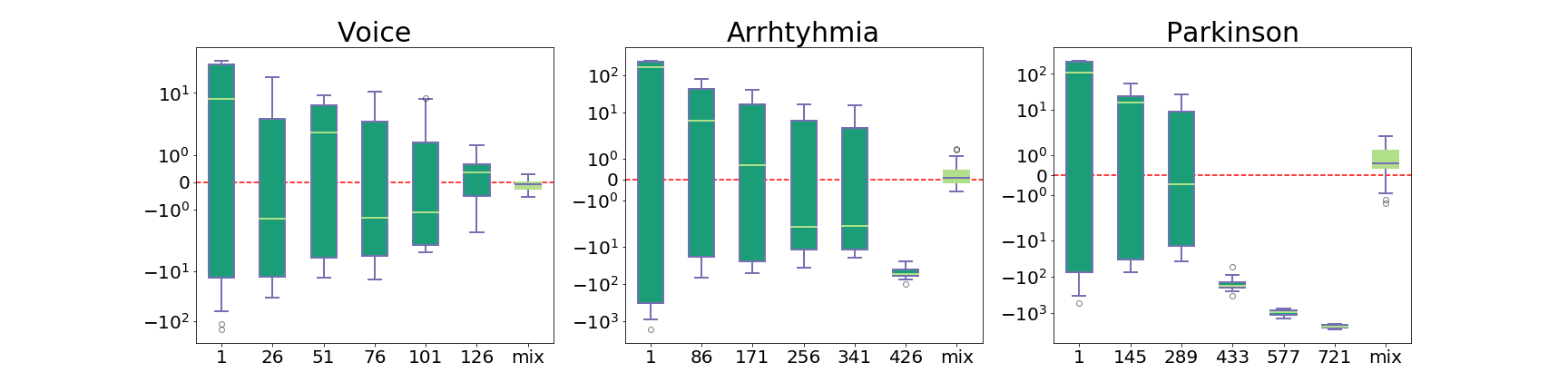}
\caption{Errors in estimating $\psi$ for $\hat{\psi}_{silv}$, with different values of $K$ on the $x$-axis and $\hat{\psi}_{mix}$. Boxplots are based on 25 independent repetitions for each estimator.}
\label{silver:all}
\end{figure}

\section{Extensions}\label{sec:ext}
The proposed mixture estimator can be extended in various directions.
First, one could extend the mixture estimators to compute LOO-CV criteria for general scoring rules beyond the logarithmic one, see e.g.\ \citep{ber1979,vehtari2012survey}.
In such case one would be interested in LOO-CV estimators of quantities such as
$
\mathbb{E}_{y_{new}\sim p^*}[S(y_{new},p(\cdot|y))]
$
where $S$ is a scoring rule and $p(\cdot|y)$ is the predictive distribution of $y_{new}$ given the observed data $y$.
The main difference in terms of computational methodology that may arise is the need for another layer of integration if the scoring rule is not \emph{local} \citep{ber1979}, but instead defined itself as an integral.

Another important extension is to models with non conditionally independent observations, i.e.\ where the equality in  \eqref{eq:conditionally_ind} is not satisfied. 
There, the mixture distribution can be written as 
\begin{align*}
q_{mix}(\theta)=
Z^{-1}
\sum_{i=1}^n p(\theta)p(y_{-i}|\theta)  \propto p(\theta|y) \left(\sum_{i=1}^n p(y_{i}|\theta,y_{-i})^{-1}\right)\,, 
\end{align*}
but $p(y_{i}|\theta,y_{-i})\neq p(y_{i}|\theta)$ in general and thus the last equality in \eqref{eq:mixture} does not hold.
One should then replace $p(y_{i}|\theta)^{-1}$ with $p(y_{i}|\theta,y_{-i})^{-1}$ throughout for both the posterior and mixture estimators, e.g.\ in \eqref{eq:post_est}, \eqref{eq:mixture} and \eqref{eq:mix_est}.
In such contexts, the mixture estimators remain appealing \emph{provided} one can compute the $n$ predictive likelihood terms $\{p(y_{i}|\theta,y_{-i})\}_{i=1}^n$ for a given $\theta$ at $\Theta(n)$ total computational cost. This will be the case when, after the computation of $p(y|\theta)$, one can compute $p(y_{-i}|\theta)$ for a given $i$ at $\Theta(1)$ additional cost, e.g.\ using rank-one updates in regression-type models.
If instead computing each $p(y_{i}|\theta,y_{-i})$ term can only be done at $\Theta(n)$ cost separately for each $i$, then computing $\{p(y_{i}|\theta,y_{-i})\}_{i=1}^n$ has $\Theta(n^2)$ total cost and both the mixture and posterior estimators are likely to be impractical. 

Finally, another interesting direction to explore in future work is the extension of the proposed mixture estimator to leave-$p$-out contexts for $p>1$. A naive application of the mixture methodology, however, where the mixture is defined as
$q_{mix}(\theta)\propto \sum_{A} p(\theta)p(y_{-A}|\theta)$ where $A$ runs over subsets of $\{1,\dots,n\}$ of size $p$, would incur a $p$-choose-$n$ cost per iteration, thus being impractical. 
Nonetheless, we expect such cost to be avoidable using, for example, appropriate unbiased likelihood estimators in conjunction with pseudo-marginal MCMC algorithms. We leave such extensions to future work.

\subsection{Algorithmic variations}
As mentioned in Section \ref{sec:fin_var}, the mixture distribution $q_{mix}$ could be replaced by a more general, weighted version 
$q_{mix}^{(\balpha)}(\theta)=Z_{\balpha}^{-1}\sum_{i=1}^{n}\alpha_ip(y_{-i}|\theta)p(\theta)$
with $\boldsymbol{\alpha}=(\alpha_1,\dots,\alpha_n)\in(0,\infty)^n$ being arbitrary weights. In such case Theorem \ref{thm:mix_finite} would still hold, as shown in its proof.
In practice, such weighted version directly affects the value of the mixture weight components $(\pi_1,\dots,\pi_n)$ that in general satisfy $\pi_i\propto \alpha_i p(y_i|y_{-i})^{-1}$ for $i=\,\dots,n$, see also Remark \ref{rmk:mix}. 
Since larger values of $\pi_i$ tend to lead to estimators of $p(y_i|y_{-i})$ with smaller variance, it follows that increasing $\pi_i$ corresponds to putting more computational effort in estimating $p(y_i|y_{-i})$ relative to other $p(y_j|y_{-j})$ for $j\neq i$.
Thus, having direct control on $\pi_i$ might be useful to, e.g., design adaptive versions of the algorithm that adapt the weights $\boldsymbol{\alpha}$ on the fly to put more effort on more important or harder to estimate values of $p(y_i|y_{-i})$. 
In the default version, $\alpha_i=1$ and $\pi_i\propto p(y_i|y_{-i})^{-1}$ for $i=1,\dots,n$ . 
As discussed in Remark \ref{rmk:weights}, this is a reasonable default choice that gives more weight to data points $y_i$ with larger values of $|\log p(y_i|y_{-i})|$, which are typically more important (e.g.\ contribute more to LOO-CV) and harder to estimate. 
However, $\pi_i\propto p(y_i|y_{-i})^{-1}$ may not be the optimal choice in general, and thus weighted versions $q_{mix}^{(\balpha)}$ might be useful to increase robustness of the proposed estimating procedure to, e.g.\ overly large values of $\pi_i$.

As discussed in Remark \ref{rmk:mix}, the estimators  $\{\hat{\mu}_i^{(mix)}\}_{i=1}^n$
effectively estimate
the mixture weights $\{\pi_i\}_{i=1}^n$ and the normalizing constant $\tilde{Z}$ and then compute 
$p(y_i|y_{-i})=\tilde{Z}^{-1}\pi_i^{-1}$. 
One might consider more advanced methodologies, e.g.\ Bridge Sampling \citep{bennett1976efficient,meng1996simulating}, to estimate the normalizing constant $\tilde{Z}$ between $q_{mix}(\theta)$ and $p(\theta|y)$, but we expect this to lead to minimal improvements.
In fact, the largest relative variance in all our experiments was given by the estimators of $\pi_i$, i.e.\ the numerators in \eqref{eq:mix_est}, and thus we expect that employing a better estimator of $\tilde{Z}$ would only provide minimal improvements. 

\section{Discussion}\label{sec:disc}
We proposed a novel estimator for Bayesian LOO-CV estimator that retains appealing features of classical estimators, such as simplicity of implementation and $\Theta(Sn)$ total cost, while significantly improving robustness to high-dimensionality. We expect our proposed computational methodology to be most useful when the number of parameters is of comparable order, or even larger, than the number of data points or in the presence of highly influential data points.

Our work supports the idea that Bayesian LOO-CV computations can be efficiently accomplished with Monte Carlo methods, requiring a computational effort comparable to fitting the model once. 
This is a computational advantages compared to, e.g., marginal likelihood or Bayes Factors approximation, which are typically significantly harder tasks.

Directions for future research include characterizing how easy or hard it is to sample from $q_{mix}(\theta)$ compared to $p(\theta|y)$, which would provide a more complete theoretical picture on the comparison between the efficiency of classical and mixture estimators (see e.g.\ Remark \ref{rmk:multimod}); and extending the asymptotic analysis of Section \ref{sec:linear} to cases where both $n$ and $p$ diverge simultaneously.

%
%
%

\newpage
  \title{\bf Supplementary material for ``Robust leave-one-out cross-validation for high-dimensional Bayesian models''}
  \author{Luca Alessandro Silva\hspace{.2cm}\\
    Department of Decision Sciences, Bocconi University
\hspace{.2cm}\\    and Giacomo Zanella\hspace{.2cm}\\
    Department of Decision Sciences
    and BIDSA, Bocconi University
    }
\maketitle
\begin{abstract}
Section \ref{sec:impl} contains details on efficient and numerically stable implementations of sampling algorithms for the mixture distribution defined in equation \eqref{eq:mixture} of the main paper. Section \ref{sec:numerics_suppl} contains additional numerical experiments to integrate the ones in Section \ref{sec:sim} of the paper. Section \ref{sec:temp_est} contains theoretical and empirical results for the class of tempered estimators. Section \ref{proofs} contains mathematical proofs for the theoretical results stated in the paper.
\end{abstract}

\renewcommand\thesection{S.\arabic{section}}
\setcounter{section}{0}
\renewcommand{\theequation}{S.\arabic{equation}}
\setcounter{equation}{0}
\renewcommand{\thefigure}{S.\arabic{figure}}
\setcounter{figure}{0}
\section{Implementation details}\label{sec:impl}

\subsection{Sampling from the proposed mixture with MCMC}
\label{sec:sampling_mix_implem}
For models with conditionally independent data as in \eqref{eq:conditionally_ind} the log posterior is typically computed as the sum of log prior and log likelihood contributions as follows
\begin{align}\label{eq:log_post}
\log p(\theta|y)=
\log p(\theta) +\sum_{i=1}^n \log p(y_i|\theta)+const\,,
 \end{align}
where $const$ denotes terms that do not depend on $\theta$.
For $q_{mix}(\theta)$ defined in \eqref{eq:mixture} we have the same expression plus an additional term that can be written as follows to ensure numerical stability
\begin{align}\label{eq:log_mixture}
\log q_{mix}(\theta)=
\log p(\theta) +\sum_{i=1}^n \log p(y_i|\theta)
+
LSE(\{-\log p(y_i|\theta)\}_{i=1}^n)+
const\,,
\end{align}
where $LSE$ denotes the usual \emph{LogSumExp} function defined as $LSE(\textbf{x})=\log(\sum_{i=1}^n\exp(x_i))$ for $\textbf{x}=\{x_i\}_{i=1}^n\in\mathbb{R}^n$.
The expression in \eqref{eq:log_mixture} is trivial to compute whenever the log-prior and log-likelihoods are computable and requires $\Theta(n)$ operations per evaluation, exactly as $\log p(\theta|y)$. 
In other words, $q_{mix}(\theta)$  can be computed up to normalizing constant whenever the original posterior $p(\theta|y)$ can. 
We further note that, while computing $\log q_{mix}(\theta)$ in \eqref{eq:log_mixture} may appear to require roughly twice as many computations as $\log p(\theta|y)$ in \eqref{eq:log_post}, as one needs to compute both the sum and the $LSE$ quantities, for most models the cost of computing the $n$ likelihood terms $\{\log p(y_i|\theta)\}_{i=1}^n$ dominates the cost of computing their sum or the \emph{LSE} function, e.g.\ a $\Theta(np)$ cost for the former versus a $\Theta(n)$ cost for the latter for a regression model with $n$ data points and $p$ covariates. 
Thus in such cases computing $\log q_{mix}(\theta)$ and $\log p(\theta|y)$ have roughly the same cost.

The expression in \eqref{eq:log_mixture} is also trivial to differentiate, allowing to compute the gradient $\nabla \log q_{mix}(\theta)$, and is amenable to standard probabilistic programming software based automatic differentiation.
For example, for the logistic regression model used in Sections \ref{sec:leuk_stack} and \ref{sec:high_dim_bin} the \textsc{stan} code to define the posterior $ p(\theta|y)$ is given by
\begin{verbatim}
data {
    int <lower=0> n;
    int <lower=0> k;
    int <lower=0, upper=1> y[n];
    matrix [n,p] X;
    real <lower=0.0> prior_scale;
}
parameters {
    vector[p] beta;
}
model{
    vector[n] means=X*beta;
    target += double_exponential_lpdf(beta | 0, prior_scale);
    target += bernoulli_logit_lpmf(y | means);
}
\end{verbatim}
while to define the mixture distribution $q_{mix}(\theta)$ one should replace the \textrm{model} section with 
\begin{verbatim}
model{
    vector[n] means=X*beta;
    vector[n] log_lik;
    for (index in 1:n){
        log_lik[index]= bernoulli_logit_lpmf(y[index] | means[index]);
    }
    target += double_exponential_lpdf(beta | 0, prior_scale);
    target += sum(log_lik);
    target += log_sum_exp(-log_lik);
}
\end{verbatim}
See also 
\url{https://github.com/luchinoprince/Mixture_IS}
for more details and examples of software implementations.

\subsection{Efficient computation of mixture estimators}\label{sec:eff_comp_estim}
Given $S$ samples $\{\theta_s\}_{s=1}^S$ from $q_{mix}(\theta)$, the $n$ estimators $\{\hat{\mu}_i^{(mix)}\}_{i=1}^n$ defined in \eqref{eq:mix_est} can be computed at $\Theta(nS)$ total cost in a numerically stable way as follows:
\begin{itemize}
\item[(i)] compute the $n\times S$ matrix of log-likelihood terms $\{\ell_{is}\}_{i,s}$, where $\ell_{is}=\log p(y_i|\theta_s)$ for $i=1,\dots,n$ and $s=1,\dots,S$;
\item[(ii)] compute the $n\times S$ matrix of log-weights $\{\tilde{w}_{is}\}_{i,s}$ defined as $\tilde{w}_{is}=\log(w_i^{(mix)}(\theta_s))$, using the equality $\tilde{w}_{is}=-\ell_{is}-\tilde{z}_s$ for $i=1,\dots,n$ and $  s=1,\dots,S$, where $\tilde{z}_s=LSE(\{-\ell_{is}\}_{i=1}^n)$ for $s=1,\dots,S$;
\item[(iii)] compute the log-estimators exploiting the equality
$\log \hat{\mu}_i^{(mix)}=\tilde{z}-LSE(\{\tilde{w}_{is}\}_{s=1}^S)$ for $i=1,\dots,n$ where $\tilde{z}=LSE(\{-z_s\}_{s=1}^S)$.
\end{itemize}
The above operations (i)-(iii) require $\Theta(nS)$ computational cost. In terms of memory requirements, the simplest implementation of the above operations, which creates the $n\times S$ matrices $\{\ell_{is}\}_{i,s}$ and  $\{\tilde{w}_{is}\}_{i,s}$, require $\Theta(nS)$ storage, but this can be easily reduced to $\Theta(n)$ storage, if required, by storing only one column at a time.

\section{Additional Numerical experiments}\label{sec:numerics_suppl}
\subsection{Decomposition of MSE in Bias and Variance components}\label{sec:bias_var_dec}
Here we provide additional numerical illustrations on the behaviour of mean squared error (MSE) for the various estimators of $\log p(y_i|y_{-i})$ in the high-dimensional linear regression setting considered in Section \ref{sec:high_d_linregr_sims}. 
Relative to that section, we add to the comparison the bronze estimator and we decompose the MSE into bias squared and variance for all estimators. 
This allows to assess how each component contributes to the overall MSE in each case. Figure \ref{fig:bias:var} reports the results, considering an experimental setting completely analogous to the one with $\Sigma = 100/p \cdot\I_p$ in Section \ref{sec:high_d_linregr_sims} of the manuscript. 
\begin{figure}[t!]
\center
  \includegraphics[scale=0.35]{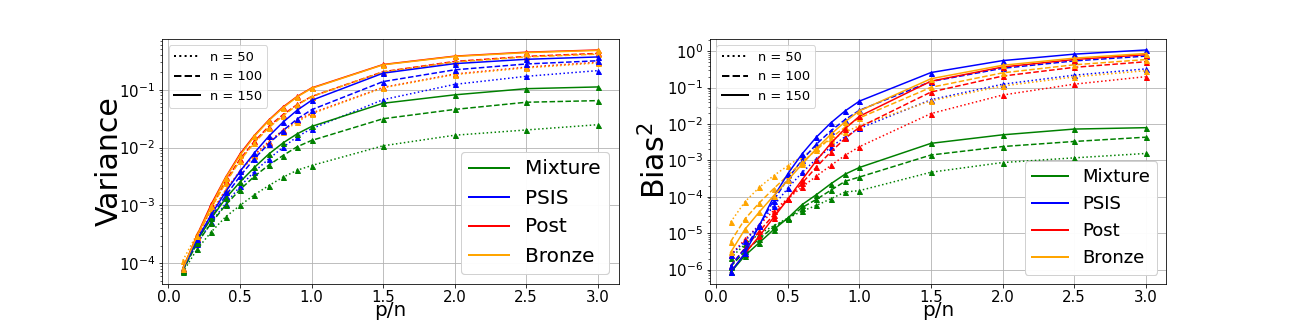}
   \caption{Variance (left) and Bias squared (right) for the posterior, PSIS, bronze and mixture estimators of $\{\log p(y_i|y_{-i})\}_{i=1}^n$ for the high-dimensional linear regression setting of Section \ref{sec:high_d_linregr_sims} with $\Sigma = 100/p\cdot\I_p$. 
   See Section \ref{sec:bias_var_dec} for more details.
}
   \label{fig:bias:var}
 \end{figure}
In particular we have $\sigma^2=1$, each estimator is obtained through $2\times 10^3$ i.i.d.\ samples and values are averaged over $10^4$ replicates (i.e.\ $10^4$ random dataset for each $n$ and $p$ combination). 
We can see from Figure \ref{fig:bias:var} that mixture estimators have small bias squared, and their MSE is  indeed dominated by their variance. 
Both the bias and variance of other estimators under consideration are significantly larger, with their bias squared being particularly large in high-dimensional settings.

\subsection{MSE figure for the Bladder dataset}
Figure \ref{fig:bladder} illustrate the MSE, averaging over $i=1,\dots,n$, for the experiments reported in Table \ref{table:bladder} of the paper.
\begin{figure}[h!]
\centering
\includegraphics[width=0.65\linewidth]{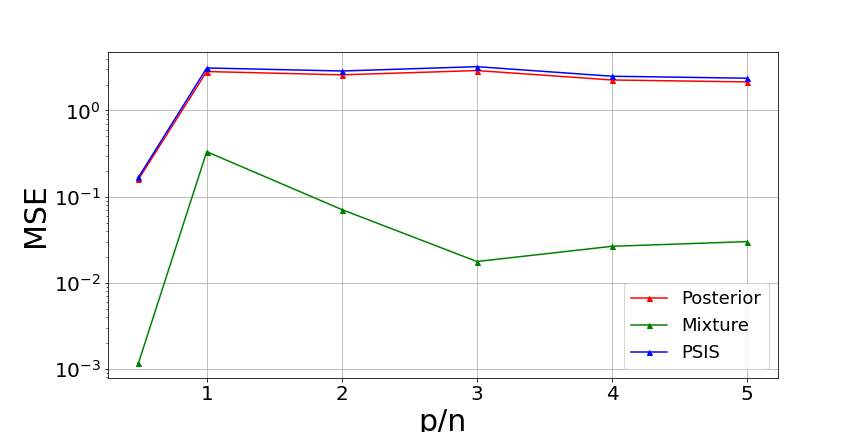}
\caption{Average MSE on different sub-datasets of the Bladder Cancer data. See Table \ref{table:bladder} of the paper for more details.}\label{fig:bladder}
\end{figure}

\subsection{Traceplots for high-dimensional binary regression}
Figure \ref{fig:traces} displays the evolution of the classical and mixture estimators for the 20 data points with largest absolute value of $\log (p(y_i|y_{-i}))$, for the examples reported in Table \ref{table:high_d} of the paper.
Some classical estimators exhibit very large jumps even at high number of iterations, which is a typical pathological behaviour of estimators with infinite or excessively large variance.
The mixture estimators, despite having some jumps in a few cases, display a much more stable evolution and convergence.
\begin{figure}[h!]
\begin{subfigure}{.32\textwidth}
\centering
\includegraphics[scale=0.18]{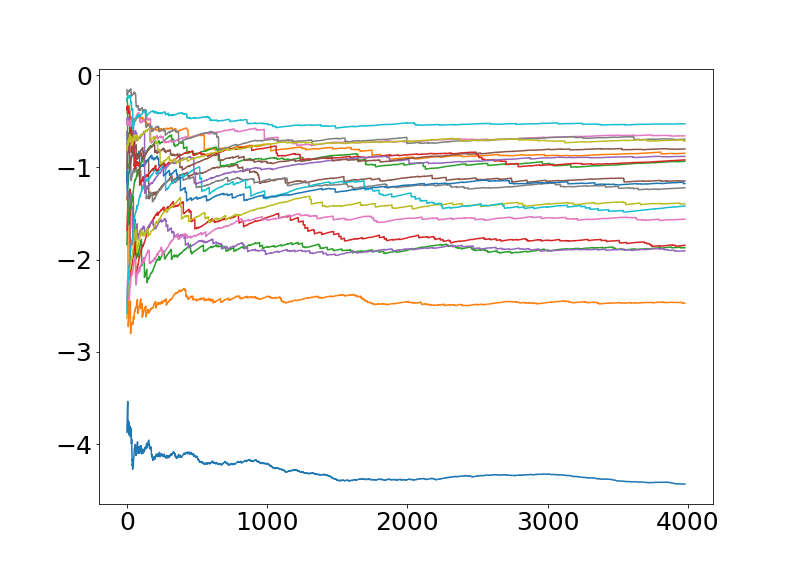}
\end{subfigure}
\begin{subfigure}{.32\textwidth}
\includegraphics[scale=0.18]{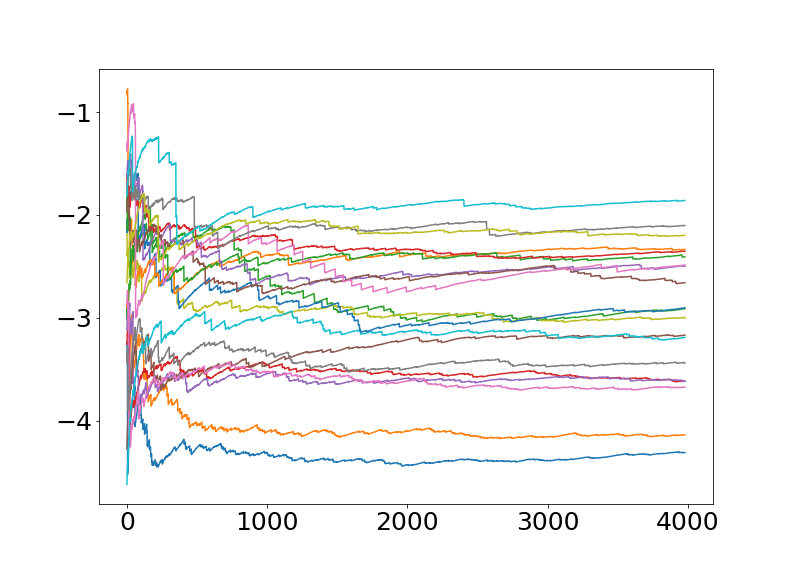}
\end{subfigure}
\begin{subfigure}{.32\textwidth}
\includegraphics[scale=0.18]{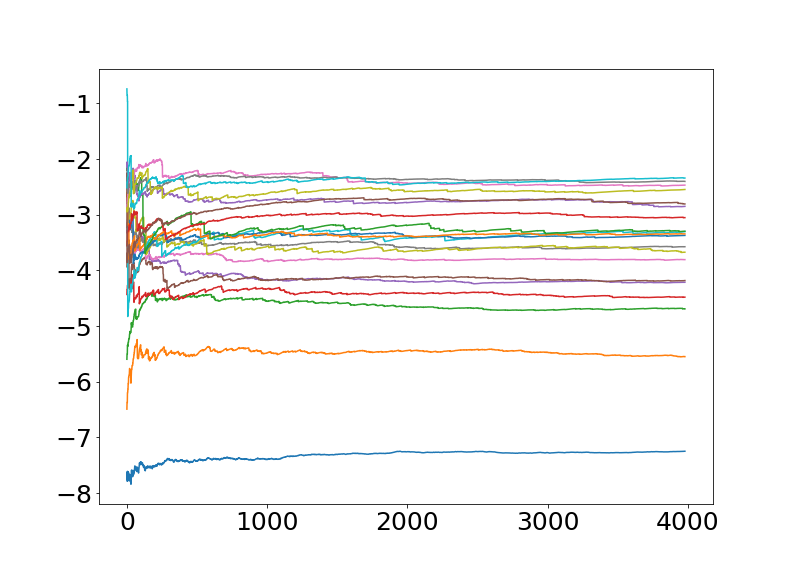}
\end{subfigure}
\newline
\begin{subfigure}{.32\textwidth}
\centering
\includegraphics[scale=0.18]{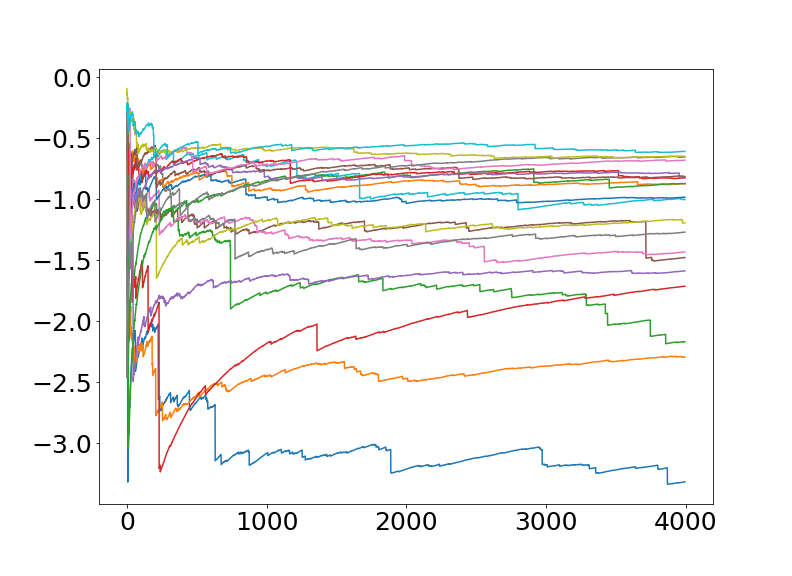}
\caption{Voice}
\end{subfigure}
\begin{subfigure}{.32\textwidth}
\includegraphics[scale=0.18]{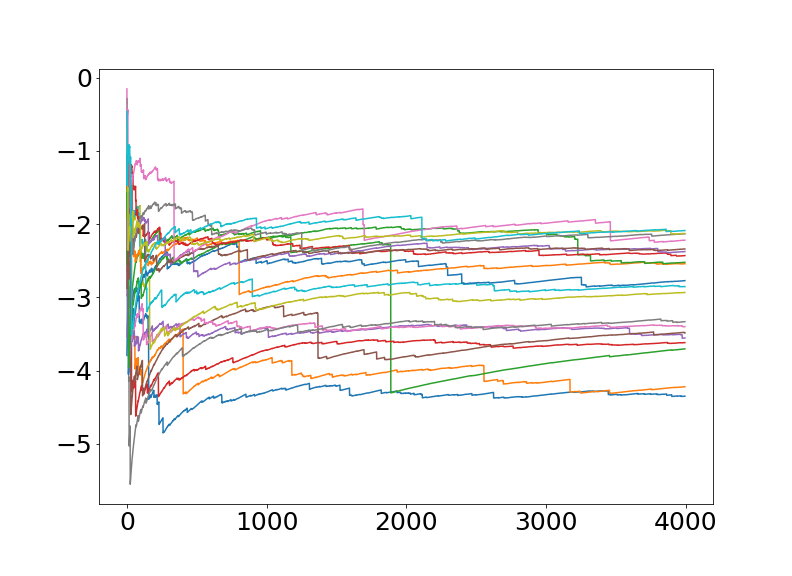}
\caption{Parkinson}
\end{subfigure}
\begin{subfigure}{.32\textwidth}
\includegraphics[scale=0.18]{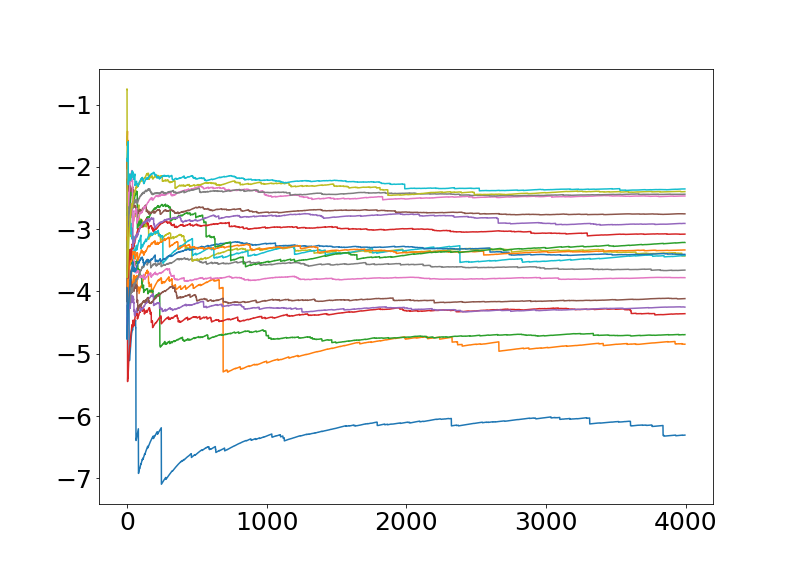}
\caption{Arrhythmia}
\end{subfigure}
\caption{Evolution of the mixture (first row) and classical (second row) estimators, with number of samples on the $x$-axis, for three datasets (one per column). The traceplots of the estimators corresponding to the 20 data points with largest absolute value of $\log (p(y_i|y_{-i}))$ are displayed.}
\label{fig:traces}
\end{figure}

\subsection{Examples from the Bayesian LOO-CV literature}\label{sec:leuk_stack}
In this section we consider the \emph{Leukaemia} and \emph{Stack Loss} datasets, which are standard example in the literature on Bayesian LOO-CV computation 
\citep{per1997,epi2008,vehtari2016,rischard2018unbiased}.
The first dataset is used to estimate the survival distribution for leukaemia patients. 
The response variable is survival time (from diagnosis), and the two explanatory variables are white blood cell count at diagnosis (WBC) and the outcome of a test related to white blood cell characteristics \cite{cook1982}.
Following previous analysis in the literature, we dichotomize survival times 
to indicate survival past 50 weeks, and we discard three repeated observation. The resulting dataset has $n=30$ binary responses, $p=3$ regressors including the intercept and is available at \url{https://github.com/luchinoprince/Mixture_IS}. 
We fit a Bayesian logistic regression model, meaning that each response $y_i\in\{0,1\}$ 
is modelled as a Bernoulli random variables taking value $1$ with success probability $(1 + \exp(x_i^T\theta))^{-1}\exp(x_i^T\theta)$, where $x_i$ is a vector of covariates.
We assume independent Laplace, or double-Exponential, priors for $\theta_1,\dots,\theta_p$ with mean parameter equal to $0 $ and scale parameter equal to $b=\sqrt{50/p}$, so to have prior variance for each coefficient equal to $100/p$.

This dataset is challenging for LOO-CV calculations due to the presence of a highly-influential observation, a patient with a high WBC and a survival time of more than 50 weeks, here corresponding to $i=15$.
In particular, \citep{epi2008} show that for this dataset $AV_{15}^{(post)}=\infty$, while we know by Theorem \ref{thm:mix_finite} that $AV_{15}^{(mix)}<\infty$. 

The values of $\{ \mu_i\}_{i=1}^n$, where $\mu_i=p(y_i|y_{-i})$, are not available analytically, and we compute accurate approximations of them running a separate long MCMC chain to sample from $p(\theta|y_{-i})$, for each $i=1,\dots,n$, with $10^6$ iterations and first half discarded as burn in.
We treat such estimates as ground truth values, since their Monte Carlo error is negligible compared to the ones of the other estimators involved in this analysis.
We then run $100$ independent MCMC chains sampling from $p(\theta|y)$ and from $q_{mix}(\theta)$, of length  $2\times 10^4$ iterations each with the first half discarded as burn-in, and use the resulting samples to compute 100 i.i.d.\ replicates of the classical, mixture and PSIS estimators. 
All MCMC runs were obtain with the \textsc{stan} interface in \textsc{python}, see e.g.\ \url{https://pystan.readthedocs.io/}, using default settings, see Section \ref{sec:impl} for detail on how to sample from $q_{mix}$ with \textsc{stan}. No convergence or mixing issues were found using standard diagnostics.

Figure \ref{fig:leuk} reports the results displaying, for each $i=1,\dots,n$, a box-plot of the differences between the log probability $\log \mu_i$ and its 
100 estimates.
As we can see, the classical and PSIS estimators struggle to recover the true value of $\log \mu_i$ for $i=15$ providing highly biased estimates, which is in line with the results of \cite{epi2008,vehtari2016}.
On the contrary, mixture estimators have drastically smaller errors and are centred around the correct values. 
All methods are able to accurately recover the ground truth values for the other values of $i$.
\begin{figure}[t!] 
\includegraphics[scale=0.35]{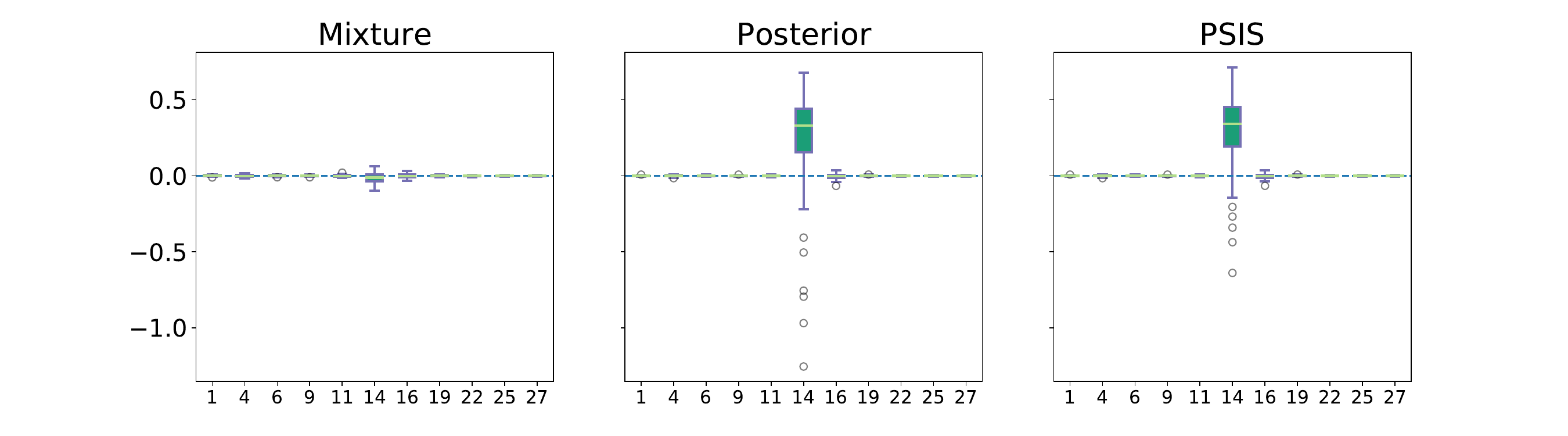}
 \caption{Values of $\log \hat\mu_i-\log \mu_i$ ($y$-axis) across $i\in\{1,\dots,n\}$ ($x$-axis) over $100$ repetitions for the Leukaemia dataset $(n=30)$, 
where $\hat\mu_i$ is either $\hat\mu^{(mix)}_i$ (left),  $\hat\mu^{(post)}_i$ (center) or $\hat\mu^{(psis)}_i$  (right). 
See Section \ref{sec:leuk_stack} for more details.
}
 \label{fig:leuk}
 \end{figure}


We now consider a second dataset previously analysed in the Bayesian LOO-CV literature, namely 
the \emph{Stack Loss} dataset as in \citet{per1997} and \citet[Section 4.3]{vehtari2016}, obtaining a linear regression model with $n=21$ observations and $p=3$ regressors. 
For this example \citet{per1997} shows $AV_i^{(post)}=\infty$ for $i=21$.
Figure \ref{fig:stack} displays the root mean squared error (RMSE) in estimating $\log(p(y_i|y_{-i}))$ for the problematic observation, $i=21$, as well as a more ordinary observation, $i=1$.
We fit the model with different values of $\sigma^2$, varying them over a grid centred on the maximum marginal likelihood estimator, in order to explore sensitivity to the likelihood strength.
In this example PSIS improves over the posterior estimator for both $i=1$ and $i=21$.
For both posterior and PSIS, the RMSE for $i=21$ is an order of magnitude larger than the one for $i=1$, while for the mixture they are of comparable order. 
As a result, mixture estimators provides a major improvement for $i=21$, while it performs comparably for $i=1$.
This relates to the fact that mixture estimators implicitly focus more computational effort on smaller and harder to estimate values of $p(y_i|y_{-i})$ (see e.g.\ Remark \ref{rmk:weights} in the paper), and thus they are particularly useful for those.
\begin{figure}[t!]
\centering
\includegraphics[width=0.95\linewidth]{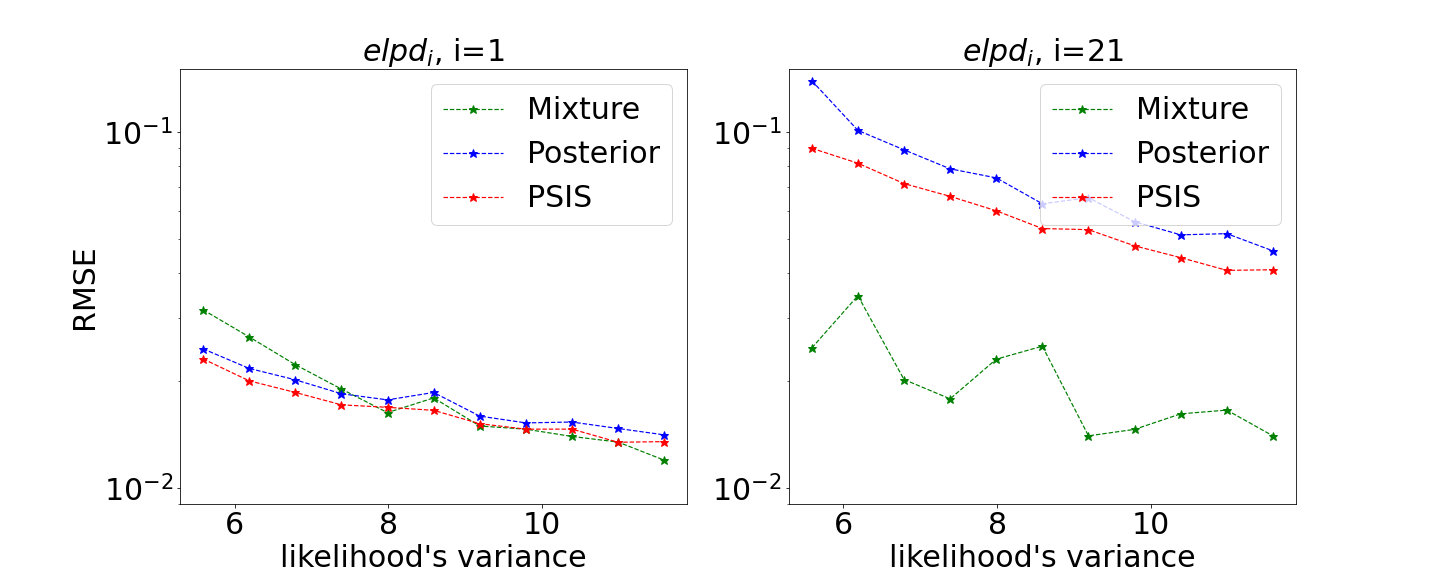}
\caption{Root mean squared error (RMSE) in estimating $\log(p(y_i|y_{-i}))$ for the \emph{Stack Loss} data for $i=1$ (left) and $i=21$ (right).
The x-axis reports the value of $\sigma^2$. See Section \ref{sec:leuk_stack} for details.}\label{fig:stack}
\end{figure}

\color{black}

\section{Comparison with geometric tempering estimators}
\label{sec:temp_est}
In this section we analyse a class of estimators based on geometric tempering, of which $\hat\mu_i^{(post)}$ and $\hat\mu_i^{(br)}$ are particular instances. First we define the class of estimators.
\begin{definition}
We define the $\alpha$-tempered posterior as 
\begin{align}
\label{temp:post}
q_{(tmp,\alpha)}(\theta)&\propto p(y|\theta)^\alpha p(\theta)&\alpha\in[0,1]
\end{align}
and the $\alpha$-tempered estimators $\{\hat\mu_i^{(tmp,\alpha)}\}_{i=1,\dots,n}$ as the self-normalised importance sampling estimators obtained using \eqref{temp:post} as importance distribution and $p(\theta|y_{-i})$ as target.
We denote the asymptotic variance of such estimators as 
\begin{align}
 AV^{(tmp,\alpha)}_i&=\lim_{S\to\infty}S\,\var (\hat{\mu}_i^{(tmp,\alpha)}/\mu_i)&i=1,\dots,n\,.
\end{align}
\end{definition}
Note that $q_{(1,tmp)}(\theta)=p(\theta|y)$ and $q_{(1-1/n,tmp)}(\theta)=q_{br}(\theta)$ with $q_{br}$ as in \eqref{bronze} of the main paper, which implies that $\hat\mu_i^{(1,tmp)}\equiv \hat\mu_i^{(post)}$ and $\hat\mu_i^{(1-1/n,tmp)}\equiv \hat\mu_i^{(br)}$, where the latter denotes bronze estimators of \citep{alqallaf2001cross} discussed in Section \ref{sec:alternatives} of the paper. 

We consider the Gaussian regression model 
\begin{equation}
\label{modelcov}
\begin{gathered}
y_i|\theta \sim N(x_i^T\theta,\sigma^2)\qquad i=1,\dots,n\\
\theta\sim N(\theta_0,\Sigma)\,,\qquad\qquad\qquad
\end{gathered}
\end{equation}
although we expect the qualitative behaviour discussed here to hold similarly also in more general settings. 
We have the following theorem, whose proof can be found below in Section \ref{proofs}.
\begin{theorem}
\label{tmp:lev}
Under \eqref{modelcov}, for each $i\in\{1,\dots,n\}$, we have $AV_{i}^{(tmp, \alpha)}<\infty$ if and only if $H^{(\alpha)}_{ii}<\frac{2}{2-\alpha}$, where 
\begin{equation}
H^{(\alpha)} = X\left(X^TX+\frac{\sigma^2}{2-\alpha}\Sigma^{-1}\right)^{-1}X^T.  
\end{equation}
\end{theorem}
Theorem \ref{tmp:lev} suggests that lowering the tempering parameter helps in increasing the leverage value at which the estimators have infinite asymptotic variance (see e.g.\ \citep{walker1988influence} and related discussion after Theorem \ref{thm:leverage} in the paper). 
 In particular, setting $\alpha=0$ would produce an estimator that is guaranteed to have finite asymptotic variance since leverages are always bounded by $1$. However, this would reduce to using the prior itself as importance distribution, which is well-known to be a poor choice in most commonly encountered settings (i.e.\ the induced asymptotic variance will be finite but very large). We illustrate this numerically in Figure \ref{temp:mse}. We consider the model in \eqref{modelcov} with $(n,p)=(100,100)$, $\sigma^2=1$ and $\Sigma=100/p\cdot \I_p$, and generate $10^3$ synthetic datasets as in Section \ref{sec:high_d_linregr_sims}.  We test the performance of different tempering estimators, each generated from $2\times 10^3$ i.i.d.\ samples from $q_{(tmp,\alpha)}$.
\begin{figure}
\center
\includegraphics[scale=0.4]{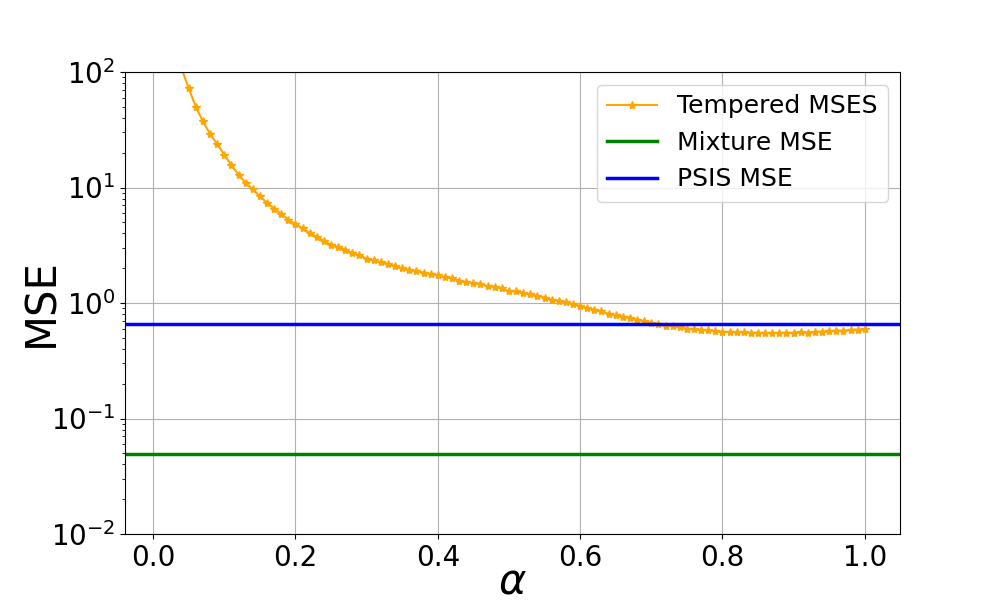}
\caption{Mean squared error (MSE) in estimating $\log(p(y_i|y_{-i}))$ for different tempered estimators as a function of the temperature parameter $\alpha$, for a high-dimensional linear regression model with
$n=p=100$. Each estimator is computed using $2\times 10^3$ i.i.d.\ samples from either $q_{(tmp,\alpha)}$ or $q_{mix}$. See Section \ref{sec:temp_est} for details.}
\label{temp:mse}
\end{figure}
From Figure \ref{temp:mse} we can see that tempering only provides mild improvements over classical posterior sampling (which is the special case $\alpha=1)$ and PSIS, and starts performing very poorly when $\alpha$ gets small. 
 See also Figure \ref{fig:bias:var} for more extensive results on the bias and variance of bronze estimators.

\section{Proofs}\label{proofs}
\begin{proof}[Proof of Theorem \ref{thm:mix_finite}]
A standard application of the delta method for the derivation of the relative asymptotic variance of self-normalized importance sampling estimators, see e.g.\ 
\citet[eq.(2.7)]{liu2001monte} or \citet[eq.(9.8)]{owen}, applied to $\hat{\mu}_i^{(mix)}$ leads to
\begin{align}
AV_i^{(mix)}=&
\lim_{S\to\infty}S\,\var \left(\frac{\hat{\mu}_i^{(mix)}}{\mu_i}\right)
=\int
\left(\frac{p(\theta|y_{-i})}{q_{mix}^{(\balpha)}(\theta)}\right)^2
\left(\frac{p(y_i|\theta)}{\mu_i}-1\right)^2
q_{mix}^{(\balpha)}(\theta)d\theta
\nonumber\\
=&\int\frac{p(\theta|y)^2}{q_{mix}^{(\balpha)}(\theta)}d\theta
-2\int\frac{p(\theta|y)p(\theta|y_{-i})}{q_{mix}^{(\balpha)}(\theta)}d\theta
+\int\frac{p(\theta|y_{-i})^2}{q_{mix}^{(\balpha)}(\theta)}d\theta\,,\label{avmix}
\end{align}
where in the last equality we re-arranged terms and used $\mu_i^{-1}p(\theta|y_{-i})p(y_i|\theta)=p(\theta|y)$.
Writing $q_{mix}^{(\balpha)}(\theta)=\sum_{j=1}^{n}\pi_jp(\theta|y_{-j})$ with $\pi_j=Z_{\balpha}^{-1}\alpha_jp(y_{-j})$ and upper bounding the negative terms in \eqref{avmix} by $0$, we have
\begin{equation}\label{eq:av_mix_bound}
AV_i^{(mix)}\leq
\int\frac{p(\theta|y)^2}{\sum_{j=1}^{n}\pi_jp(\theta|y_{-j})}d\theta
+\int\frac{p(\theta|y_{-i})^2}{\sum_{j=1}^{n}\pi_jp(\theta|y_{-j})}d\theta.
\end{equation}
From $\sum_{j=1}^{n}\pi_jp(\theta|y_{-j})\geq \pi_ip(\theta|y_{-i})$  it follows
\[
\int\frac{p(\theta|y_{-i})^2}{\sum_{j=1}^{n}\pi_j p(\theta|y_{-j})}d\theta
\leq
\int\frac{p(\theta|y_{-i})^2}{\pi_ip(\theta|y_{-i})}d\theta
=\pi_i^{-1}
\]
and 
\[
\int\frac{p(\theta|y)^2}{\sum_{j=1}^{n}\pi_jp(\theta|y_{-j})}d\theta
\leq
\pi_i^{-1}
\int\frac{p(\theta|y)^2}{p(\theta|y_{-i})}d\theta
=\pi_i^{-1}p(y_i|y_{-i})^{-1}
\int p(y_i|\theta)p(\theta|y)d\theta\,,
\]
where in the last equality we also used 
$
p(\theta|y_{-i})^{-1}
p(\theta|y)
=p(y_i|y_{-i})^{-1}p(y_i|\theta)
$.
Combining the above with \eqref{eq:av_mix_bound} we obtain
\begin{equation}\label{eq:av_mix_bound_2}
AV_i^{(mix)}\leq
\pi_i^{-1}\left(
1+p(y_i|y_{-i})^{-1}
\int p(y_i|\theta)p(\theta|y)d\theta
\right)\,.
\end{equation}
The latter upper bound is finite by \eqref{eq:reg_assumptions} and the fact that $\alpha_i>0$ implies $\pi_i>0$.

Finally, \eqref{eq:MSE_log_scale} follows from the usual bias-variance decomposition combined with $AV_i^{(mix)}<\infty$ and the fact that $\hat{\mu}_i^{(mix)}$, being a self-normalized importance sampling estimator finite asymptotic variance, has $\mathcal{O}(S^{-1})$ bias as $S\to\infty$. The latter is a well-known fact of which we provide a proof for completeness.  
Recall that 
\begin{align*}
\hat{\mu}_i^{(mix)}=\frac{A_S}{B_S},\quad A_S=S^{-1}\sum_{s=1}^{S}p(y_i|\theta_s)\frac{p(\theta_s|y_{-i})}{q_{mix}^{(\balpha)}(\theta_s)},\quad
B_S=S^{-1}\sum_{s=1}^{S}\frac{p(\theta_s|y_{-i})}{q_{mix}^{(\balpha)}(\theta_s)}
\end{align*}
where $\theta_1,\dots,\theta_S$ are i.i.d.\ samples from $q_{mix}^{(\balpha)}$.
It is easy to see that $\E[A_S]=\mu_i$ and $\E[B_S]=1$. Also, $\var(A_S/\mu_i)=c_1/S$ and $\var(B_S)=c_2/S$, with $c_1=\int\frac{p(\theta|y)^2}{q_{mix}^{(\balpha)}(\theta)}d\theta<\infty$ and $c_2=\int\frac{p(\theta|y_{-i})^2}{q_{mix}^{(\balpha)}(\theta)}d\theta<\infty$ by $AV_i^{(mix)}<\infty$ and \eqref{avmix}.
Thus, by the delta method, we have
\begin{align*}
\lim_{S\to\infty}S|\E[\log(\hat{\mu}_i^{(mix)})-\log(\mu_i)]|
&\leq
\lim_{S\to\infty}S|\E[\log(A_S/\mu_i)
]|
+
\lim_{S\to\infty}S|\E[\log(B_S)]|\\
&=(c_1+c_2)/2<\infty\,.
\end{align*}
The latter implies an $\mathcal{O}(S^{-1})$ bias and thus the statement in \eqref{eq:MSE_log_scale}.
\end{proof}

\subsection{Proof of Theorem \ref{thm:leverage}}
\begin{lemma}
\label{lemma:a1}
For any $h\in(0,1)$, the matrix $[X^TX+\sigma^2\cdot\Sigma^{-1}-h^{-1}x_ix_i^T]$ is singular if and only if $H_{ii}=h$, with $H$ as in \eqref{eq:hat_mat}. If $H_{ii}\neq h$ then
\begin{multline}
\label{invh}
 [X^TX+\sigma^2\cdot\Sigma^{-1}-h^{-1}x_ix_i^T]^{-1}= \\ (X^TX)^{-1}+\frac{h^{-1}}{1-h^{-1}H_{ii}}\cdot (X^TX+\sigma^2\Sigma^{-1})^{-1}x_ix_i^T(X^TX+\sigma^2\Sigma^{-1})^{-1}\,.
\end{multline}
\end{lemma}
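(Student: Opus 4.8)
The plan is to recognize $X^TX+\sigma^2\Sigma^{-1}-h^{-1}x_ix_i^T$ as a rank-one perturbation of the matrix $A:=X^TX+\sigma^2\Sigma^{-1}$ and to invoke the Sherman--Morrison identity. Note first that $A$ is invertible for \emph{every} $p$ (including the regime $p>n$ allowed by Theorem \ref{thm:leverage}), because $\sigma^2\Sigma^{-1}$ is positive definite and $X^TX$ is positive semidefinite, so $A$ is positive definite; consequently $H=XA^{-1}X^T$ by \eqref{eq:hat_mat}, hence $H_{ii}=x_i^TA^{-1}x_i$, and we may assume $x_i\neq0$ since otherwise $A-h^{-1}x_ix_i^T=A$ is nonsingular and $H_{ii}=0\neq h$, consistent with the claim.

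Next I would record the elementary fact behind Sherman--Morrison: for an invertible matrix $A$ and column vectors $u,v$, the matrix $A+uv^T$ is singular if and only if $1+v^TA^{-1}u=0$, and when $1+v^TA^{-1}u\neq0$ one has $(A+uv^T)^{-1}=A^{-1}-\dfrac{A^{-1}uv^TA^{-1}}{1+v^TA^{-1}u}$. The direction $1+v^TA^{-1}u=0\Rightarrow$ singular follows by checking that the nonzero vector $A^{-1}u$ lies in the kernel of $A+uv^T$; the converse and the inverse formula follow by multiplying $A+uv^T$ against the claimed inverse and simplifying.

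I would then specialize to $u=x_i$ and $v=-h^{-1}x_i$, so that $A+uv^T=A-h^{-1}x_ix_i^T$ and $1+v^TA^{-1}u=1-h^{-1}x_i^TA^{-1}x_i=1-h^{-1}H_{ii}$. Since $h\neq0$, this quantity vanishes exactly when $H_{ii}=h$, giving the singularity characterization. When $H_{ii}\neq h$, substituting into the Sherman--Morrison formula yields $(A-h^{-1}x_ix_i^T)^{-1}=A^{-1}+\dfrac{h^{-1}}{1-h^{-1}H_{ii}}\,A^{-1}x_ix_i^TA^{-1}$, which is precisely \eqref{invh} with $A^{-1}=(X^TX+\sigma^2\Sigma^{-1})^{-1}$ appearing both as the leading term and in the outer factors of the rank-one correction.

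I do not expect any genuine obstacle: this is a textbook rank-one-update computation. The only points needing a line of care are (i) justifying invertibility of $A$ when $p>n$, which rests on positive-definiteness of $\sigma^2\Sigma^{-1}$ rather than of $X^TX$, and (ii) the degenerate case $x_i=0$, handled above. The hypothesis $h\in(0,1)$ is not used in the algebra—only $h\neq0$ matters—and is presumably stated because the lemma is later applied with $h$ equal to a leverage $H_{jj}\in[0,1)$ in the proof of Theorem \ref{thm:leverage}.
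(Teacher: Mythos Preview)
Your proof is correct and follows essentially the same route as the paper: both arguments are Sherman--Morrison rank-one updates, with the paper establishing singularity by exhibiting the null vector $A^{-1}x_i$ when $H_{ii}=h$ and establishing nonsingularity (together with the inverse formula) by direct multiplication, exactly as in your sketch. One small remark: the leading term in \eqref{invh} as printed is $(X^TX)^{-1}$, which is evidently a typo for $(X^TX+\sigma^2\Sigma^{-1})^{-1}$; your derivation yields the correct formula, and the paper's own verification step silently uses the corrected version as well.
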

\begin{proof}
Assume first that $H_{ii}=h$, then $x_i\neq 0$ (the zero vector has leverage zero). Multiplying $[X^TX+\sigma^2\cdot\Sigma^{-1}-h^{-1}x_ix_i^T]$ by the non-zero vector $(X^TX+\sigma^2\Sigma^{-1})^{-1}x_i$ yields $[X^TX+\sigma^2\cdot\Sigma^{-1}-h^{-1}x_ix_i^T](X^TX+\sigma^2\Sigma^{-1})^{-1}x_i= x_i+h^{-1}H_{ii}x_i=0$. Hence we have proved that in this case $[X^TX+\sigma^2\cdot\Sigma^{-1}-h^{-1}x_ix_i^T]$ is singular. We now verify that $\eqref{invh}$ is the inverse of $[X^TX+\sigma^2\cdot\Sigma^{-1}-h^{-1}x_ix_i^T]$ when $H_{ii}\neq h$. Multiplying the two matrices we get:
\begin{multline*}
\I+\frac{h^{-1}}{1-h^{-1}H_{ii}}x_ix_i^T(X^TX+\sigma^2\cdot\Sigma^{-1})^{-1}-h^{-1}x_ix_i^T(X^TX+ \sigma^2\cdot\Sigma^{-1})^{-1}\\ 
+\frac{h^{-2}H_{ii}}{1-h^{-1}H_{ii}}x_ix_i^T(X^TX+\sigma^2\cdot\Sigma^{-1})^{-1}
\end{multline*}
\[= \I+x_ix_i^T(X^TX+\sigma^2\cdot\Sigma^{-1})^{-1}\Big(\frac{h^{-1}}{1-h^{-1}H_{ii}}+h^{-1}-\frac{h^{-2}H_{ii}}{1-h^{-1}H_{ii}}\Big) =\I\,.\]
\end{proof}

\begin{lemma}
\label{hii_bayes}
Let $\Sigma$ be a positive definite $p\times p$ matrix, $X$ a $n\times p$ matrix, $\sigma>0$ and $M=X^TX-h^{-1}x_ix_i^{T}+\sigma^2\Sigma^{-1}$ with $i\in\{1,\dots,n\}$ and $h\in(0,1)$.
Then $M$ is positive definite if and only if $H_{ii}<h$, with $H$ as in \eqref{eq:hat_mat}.
\end{lemma}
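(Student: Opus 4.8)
The plan is to view $M$ as a rank-one downdate of a fixed positive definite matrix and thereby reduce positive definiteness of $M$ to a scalar inequality. First I would set $A=X^TX+\sigma^2\Sigma^{-1}$. Since $\sigma>0$ and $\Sigma$ is positive definite, $\sigma^2\Sigma^{-1}$ is positive definite, and adding the positive semidefinite matrix $X^TX$ keeps it so; hence $A$ is positive definite, in particular invertible, with a positive definite symmetric square root $A^{1/2}$. Recalling the definition \eqref{eq:hat_mat}, one has $H=XA^{-1}X^T$, so the diagonal entry is $H_{ii}=x_i^TA^{-1}x_i$, where $x_i^T$ denotes the $i$-th row of $X$ (in particular $H_{ii}\ge 0$, with $H_{ii}=0$ only when $x_i=0$).

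Next I would factor out $A^{1/2}$: writing $u:=A^{-1/2}x_i$, a direct computation gives $M=A^{1/2}\big(\I_p-h^{-1}uu^T\big)A^{1/2}$. As $A^{1/2}$ is invertible, $M$ is positive definite if and only if $\I_p-h^{-1}uu^T$ is. The latter is symmetric, acts as the identity on $u^{\perp}$ and has eigenvalue $1-h^{-1}\|u\|^2$ on $\mathrm{span}(u)$ (when $u\neq 0$; the case $u=0$ is trivial, since then $H_{ii}=0<h$), so it is positive definite precisely when $1-h^{-1}\|u\|^2>0$, i.e.\ $\|u\|^2<h$. Since $\|u\|^2=x_i^TA^{-1}x_i=H_{ii}$, this is exactly $H_{ii}<h$, proving the lemma.

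There is no genuine obstacle here; the only points needing care are checking that $A$ is truly positive definite, so that $A^{-1}$ and $A^{1/2}$ exist (this is where $\sigma>0$ and positive-definiteness of $\Sigma$ enter), and the identification $H_{ii}=x_i^TA^{-1}x_i$ from \eqref{eq:hat_mat}. An alternative route, more closely tied to Lemma \ref{lemma:a1}, avoids the square root: by the matrix determinant lemma $\det M=\det(A)\,(1-h^{-1}x_i^TA^{-1}x_i)=\det(A)\,(1-h^{-1}H_{ii})$, and along the segment $s\mapsto A-sx_ix_i^T$ for $s\in[0,h^{-1}]$ at most the smallest eigenvalue of $M$ can become non-positive (rank-one downdate of the positive definite $A$), so $M$ is positive definite exactly when $\det M>0$, i.e.\ when $H_{ii}<h$; the boundary case $\det M=0$ recovers the singularity criterion $H_{ii}=h$ of Lemma \ref{lemma:a1}.
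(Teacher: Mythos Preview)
Your proof is correct and, in fact, cleaner than the paper's. The paper argues the two implications separately: for the forward direction it tests $M$ on the particular vector $(X^TX+\sigma^2\Sigma^{-1})^{-1}x_i$ to get $H_{ii}(1-h^{-1}H_{ii})>0$, and for the converse it invokes the explicit Sherman--Morrison inverse of Lemma~\ref{lemma:a1} to exhibit $M^{-1}$ as the sum of a positive definite and a positive semidefinite matrix. Your approach instead writes $M=A^{1/2}(\I_p-h^{-1}uu^T)A^{1/2}$ with $A=X^TX+\sigma^2\Sigma^{-1}$ and $u=A^{-1/2}x_i$, reducing the question to a rank-one perturbation of the identity whose spectrum is read off immediately; this handles both directions at once and does not need Lemma~\ref{lemma:a1} at all. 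The paper's route has the minor advantage of reusing the inverse formula already established, whereas your route is self-contained and arguably more transparent. Your alternative determinant-plus-interlacing argument is also valid (the key point being that for a rank-one downdate of a positive definite matrix only the smallest eigenvalue can become nonpositive, so positivity of the determinant is equivalent to positive definiteness), and it makes the connection to the singularity criterion $H_{ii}=h$ of Lemma~\ref{lemma:a1} explicit.
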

\begin{proof}
Assume that $[X^TX+\sigma^2\cdot\Sigma^{-1}-h^{-1}x_ix_i^T]$ is positive definite. If $x_i=0$, then $H_{ii}=0<h$. If $x_i\neq 0$, then $(X^TX+\sigma^2\Sigma^{-1})^{-1}x_i$ is a non-zero vector and we must have, by positive definiteness,  $0<x_i^T(X^TX+\sigma^2\Sigma^{-1})^{-1}[X^TX+\sigma^2\cdot\Sigma^{-1}-h^{-1}x_ix_i^T](X^TX+\sigma^2\Sigma^{-1})^{-1}x_i=H_{ii}(1-h^{-1}H_{ii})$. This implies that $H_{ii}<h$.

Conversely, suppose that $H_{ii}<h$. Then $h^{-1}/(1-h^{-1}H_{ii})>0$, and $\eqref{invh}$ shows that $[X^TX+\sigma^2\cdot\Sigma^{-1}-h^{-1}x_ix_i^T]^{-1}$ can be written as the sum of a positive definite matrix and a positive semi-definite one. As such, it is positive definite, and $[X^TX+\sigma^2\cdot\Sigma^{-1}-h^{-1}x_ix_i^T]$ must be positive definite as well.
\end{proof}

\begin{proof}[Proof of Theorem \ref{thm:leverage}]
The relative asymptotic variance of the classical estimator $\hat{\mu}^{(post)}_i$ can be derived in analogous way to the derivation in \eqref{avmix}, with the importance distribution $q_{mix}^{(\balpha)}(\theta)$ replaced by the posterior $p(\theta|y)$. After simplifications, this leads to 
\begin{equation}
\label{varpost}
 AV^{(post)}_i=\int\left(\frac{p(\theta|y_{-i})}{p(\theta|y)}\right)^2p(\theta|y)d\theta-1.
\end{equation}
By \eqref{varpost} and \eqref{modelcov} we have
\begin{align}
AV^{(post)}_i+1
&=c_1
\int exp\Big\{ \frac{2(y_i-x_i^T\theta)^2}{2\sigma^2}-\frac{(y-X\theta)^T(y-X\theta)}{2\sigma^2}-\frac{(\theta-\theta_0)^T\Sigma^{-1}(\theta-\theta_0)}{2}\Big\} d\theta,
\end{align}
where $c_1$ is a constant independent of $\theta$.
Grouping together quadratic and linear terms in $\theta$ we obtain 
\begin{align}
AV^{(post)}_i
&=c_2\int exp\Big\{-\theta^TM\theta
+\theta^Tv
\Big\}d\theta\,,\quad
M=(2\sigma^2)^{-1}[X^TX-2x_ix_i^T+\sigma^2\Sigma^{-1}]
\end{align}
where $v$ is a $p$-dimensional vector
and $c_2$ is a non-zero scalar, and both are independent of $\theta$.
It follows that $AV^{(post)}_i$ is finite if and only if $M$ is positive definite.
The thesis follows by Lemma \ref{hii_bayes} with $h=0.5$.
\end{proof}

\subsection{Proof of Theorem \ref{tmp:lev}}
\begin{proof}[Proof of Theorem \ref{tmp:lev}]
The proof is analogous to that of Theorem \ref{thm:leverage}. First, the same arguments as in \eqref{avmix} give 
\begin{equation}
\label{av:temp}
AV_i^{(tmp,\alpha)}=\int\frac{p(\theta|y)^2}{q_{(tmp,\alpha)}(\theta)}d\theta
-2\int\frac{p(\theta|y)p(\theta|y_{-i})}{q_{(tmp,\alpha)}(\theta)}d\theta
+\int\frac{p(\theta|y_{-i})^2}{q_{(tmp,\alpha)}(\theta)}d\theta.
\end{equation}
The first two integrals in \eqref{av:temp} can be shown to be finite combining basic manipulations with the fact that $\int p(y|\theta)^\alpha p(\theta)d\theta\in(0,\infty)$ for every $\alpha\in(0,\infty)$ and that $p(y_{-i})>0$.
Hence the finiteness of \eqref{av:temp} depends on the behaviour of the last integral and 
 we have
\begin{equation}
\label{av:br}
AV^{(tmp, \alpha)}_i = c_2 +
\int\left(\frac{p(\theta|y_{-i})}{q_{(tmp,\alpha)}(\theta)}\right)^2q_{(tmp, \alpha)}(\theta)d\theta = c_2+c_1\int \frac{p(y_{-i}|\theta)^{2-\alpha}}{p(y_i|\theta)^{\alpha}}d\theta,
\end{equation}
where $c_1,c_2$ are positive and finite constants that do not depend on $\theta$. 
Grouping together quadratic and linear terms in $\theta$ we obtain
\begin{equation}
AV^{(tmp,\alpha)}_i =c_2 + \tilde c_1 \int \hbox{exp}\Big\{-\theta^TM\theta+\theta^Tv\Big\}d\theta,
\end{equation}
where
\begin{equation}
M=\frac{2-\alpha}{2\sigma^2} \Big[X^TX-\frac{2}{2-\alpha} x_i^Tx_i+\frac{\sigma^2}{2-\alpha}\Sigma^{-1}\Big],
\end{equation}
$v$ is a $p$-dimensional vector
and $\tilde c_1$ and $c_2$ are non-zero scalars (both independent of $\theta$).
It follows that $AV^{(tmp, \alpha)}_i$ is finite if and only if $M$ is positive definite.
The thesis follows by Lemma \ref{hii_bayes} with $h=1-\alpha/2$.
\end{proof}

\subsection{Proof of Proposition \ref{prop:high_p} and Theorems \ref{thm:limsup_fin}, \ref{thm:limsup_fin_general} and \ref{thm:av_loo_general}}
\begin{proof}[Proof of Proposition \ref{prop:high_p}]
Denoting $\lambda_p=\nu_p^{-2}\sigma^2$ and applying Woodbury matrix identity we have
\begin{align*}
H
&=X(X^TX+\lambda_p\I_p)^{-1}X^T\,,\\
&=X(\lambda_p^{-1}\I_p-\lambda_p^{-1}X^T(\I_n+\lambda_p^{-1} XX^T)^{-1}\lambda_p^{-1}X )X^T\\
&=\lambda_p^{-1}XX^T
-\lambda_p^{-1}XX^T(\I_n+\lambda_p^{-1} XX^T)^{-1}\lambda_p^{-1}XX^T\,.
\end{align*}
Since $\lim_{p\to\infty}p\lambda_p^{-1}=\sigma^{-2}c$, by Kolmogorov’s criterion of SLLN and the random design assumption \eqref{eq:rand_des} on $X$, we have $p^{-1}XX^T\to \tau^2\I_n$ almost surely element-wise as $p\to\infty$, see e.g.\ \citet[Lemma 1]{fasano2019scalable} for a more detailed proof of the latter statement. 
It follows that $\lambda_p^{-1}XX^T\to \frac{c\tau^2}{\sigma^2}\I_n$ almost surely element-wise as $p\to\infty$ and $H$ converges in the same way to
\[ \frac{c\tau^2}{\sigma^2}\I_n-\frac{c\tau^2}{\sigma^2}\I_n(\I_n+\frac{c\tau^2}{\sigma^2}\I_n)^{-1}\frac{c\tau^2}{\sigma^2}\I_n = \frac{c\tau^2}{\sigma^2+c\tau^2}\I_n\,,\]
which implies the desired convergence of $H_{ii}$. 
The statement about $AV^{(post)}_i$ follows by combining the above result with Theorem \ref{thm:leverage}.
\end{proof}

\begin{proof}[Proof of Theorem \ref{thm:limsup_fin}]
The statement follows from part (a) of Theorem \ref{thm:limsup_fin_general}, since \eqref{modelcov} is a special case of \eqref{eq:glm_model}.
\end{proof}

\begin{proof}[Proof of Theorem \ref{thm:limsup_fin_general}]
First we prove part (a).
Using $\pi_i^{-1}=\left(\sum_{j=1}^np(y_j|y_{-j})^{-1}\right)p(y_i|y_{-i})$ we can re-write the upper bound in \eqref{eq:av_mix_bound_2} as
\begin{equation}\label{eq:av_mix_bound_3}
AV_i^{(mix)}\leq
\left(\sum_{j=1}^np(y_j|y_{-j})^{-1}\right)
\left(
p(y_i|y_{-i})+
\int p(y_i|\theta)p(\theta|y)d\theta
\right)\,.
\end{equation}
By the subadditivity and submultiplicativity of $\limsup$, and monotonicity of $t\mapsto t^{-1}$ on $(0,\infty)$, it follows 
\begin{equation}\label{eq:av_mix_limsup_bound}
\limsup_{p\to\infty}
AV_i^{(mix)}\leq
\left(\sum_{j=1}^n(\liminf_{p\to\infty}p(y_j|y_{-j}))^{-1}\right)
\left(
\limsup_{p\to\infty}p(y_i|y_{-i})+
\limsup_{p\to\infty}\int p(y_i|\theta)p(\theta|y)d\theta
\right)\,.
\end{equation}
We now prove that all terms on the right-hand side are finite.
We have $p(y_i|y_{-i})=p(y)/p(y_{-i})$ where, by \eqref{eq:glm_model}, 
\begin{align*}
p(y)
\;=\;&
\int \prod_{j=1}^n
g(y_j|\eta_j)
p(\eta)d\eta
\quad\hbox{and}\quad
p(y_{-i})
\;=\;
\int \prod_{j\neq i}
g(y_j|\eta_j)
p(\eta)d\eta
\end{align*}
where $p(\eta)=N(\eta;0,A_p)$ with $A_p=\nu_p^2XX^T$ is the prior distribution on $\eta=(\eta_1,\dots,\eta_n)$ induced by the prior on $(\theta_1,\dots,\theta_p)$ and the linear transformation $\eta=X\theta$.
As shown in the proof of Proposition \ref{prop:high_p}, we have $p^{-1}XX^T\to \tau^2 \I_n$ almost surely element-wise as $p\to\infty$, and thus also
$A_p=\nu_p^2XX^T=p\nu_p^2(p^{-1}XX^T)\to c\tau^2\I_n$, which implies that 
$p(\eta)\to N(\eta;0,c\tau^2\I_n)$ almost surely as $p\to\infty$, where the convergence is point-wise in $\eta\in\R^n$. 
Also, since $A_p\to c\tau^2\I_n$ as $p\to\infty$, we have that, almost surely for large enough $p$, $A_p$ is invertible, its determinant satisfies $|A_p|> (c\tau^2/2)^n$ and $(A_p^{-1}-(2c\tau^2)^{-1}\I_n)$ is positive definite.
These observations imply that, almost surely, for large enough $p$ allow we have
$$
p(\eta)
< (\pi c\tau^2)^{-n/2}\exp\left(-(4c\tau^2)^{-1}\|\eta\|^2
\right)\,,
$$
for every $\eta\in\R^n$.
Combining the above bound with the boundedness of the likelihood,  we can apply the dominated convergence theorem and deduce that
$$
p(y)\to\int_{\R^n} \prod_{j=1}^n
g(y_j|\eta_j)
N(\eta;0,c\tau^2\I_n)d\eta
=
\prod_{j=1}^n\int_{\R}
g(y_j|\eta_j)
N(\eta_j;0,c\tau^2)d\eta_j\in(0,\infty)
$$
almost surely as $p\to\infty$.
Applying the same argument to $p(y_{-i})$ we obtain
\begin{equation}\label{eq:py_finite}
p(y_i|y_{-i})
=
\frac{p(y)}{p(y_{-i})}
\to
\frac{\prod_{j=1}^n\int_{\R}
g(y_j|\eta_j)
N(\eta_j;0,c\tau^2)d\eta_j}{\prod_{j\neq i}\int_{\R}
g(y_j|\eta_j)
N(\eta_j;0,c\tau^2)d\eta_j}
=
\int_{\R}
g(y_i|\eta_i)
N(\eta_i;0,c\tau^2)d\eta_i\in(0,\infty)\,,
\end{equation}
meaning that 
$\limsup_{p\to\infty}p(y_i|y_{-i})<\infty$ and $(\liminf_{p\to\infty}p(y_i|y_{-i}))^{-1}<\infty$ almost surely.
By the same argument we also have $(\liminf_{p\to\infty}p(y_j|y_{-j}))^{-1}<\infty$ for $j=1,\dots,n$. 
Finally, by \eqref{eq:glm_model} and Bayes Theorem, we can write
$$
\int p(y_i|\theta)p(\theta|y)d\theta
=\frac{\int p(y_i|\theta)p(y|\theta)p(\theta)d\theta}{\int p(y|\theta)p(\theta)d\theta}
=
\frac{
\int_{\R^n}
g(y_i|\eta_i)^2 \prod_{j\neq i}g(y_j|\eta_j)p(\eta)d\eta
}{
\int_{\R^n} \prod_{j=1}^n g(y_j|\eta_j) p(\eta)d\eta
}
$$
and applying dominated convergence arguments analogous to above we obtain
$$
\int p(y_i|\theta)p(\theta|y)d\theta
\to
\frac{
\int_\R g(y_i|\eta_i)^2 N(\eta_i;0,c)d\eta_i
}{
\int_\R g(y_i|\eta_i) N(\eta_i;0,c)d\eta_i
}\in(0,\infty)\,,
$$
which implies that $\limsup_{p\to\infty}\int p(y_i|\theta)p(\theta|y)d\theta$.
Combining the above bounds with \eqref{eq:av_mix_limsup_bound} we deduce $\limsup_{p\to\infty}
AV_i^{(mix)}<\infty$.

Consider now part (b) and assume $\int_{\R}g(y_i|\eta_i)^{-1}\exp(-\delta\eta_i^2)d\eta_i<\infty$ for some  $\delta<(2c\tau^2)^{-1}$.
By \eqref{varpost}
\begin{equation*}
 AV^{(post)}_i+1
 =\int\left(\frac{p(\theta|y_{-i})}{p(\theta|y)}\right)^2p(\theta|y)d\theta
 =\frac{p(y_i|y_{-i})}{p(y_{-i})}\int\frac{p(y_{-i}|\theta)^2}{p(y|\theta)}p(\theta)d\theta\,.
\end{equation*}
By \eqref{eq:py_finite} we have $\lim_{p\to\infty}\frac{p(y_i|y_{-i})}{p(y_{-i})}=a$ for some $a\in(0,\infty)$ . Thus
\begin{equation*}
\limsup_{p\to\infty} AV^{(post)}_i+1=
a \limsup_{p\to\infty}\int\frac{p(y_{-i}|\theta)^2}{p(y|\theta)}p(\theta)d\theta\,.
\end{equation*}
By \eqref{eq:glm_model}
$$
\int\frac{p(y_{-i}|\theta)^2}{p(y|\theta)}p(\theta)d\theta
=
\int_{\R^n}
\frac{\prod_{j\neq i}g(y_j|\eta_j)}{g(y_i|\eta_i)}p(\eta)d\eta
\leq
\left(\prod_{j\neq i}\sup_{\eta_j}g(y_j|\eta_j)\right)
\int_{\R}
g(y_i|\eta_i)^{-1}p(\eta_i)d\eta_i
$$
with $p(\eta)=N(\eta;0,A_p)$ as above and $p(\eta_i)=N(\eta_i;0,a^{(i)}_p)$ where $a^{(i)}_p$ is the $i$-th diagonal term of $A_p$.
By $A_p\to c\tau^2\I_n$ almost surely,  we have $a^{(i)}_p\to c\tau^2$ and thus $(2a^{(i)}_p)^{-1}>\delta$ eventually as $p\to\infty$ since $\delta<(2c\tau^2)^{-1}$.
It follows
\begin{align*}
\limsup_{p\to\infty}
\int_{\R}g(y_i|\eta_i)^{-1}p(\eta_i)d\eta_i
=&
(2\pi c\tau^2)^{-1/2}
\limsup_{p\to\infty}
\int_{\R}g(y_i|\eta_i)^{-1}\exp(-(2a^{(i)}_p)^{-1}\eta_i^2)d\eta_i
\\\leq&
(2\pi c\tau^2)^{-1/2}
\limsup_{p\to\infty}
\int_{\R}g(y_i|\eta_i)^{-1}\exp(-\delta\eta_i^2)d\eta_i<\infty
\,.
\end{align*}
Combining the above inequalities we obtain $\limsup_{p\to\infty} AV^{(post)}_i<\infty$ as desired.

Finally, consider part (b) and assume $\int_{\R}g(y_i|\eta_i)^{-1}\exp(-\delta\eta_i^2)d\eta_i=\infty$ for some $\delta>(2c\tau^2)^{-1}$.
In this case, using that $A_p\to c\tau^2\I_n$ as $p\to\infty$ we have 
\begin{align*}
\limsup_{p\to\infty}
\frac{\prod_{j\neq i}g(y_j|\eta_j)}{g(y_i|\eta_i)}p(\eta)d\eta
=&
(2\pi c\tau^2)^{-n/2}
\limsup_{p\to\infty}
\int_{\R^n}
\frac{\prod_{j\neq i}g(y_j|\eta_j)}{g(y_i|\eta_i)}\exp(-\eta^TA_p\eta)d\eta
\\\geq&
(2\pi c\tau^2)^{-n/2}
\int_{\R^n}
\frac{\prod_{j\neq i}g(y_j|\eta_j)}{g(y_i|\eta_i)}\exp(-\delta\|\eta\|^2)d\eta
\\=
(2\pi c\tau^2)^{-n/2}&
\left(\prod_{j\neq i}
\int_{\R}g(y_j|\eta_j)\exp(-\delta\eta_j^2)d\eta_j\right)
\int_{\R}
g(y_i|\eta_i)^{-1}\exp(-\delta\eta_i^2)d\eta_i
=\infty
\,,
\end{align*}
where we used the fact that $\delta\I_n-A_p$ is eventually positive definite as $p\to\infty$ since $\delta>(2c\tau^2)^{-1}$.
\end{proof}

\begin{proof}[Proof of Theorem \ref{thm:av_loo_general}]
\emph{Part (a).}
We first show that $AV_i^{(loo)}$ diverges as $p\to\infty$. 
A derivation analogous to \eqref{avmix}, with the importance distribution $q_{mix}^{(\balpha)}(\theta)$ replaced by the LOO posterior $p(\theta|y_{-i})$ and some simple algebraic simplifications, leads to 
$$ AV_i^{(loo)} = \int \frac{p(\theta|y)^2}{p(\theta|y_{-i})}d\theta - 1 
=\frac{p(y_{-i})}{p(y)^2}\int p(y_i|\theta)p(y|\theta)p(\theta)d\theta - 1\,,
$$
where we also used the conditional independence assumption $p(y|\theta)=\prod_{i=1}^np(y_i|\theta)$.
Combining the above with \eqref{eq:glm_model} we have
\begin{equation}\label{eq:av_loo_expr}
AV_i^{(loo)} +1
=
\frac{\left(\int\prod_{j\neq i}g(y_j|\eta_j)p(\eta_{-i})d\eta_{-i}\right)\left(\int h_i(\eta)p(\eta)d\eta\right)}
{\left(\int \prod_{j=1}^ng(y_j|\eta_j)p(\eta)d\eta\right)^{2}}\,,
\end{equation}
where $p(\eta_{-i})$ and $p(\eta)$ denote the prior distributions of $\eta_{-i}$ and $\eta$ under \eqref{eq:glm_model}, and $h_i(\eta)=g(y_i|\eta_i)\prod_{j=1}^ng(y_j|\eta_j)$.
Since $p(\eta)=N(\eta;0,A_p)$ with $A_p=\nu_p^2XX^T$ and $p(\eta_{-i})=N(\eta_{-i};0,A_p^{(i)})$ with $A_p^{(i)}=\nu_p^2X_{-i}X_{-i}^T$, we can rewrite $AV_i^{(loo)} +1$ as
\begin{equation}\label{eq:terms_p}
 \sqrt{2\pi p\nu_p^2 \frac{|\frac{1}{p}XX^T|}{|\frac{1}{p}X_{-i}X_{-i}^T|}}
\frac{\left(\int\prod_{j\neq i}g(y_j|\eta_j)K_{-i}(\eta_{-i})d\eta_{-i}\right)\left(\int h_i(\eta)K(\eta)d\eta\right)}
{\left(\int \prod_{j=1}^ng(y_j|\eta_j)K(\eta)d\eta\right)^{2}},
\end{equation}
where $K(\eta)=\exp\left(-\eta^T(2\nu_p^2XX^T)^{-1}\eta\right)$ and $K_{-i}(\eta_{-i})=\exp\left(-\eta_{-i}^T(2\nu_p^2X_{-i}X_{-i}^T)^{-1}\eta_{-i}\right)$. 
We now analyze the limiting behaviour of each term in \eqref{eq:terms_p}.
First we have $\lim_{p\to\infty}  {\frac{|\frac{1}{p}XX^T|}{|\frac{1}{p}X_{-i}X_{-i}^T|}}=\tau^2$ since $p^{-1}XX^T\to\tau^2\I_n$ and $p^{-1}X_{-i}X_{-i}^T\to\tau^2\I_{n-1}$ almost surely, as shown in the proof of Proposition \ref{prop:high_p}, and the determinant is a continuous function.
Note that the latter convergences also imply that $XX^T$ and $X_{-i}X_{-i}^T$ are almost surely eventually invertible as $p\to\infty$ so that $K$ and $K_{-i}$ are well defined.
Then $K(\eta)\leq 1$ implies
\begin{align*}
&\int \prod_{j=1}^ng(y_j|\eta_j)K(\eta)d\eta\leq I_y<\infty \,,
\end{align*}
where $I_y = \int \prod_{j=1}^ng(y_j|\eta_j)d\eta=\prod_{j=1}^n \int g(y_j|\eta_j)d\eta_j$ is a positive and finite constant by the  assumptions in part (a).
Also, $K(\eta)=\exp\left(-\frac{1}{2p\nu_p^2}\eta^T(p^{-1}XX^T)^{-1}\eta\right)$ combined with $p^{-1}XX^T\to\tau^2\I_n$ and $p\nu_p^2\to\infty $ implies that  $K(\eta)\to \exp(0)=1$ for every $\eta\in\R^n$ almost surely as $p\to\infty$, and similarly also $K_{-i}(\eta_{-i})\to \exp(0)=1$ for every $\eta_{-i}\in\R^{n-1}$. 
It follows by Fatou's lemma that
\begin{align*}
&\liminf_{p\to\infty}\int\prod_{j\neq i}g(y_j|\eta_j)K_{-i}(\eta_{-i})d\eta\geq I_{y_{-i}} \quad\hbox{ and }\quad
\liminf_{p\to\infty}\int h_i(\eta)K(\eta)d\eta_{-i}\geq I_{\tilde{y}}\,,
\end{align*}
where $I_{\tilde y} = \int h_i(\eta)d\eta$ and $I_{y_{-i}}=\prod_{j\neq i}\int g(y_j|\eta_j)d\eta_j$ are positive and finite constants by the  assumptions in part (a). 
Combining the above results with \eqref{eq:terms_p}, the submultiplicativity of the $\liminf$ and $p\nu_p^2\to\infty $, we get
$$
\liminf_{p\to\infty}AV_i^{(loo)} +1 \geq 
\sqrt{2\pi\tau^2}
\frac{I_{y_{-i}} I_{\tilde y}}{I_y^{2}}
\liminf_{p\to\infty}\sqrt{p\nu_p^2}=\infty\,, 
$$
as desired.

We now prove that also $AV_i^{(mix)}$ diverges as $p\to\infty$ under the assumptions of part (a). 
By \eqref{avmix} and $\frac{p(\theta|y_{-i})}{q_{mix}(\theta)}\leq \pi_i^{-1}$ we have
$$
AV_i^{(mix)}
\geq
\int\frac{p(\theta|y)^2}{q_{mix}(\theta)}d\theta
-2\int\frac{p(\theta|y_{-i})}{q_{mix}(\theta)}p(\theta|y)d\theta
\geq
\int\frac{p(\theta|y)^2}{q_{mix}(\theta)}d\theta
-2\pi_i^{-1}\,,
$$
which implies
$$
\liminf_{p\to\infty}AV_i^{(mix)}
\geq
\liminf_{p\to\infty}
\int\frac{p(\theta|y)^2}{q_{mix}(\theta)}d\theta
-\frac{2}{\liminf_{p\to\infty}\pi_i}\,.
$$
We now prove that $\int\frac{p(\theta|y)^2}{q_{mix}(\theta)}d\theta$ diverges with $p$ and that $\liminf_{p\to\infty}\pi_i>0$ for every $i$, thus deducing $\lim_{p\to\infty}AV_i^{(mix)}=\infty$ from the inequality above. 
First, by \eqref{eq:glm_model} we have
\begin{align*}
&\int\frac{p(\theta|y)^2}{q_{mix}(\theta)}d\theta
=\frac{\sum_{j=1}^np(y_{-j})}{p(y)^2}\int\frac{\prod_{i=1}^ng(y_i|\eta_i)^2}{\sum_{k=1}^n\prod_{i\neq k}g(y_i|\eta_i)}p(\eta)d\eta=\sum_{j=1}^n\frac{p(y_{-j})\int h(\eta)p(\eta)d\eta}{p(y)^2}
\end{align*}
with $h(\eta)=(\sum_{k=1}^ng(y_k|\eta_k)^{-1})^{-1}\prod_{i=1}^ng(y_i|\eta_i)$. 
Then, using $p(y_{-j})=\int\prod_{i\neq j}g(y_i|\eta_i)p(\eta_{-j})d\eta_{-j}$ with $p(\eta_{-j})=(2\pi \nu_p^2|X_{-j}X_{-j}^T|)^{-(n-1)/2}K_{-j}(\eta_{-j})$ as defined above, we have
\begin{align*}
&\int\frac{p(\theta|y)^2}{q_{mix}(\theta)}d\theta
=\sum_{j=1}^n
\sqrt{2\pi p\nu_p^2 \frac{|\frac{1}{p}XX^T|}{|\frac{1}{p}X_{-j}X_{-j}^T|}}
\frac{\left(\int \prod_{k\neq j}g(y_k|\eta_k)K_{-j}(\eta_{-j})d\eta_{-j}\right)\left(\int h(\eta)K(\eta)d\eta\right)}{\left(\int \prod_{j=1}^ng(y_j|\eta_j)K(\eta)d\eta\right)^2}.
\end{align*}
Proceeding as done above for $AV_i^{(loo)} +1 $, exploiting the almost sure point-wise convergences $K(\eta)\to1$ and $K_{-j}(\eta_{-j})\to1$, one can derive
\begin{align*}
&\liminf_{p\to\infty}\int\frac{p(\theta|y)^2}{q_{mix}(\theta)}d\theta
\geq 
\sum_{j=1}^n
\sqrt{2\pi\tau^2}
\frac{I_{y_{-i}}I_{mix}}{I_y^{2}}
\liminf_{p\to\infty}\sqrt{p\nu_p^2}=\infty\,, 
\end{align*}
where $I_{mix}=\int h(\eta)p(\eta)d\eta$ is a positive constant.

We now prove that $\liminf_{p\to\infty}\pi_i>0$ for every $i$. 
From $\pi_i = \frac{p(y_{-i})}{\sum_{j=1}^np(y_{-i})}= (1+\sum_{j\neq i}\frac{p(y_{-j})}{p(y_{-i})})^{-1}$ it follows that
$$
\liminf_{p\to\infty}\pi_i = \left(1+\limsup_{p\to\infty}\sum_{j\neq i}\frac{p(y_{-j})}{p(y_{-i})}\right)^{-1}
\geq
\left(1+\sum_{j\neq i}\limsup_{p\to\infty}\frac{p(y_{-j})}{p(y_{-i})}\right)^{-1}\,.
$$
Then we write  for every $j\neq i$
$$ \frac{p(y_{-j})}{p(y_{-i})} = \left(\frac{|p^{-1}X_{-i}X_{-i}^T|}{|p^{-1}X_{-j}X_{-j}^T|}\right)^{1/2} \frac{\int \prod_{k\neq j}g(y_k|\eta_k)K_{-j}(\eta_{-j})d\eta_{-j}}{\int \prod _{k\neq i}g(y_k|\eta_k)K_{-i}(\eta_{-i})d\eta_{-i}},$$
which, using $p^{-1}X_{-i}X_{-i}^T\to\tau^2\I_{n-1}$, $p^{-1}X_{-j}X_{-j}^T\to\tau^2\I_{n-1}$, $K_{-j}(\eta_{-j})\leq 1$ and $K_{-i}(\eta_{-i})\to 1$, similarly to before, implies that
$\limsup_{p\to\infty} \frac{p(y_{-j})}{p(y_{-i})} 
\leq \frac{I_{y_{-j}}}{I_{y_{-i}}}<\infty$.

\emph{Part (b).}
We start by proving $\limsup_{p\to\infty}AV^{(loo)}_i<\infty$. 
By \eqref{eq:av_loo_expr} we can deduce
\begin{equation}\label{eq:av_loo_Bn}
\limsup_{p\to\infty}AV_i^{(loo)} +1
\leq
B^{2}
\left(\liminf_{p\to\infty}
\int \prod_{j=1}^ng(y_j|\eta_j)p(\eta)d\eta\right)^{-2}
\end{equation}
where $B=\sup_{\eta\in\R^n}\prod_{i=1}^np(y_i|\eta_i)$ is a finite constant by the assumption of upper bounded likelihood. 
Since $p(\eta)=N(\eta;0, \nu_p^2XX^T)\to 0$ almost surely for every $\eta\in\R^n$ as $p\to\infty$, it is convenient to define the change of variables $\gamma = (p\nu_p^2)^{-1/2} \eta$ and re-write the integral above as
\begin{equation}\label{eq:change_of_var}
\int_{\R^n} \prod_{j=1}^ng(y_j|\eta_j)N(\eta;0, \nu_p^2XX^T)d\eta
=
\int_{\R^n}\prod_{j=1}^ng\left(y_j|\sqrt{p\nu_p^2}\gamma_j\right)N(\gamma;0, p^{-1}XX^T)d\gamma.
\end{equation}
Defining $a_i = \lim_{\eta_i\to -\infty}g(y_i|\eta_i)$ and $b_i= \lim_{\eta_i\to\infty}g(y_i|\eta_i)$, we have
$\lim_{p\to\infty}g\left(y_j|\sqrt{p\nu_p^2}\gamma_j\right)=  (a_i (1-\sign(\gamma_i))+b_i\sign(\gamma_i))$
for every $i$ and every $\gamma_i\neq 0\in\R$
and
$\lim_{p\to\infty}N(\gamma;0, p^{-1}XX^T)= N(\gamma;0, \tau^2\I_n)$
for
every $\gamma\in\R^n$
almost surely as $p\to\infty$. Thus, by Fatou's lemma we have
\begin{align}
&\liminf_{p\to\infty} \int_{\R^n}\prod_{j=1}^ng\left(y_j|\sqrt{p\nu_p^2}\gamma_j\right)N(\gamma;0, p^{-1}XX^T)d\gamma \geq\nonumber\\
&\int_{\R^n}
\prod_{j=1}^n (a_i (1-\sign(\gamma_i))+b_i\sign(\gamma_i))
N(\gamma;0, \tau^2\I_n)d\gamma=\prod_{j=1}^n \left( \frac{a_i}{2} + \frac{b_i}{2} \right)  > 0\,.\label{eq:pos_prod_aibi}
\end{align}
The latter product is a positive constant by the assumption $a_i+b_i>0$ for any $i$. 
Combining \eqref{eq:pos_prod_aibi} and \eqref{eq:av_loo_Bn} we obtain $\limsup_{p\to\infty}AV^{(loo)}_i<\infty$ as desired.

We now prove $\limsup_{p\to\infty} AV^{(mix)}_i <\infty$.
Equation \eqref{eq:pos_prod_aibi} states that $\liminf_{p\to\infty} p(y) > 0 $. An analogous derivation can be used to prove that $\liminf_{p\to\infty} p(y_{-j}) > 0 $ for every $j=1,\dots,n$.
Combining the latter with $\limsup_{p\to\infty} p(y_{-j}) \leq B_{-j}<\infty$ for every $j=1,\dots,n$, with $B_{-j}=\sup_{\eta\in\R^n}\prod_{i\neq j}p(y_i|\eta_i)<\infty$, we obtain that $\liminf_{p\to\infty}\pi_i \geq \frac{\liminf_{p\to\infty}p(y_{-i})}{\sum_{j=1}^n\limsup_{p\to\infty}p(y_{-i})}>0$.
One can then deduce
$$\limsup_{p\to\infty} AV^{(mix)}_i < (\limsup_{p\to\infty}\pi_i^{-1} )( \limsup_{p\to\infty} AV^{(loo)}_i)<\infty$$
as desired.

To conclude, we prove $\lim_{p\to\infty} AV^{(post)}_i =\infty$.
By \eqref{eq:glm_model} and \eqref{varpost}
$$ AV_i^{(post)}+1 =  \frac{p(y)}{p(y_{-i})^2} \int_{\R^n} \frac{\prod_{k\neq i}g(y_k|\eta_k)}{g(y_i|\eta_i)} p(\eta)d\eta.$$
Thus
\begin{equation*}
 \liminf_{p\to\infty}AV_i^{(post)}+1 
\geq \frac{\liminf_{p\to\infty}
p(y)}{B_{-i}^2}
\liminf_{p\to\infty}\int_{\R^n} \frac{\prod_{k\neq i}g(y_k|\eta_k)}{g(y_i|\eta_i)} p(\eta)d\eta,
\end{equation*}
where $B_{-i}<\infty$ and $\liminf_{p\to\infty}p(y)>0$ as shown above.
Using the same change of variable of \eqref{eq:change_of_var} and proceeding as in \eqref{eq:pos_prod_aibi} we obtain
\begin{align*}
\liminf_{p\to\infty}
\int_{\R^n} \frac{\prod_{k\neq i}g(y_k|\eta_k)}{g(y_i|\eta_i)} p(\eta)d\eta
\geq 
\left(\frac{1}{2a_i} + \frac{1}{2b_i}\right)\prod_{j\neq i} \left(\frac{a_j}{2}+\frac{b_j}{2}\right)=\infty
\end{align*}
where the latter equality follows from the assumptions that $a_ib_i=0$ and $(a_i+b_i)\in(0,\infty)$.
It follows that $\liminf_{p\to\infty}AV_i^{(post)}=\infty$ almost surely, and thus also $\lim_{p\to\infty}AV_i^{(post)}=\infty$ almost surely as desired.
\end{proof}


\bibliographystyle{apalike}
\bibliography{loo_cv}

\end{document}